\declaretheorem{theorem}
\theoremstyle{plain}
\newtheorem{lemma}[theorem]{Lemma}
\newtheorem{fact}[theorem]{Fact}
\newtheorem{corollary}[theorem]{Corollary}
\newtheorem{remark}[theorem]{Remark}
\theoremstyle{definition}
\newtheorem{definition}[theorem]{Definition}
\newcommand{\OPT}{\mathrm{OPT}}
\newcommand{\TSP}{\mathrm{TSP}}
\newcommand{\opt}{\mathrm{opt}}
\newcommand{\spine}{\mathrm{spine}}
\newcommand{\dist}{\mathrm{dist}}
\newcommand{\cost}{\mathrm{cost}}
\newcommand{\DP}{\mathrm{DP}}
\newcommand{\eps}{\epsilon}
\newcommand{\demand}{\mathrm{demand}}
\title{A Tight $(1.5+\epsilon)$-Approximation for \\Unsplittable Capacitated Vehicle Routing on Trees}
\date{}
\author{Claire Mathieu\footnote{CNRS, Paris, France, e-mail: \texttt{claire.mathieu@irif.fr}.}  \and Hang Zhou\footnote{Ecole Polytechnique, Institut Polytechnique de Paris, France, e-mail: \texttt{hzhou@lix.polytechnique.fr}.}}
\begin{document}

\maketitle

\begin{abstract}
In the unsplittable capacitated vehicle routing problem (UCVRP) on trees, we are given a rooted tree with edge weights and a subset of vertices of the tree called terminals. Each terminal is associated with a positive demand between 0 and 1. The goal is to find a minimum length collection of tours starting and ending at the root of the tree such that the demand of each terminal is covered by a single tour (i.e., the demand cannot be split), and the total demand of the terminals in each tour does not exceed the capacity of 1.

For the special case when all terminals have equal demands, a long line of research culminated in a quasi-polynomial time approximation scheme [Jayaprakash and Salavatipour, SODA 2022] and a polynomial time approximation scheme [Mathieu and Zhou, ICALP 2022].

In this work, we study the general case when the terminals have arbitrary demands. Our main contribution is a polynomial time $(1.5+\epsilon)$-approximation algorithm for the UCVRP on trees. This is the first improvement upon the 2-approximation algorithm more than 30 years ago [Labbé, Laporte, and Mercure, Operations Research, 1991]. Our approximation ratio is essentially best possible, since it is NP-hard to approximate the UCVRP on trees to better than a 1.5 factor.

\end{abstract}

\thispagestyle{empty}
\setcounter{page}{0}

\newpage
\pagenumbering{arabic}

\setcounter{page}{1}

\section{Introduction}
In the \emph{unsplittable capacitated vehicle routing problem (UCVRP)} on \emph{trees}, we are given a rooted tree with edge weights and a subset of vertices of the tree called \emph{terminals}.
Each terminal is associated with a positive \emph{demand} between 0 and 1.
The root of the tree is called the \emph{depot}.
The goal is to find a minimum length collection of tours starting and ending at the depot such that the demand of each terminal is covered by a \emph{single} tour (i.e., the demand cannot be split), and the total demand of the terminals in each tour does not exceed the \emph{capacity} of 1.

The UCVRP on trees has been well studied in the special setting when all terminals have \emph{equal} demands:\footnote{Up to scaling, the equal demand setting is equivalent to the \emph{unit demand} version of the capacitated vehicle routing problem in which each terminal has unit demand, and the capacity of each tour is a positive integer $k$.}
Hamaguchi and Katoh~\cite{hamaguchi1998capacitated} gave a polynomial time 1.5-approximation; the approximation ratio was improved to $1.35078$ by Asano, Katoh, and Kawashima~\cite{asano2001new} and was further reduced to $4/3$ by Becker~\cite{becker2018tight}; Becker and Paul~\cite{becker2019framework} gave a bicriteria polynomial time approximation scheme; and very recently, Jayaprakash and Salavatipour~\cite{jayaprakash2021approximation} gave a quasi-polynomial time approximation scheme, based on which Mathieu and Zhou~\cite{MZ22} designed a polynomial time approximation scheme.

In this work, we study the UCVRP on trees in the general setting when the terminals have \emph{arbitrary} demands.
Our main contribution is a polynomial time $(1.5+\eps)$-approximation algorithm (\cref{thm:main}).
This is the first improvement upon the 2-approximation algorithm of Labbé, Laporte, and Mercure~\cite{labbe1991capacitated} more than 30 years ago.
Our approximation ratio is essentially best possible, since it is NP-hard to approximate the UCVRP on trees to better than a 1.5 factor~\cite{golden1981capacitated}.

\begin{theorem}
\label{thm:main}
For any $\eps>0$, there is a polynomial time $(1.5+\eps)$-approximation algorithm for the unsplittable capacitated vehicle routing problem on trees.
\end{theorem}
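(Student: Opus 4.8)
The plan is to pair the $(1+\eps)$-loss structural machinery developed for vehicle routing on trees with a dynamic program, and to show that the one genuinely unavoidable obstruction — a bin-packing subproblem — costs exactly a factor $3/2$. First I would run the standard preprocessing \cite{MZ22, jayaprakash2021approximation}: rescale and round edge lengths to powers of $1+\eps$, contract light edges, and decompose the tree recursively along \emph{spines} — each step losing only a $(1+\eps)$ factor — so that the instance acquires the bounded structural complexity needed for the dynamic program below. In parallel I would split the terminals into \emph{large} ones, of demand above a threshold $\theta=\theta(\eps)$, and \emph{small} ones, and round every large demand up to one of $O_\eps(1)$ values, again at a $(1+\eps)$ factor. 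Hence it is enough to $(1.5+\eps)$-approximate this structured instance.

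The heart of the argument is a bin-packing lower bound. For an edge $e$, let $S_e$ be the multiset of demands of terminals in the subtree below $e$, and let $\beta_e$ be an efficiently computable lower bound on the minimum number of unit bins into which $S_e$ packs — for instance $\beta_e=\max\{\lceil\sum S_e\rceil,\ |\{d\in S_e:d>1/2\}|\}$, strengthened to also count items slightly above $1/3$, $1/4$, and so on. Every feasible solution crosses $e$ at least $2\beta_e$ times, since each tour crossing $e$ serves a sub-multiset of $S_e$ of total demand at most $1$; hence $\OPT\ge 2\sum_e w_e\,\beta_e$. The matching combinatorial fact is that $S_e$ always packs into at most $\tfrac32\beta_e+O(1)$ bins; this is where the constant $3/2$ is born, and it is forced by inputs consisting of many terminals of demand slightly above $1/3$ — morally the same configuration behind the $1.5$-hardness \cite{golden1981capacitated}.

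The algorithm is a bottom-up dynamic program. At the edge above the current vertex it maintains the multiset of \emph{partial tours} crossing that edge, each recorded by the (rounded) large demands it already carries together with a rounded estimate of the total small demand it carries; there are $O_\eps(1)$ such \emph{types}, so the state — the vector of type-counts — has polynomial size, and transitions merge partial tours coming from the children with the demands present at the vertex, open new tours, or close tours back to the depot along the spine, optimizing total edge length. To certify that the DP returns a $(1.5+\eps)$-approximate solution I would prove a structure theorem: any optimal solution can be converted, at multiplicative cost $1.5+\eps$, into one whose partial tours across every cut realize only the $O_\eps(1)$ canonical types — obtained by repacking the large demands cut by cut using the $\tfrac32\beta_e$-bin packing above and rerouting, with the extra length charged to $2\sum_e w_e\beta_e\le\OPT$. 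Finally, once the DP has fixed how much small demand each tour should carry, the small terminals themselves are assigned greedily; since each is at most $\theta$, the rounding slack amounts to $O(\theta)$ extra tours per subtree (a $(1+\eps)$ factor) and the detour to collect each small terminal is charged to the already-paid traversal of its subtree.

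The step I expect to be the main obstacle is the structure theorem. An arbitrary optimal solution may route partial tours whose compositions vary across the tree's many cuts in a globally correlated way, and naively re-canonicalizing at every level would compound the loss well beyond $3/2$. Keeping the loss down to a single factor of $3/2$ requires the spine decomposition to be coarse enough that repackings at different cuts interact only $O_\eps(1)$ times, together with a careful charging of the reroute cost of a repacked tour against the bin-packing bound $\beta_e$ rather than against the (possibly far larger) length of the tour from which it came. Dovetailing this with the small-demand step — ensuring that the small filler can always be redistributed among tours after the large demands have been repacked, without ever overflowing capacity — is the other point that will need care.
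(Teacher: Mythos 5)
Your high-level storyline is right — preprocessing, a structure theorem losing a $1.5+\eps$ factor, then a DP over types — and your intuition that the $1.5$ is a bin-packing phenomenon matches the hardness reduction. But the proposal hinges on two claims you have not established, and neither is what the paper actually does; the second one is the whole difficulty.

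First, the combinatorial claim that $S_e$ packs into at most $\tfrac32\beta_e+O(1)$ bins for a \emph{polynomial-time computable} lower bound $\beta_e$: you do not specify $\beta_e$ beyond a sketch, and the claim is non-trivial (the crude $\beta_e=\max\{\lceil\sum S_e\rceil,|\{d>1/2\}|\}$ does not obviously certify it across all mixes of item sizes in $(1/3,1/2]$). More fundamentally, the inequality $\OPT\ge 2\sum_e w_e\beta_e$ is a lower bound on cost, but a per-edge packing into $\tfrac32\beta_e$ bins does not assemble into tours: the packings you compute at different edges of the tree must agree on which demands ride together, and with arbitrary unsplittable demands there is no guarantee that the independently chosen per-cut packings are simultaneously realizable. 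You name exactly this ``global correlation across cuts'' as the main obstacle and leave it open — but that is precisely the content of the theorem you are being asked to prove, not a loose end to be tidied.

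The paper sidesteps the per-cut bin-packing route entirely. Its Local Theorem (Theorem 6) starts from an actual near-optimal set of subtours $S_c$ inside a component, and explicitly rewires it: small terminals are regrouped cell by cell (after a three-level component $\to$ block $\to$ cluster $\to$ cell decomposition), a spine subtour is added inside each ``threshold cell'' to preserve connectivity, overfull subtours are trimmed by removing whole cell-pieces, and a single extra subtour collects the removed pieces. The key lemma (Lemma 11) shows every removed piece is already reachable from the component root along edges that lie on \emph{at least two} subtours of $S_c$; since such an edge contributes at least four directed copies to $\cost(S_c)$, the reconnecting subtour costs at most $\approx\tfrac12\cost(S_c)$, giving $1.5+O(\eps)$. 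That charging argument, and the threshold-cell device that makes it possible, is what replaces your unproven $\tfrac32\beta_e$ packing bound and your unproven cross-cut consistency step. Without an analogue of that mechanism, your plan does not close.
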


The UCVRP on trees generalizes the UCVRP on \emph{paths}.
The latter problem has been studied extensively due to its applications in scheduling, see \cref{sec:related}.
Previously, the best approximation ratio for the UCVRP on paths was 1.6 due to Wu and Lu~\cite{wu2020capacitated}.
As an immediate corollary of \cref{thm:main}, we obtain a polynomial time $(1.5+\eps)$-approximation algorithm for the UCVRP on paths.
This ratio is essentially best possible, since it is NP-hard to approximate the UCVRP on paths to better than a 1.5 factor (\cref{sec:hardness-path}).

\subsection{Related Work}
\label{sec:related}


\paragraph{UCVRP on paths.}
The UCVRP on paths is equivalent to the scheduling problem of minimizing the makespan on a single batch processing machine  with non-identical job sizes~\cite{uzsoy1994scheduling}.
Many heuristics have been proposed and evaluated empirically, e.g., \cite{uzsoy1994scheduling,dupont2002minimizing,melouk2004minimizing,damodaran2006minimizing,kashan2006effective,parsa2010branch,chen2011scheduling,al2015constrained,muter2020exact}.

The UCVRP on paths has also been studied in special cases.
For example, in the special case when the optimal value is at least $\Omega(1/\eps^6)$ times the maximum distance between any terminal and the depot, asymptotic polynomial time
approximation schemes are known~\cite{das2010train,rothvoss2012entropy,chen2013train}.\footnote{The UCVRP on paths is called the \emph{train delivery problem} in~\cite{das2010train,rothvoss2012entropy,chen2013train}.}
In contrast, the algorithm in \cref{thm:main} applies to any path instance (and more generally any tree instance).

\paragraph{UCVRP on general metrics.}
The first constant-factor approximation algorithm for the UCVRP on general metrics is due to Altinkemer and  Gavish~\cite{altinkemer1987heuristics}.
The approximation ratio was only recently improved in work by Blauth, Traub, and Vygen~\cite{blauth2021improving}, and then further  by Friggstad, Mousavi, Rahgoshay, and Salavatipour~\cite{friggstad2021improved}, so that the best-to-date  approximation ratio  stands at roughly $3.194$~\cite{friggstad2021improved}.

\section{Overview of Techniques}\label{sec:overview}


At a high level, our approach is to modify the problem and add enough structural constraints so that the structured  problem contains a $(1.5+O(\epsilon))$-approximate solution and can be solved in polynomial time by dynamic programming.

\subsection{Preprocessing}
We start by some preprocessing as in~\cite{MZ22}.
We assume without loss of generality that every vertex in the tree has two children, and the terminals are the leaf vertices of the tree~\cite{MZ22}.
Furthermore, we assume that the tree has bounded distances (\cref{sec:bounded-dist}).
Next, we decompose the tree into \emph{components} (\cref{fig:tree} and \cref{sec:components}).

\begin{figure}[h]
    \centering
    \includegraphics[scale=0.37]{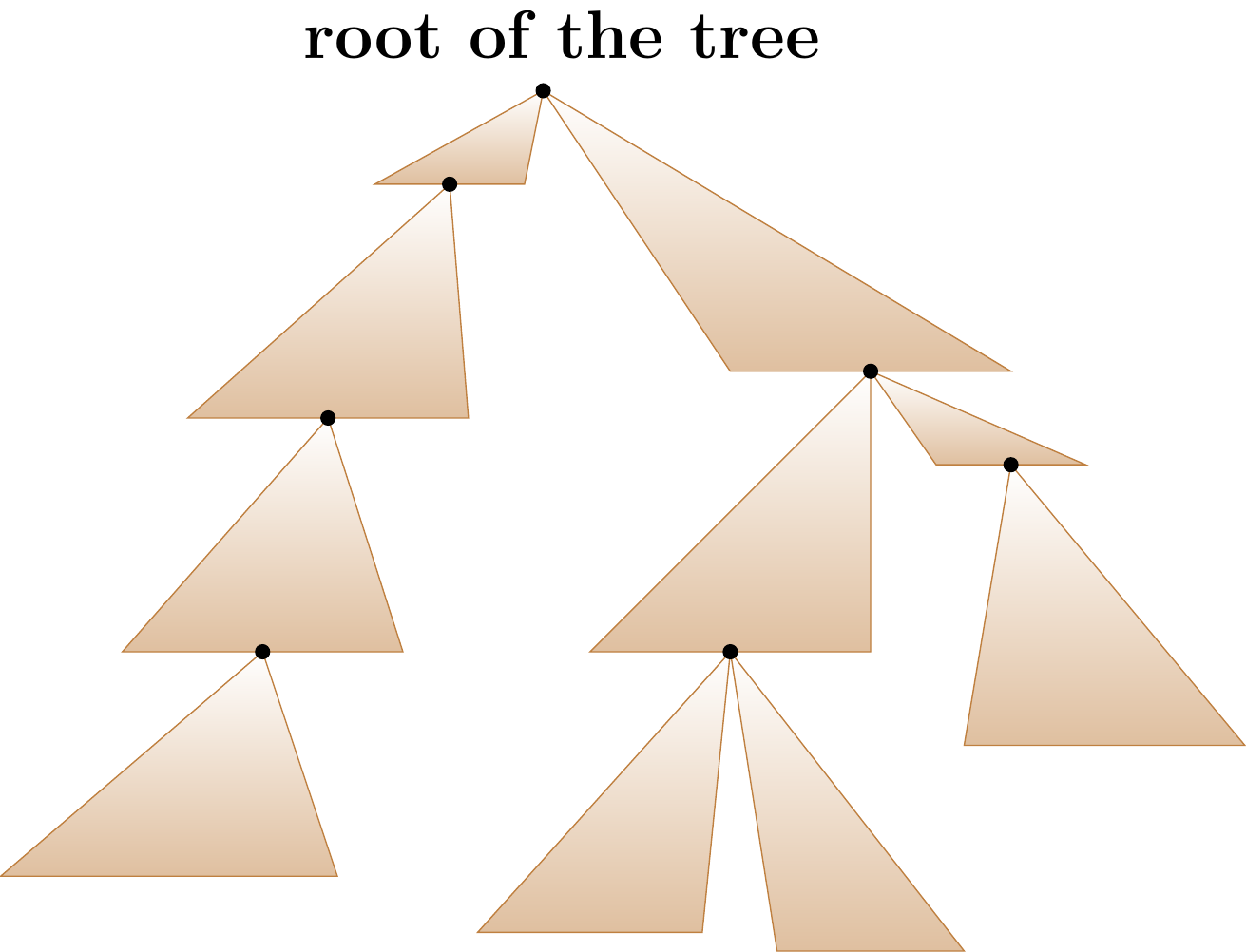}
    \caption{\small Decomposition of the tree into \emph{components}. Figure extracted from \cite{MZ22}. Each brown triangle represents a \emph{component}. Each component has a \emph{root} vertex and at most one \emph{exit} vertex. }
    \label{fig:tree}
\end{figure}

\begin{figure}[t]
\centering
    \begin{subfigure}{1\textwidth}
    \centering
    \includegraphics[scale=0.37]{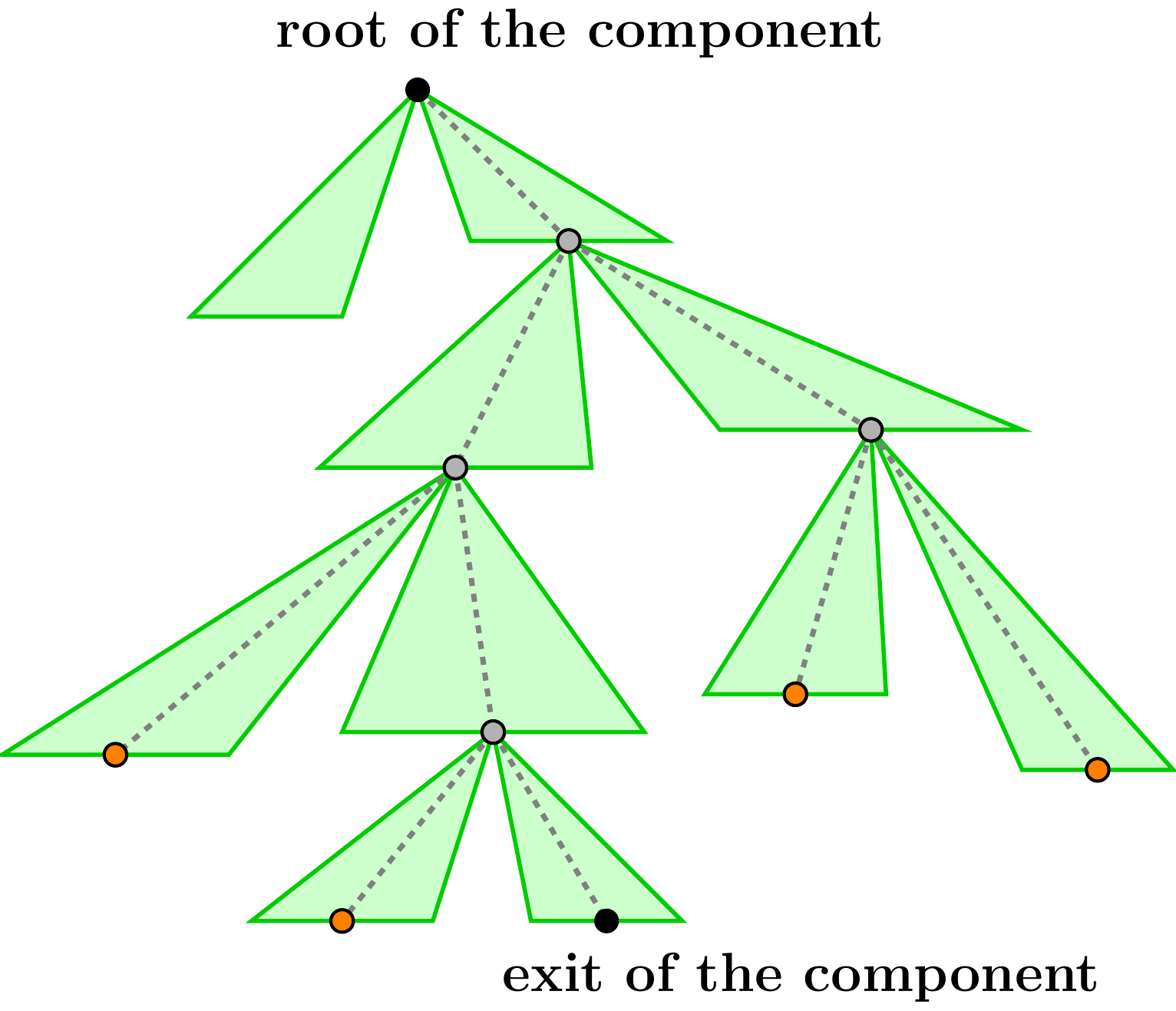}
    \caption{Decomposition of a component into \emph{blocks}.
    The orange nodes represent the \emph{big} terminals in the component.
    The black nodes represent the \emph{root} and the \emph{exit} vertices of the component (defined in \cref{lem:decomposition}).
    The gray nodes are the branching vertices in the subtree spanning the orange and the black nodes.
    Splitting the component at the orange, the black, and the gray nodes results in a set of \emph{blocks}, represented by green triangles.
    Each block has a \emph{root} vertex and at most one \emph{exit} vertex.
    See \cref{sec:decomposition-level-1}.\vspace{8mm}}
    \label{fig:component-block}
    \end{subfigure}
    \begin{subfigure}{\textwidth}
    \centering
    \includegraphics[scale=0.37]{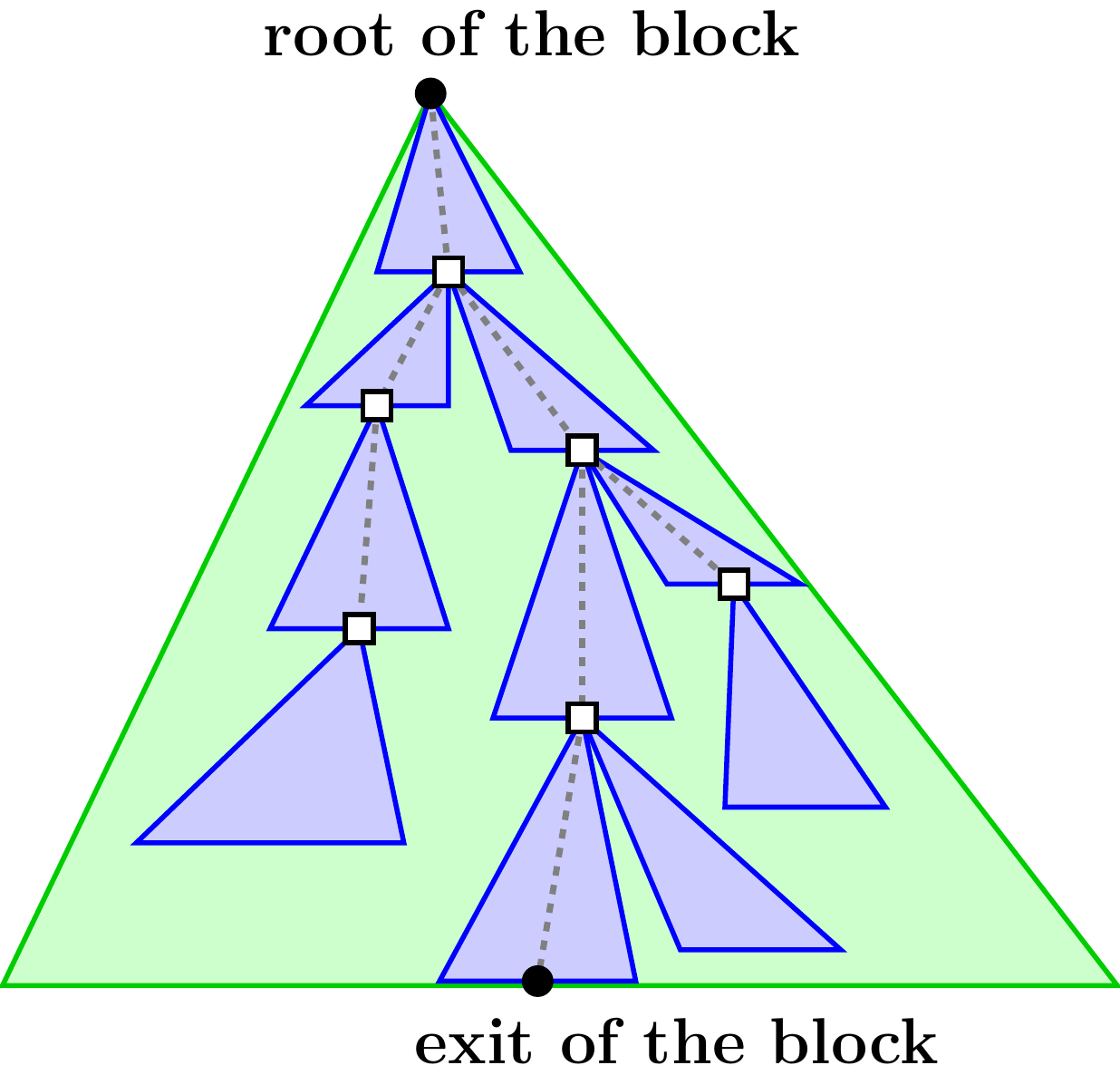}
    \caption{Decomposition of a block into \emph{clusters}.
    The green triangle represents a block.
    Each blue triangle represents a \emph{cluster}.
    Each cluster has a \emph{root} vertex and at most one \emph{exit} vertex.
    A cluster is \emph{passing} if it has an exit vertex, and is \emph{ending} otherwise.
    Each passing cluster has a \emph{spine} (dashed).
    See \cref{sec:decomposition-level-2}.\vspace{8mm}}
    \label{fig:block-cluster}
    \end{subfigure}
    \hfill
    \begin{subfigure}{\textwidth}
    \centering
    \includegraphics[scale=0.37]{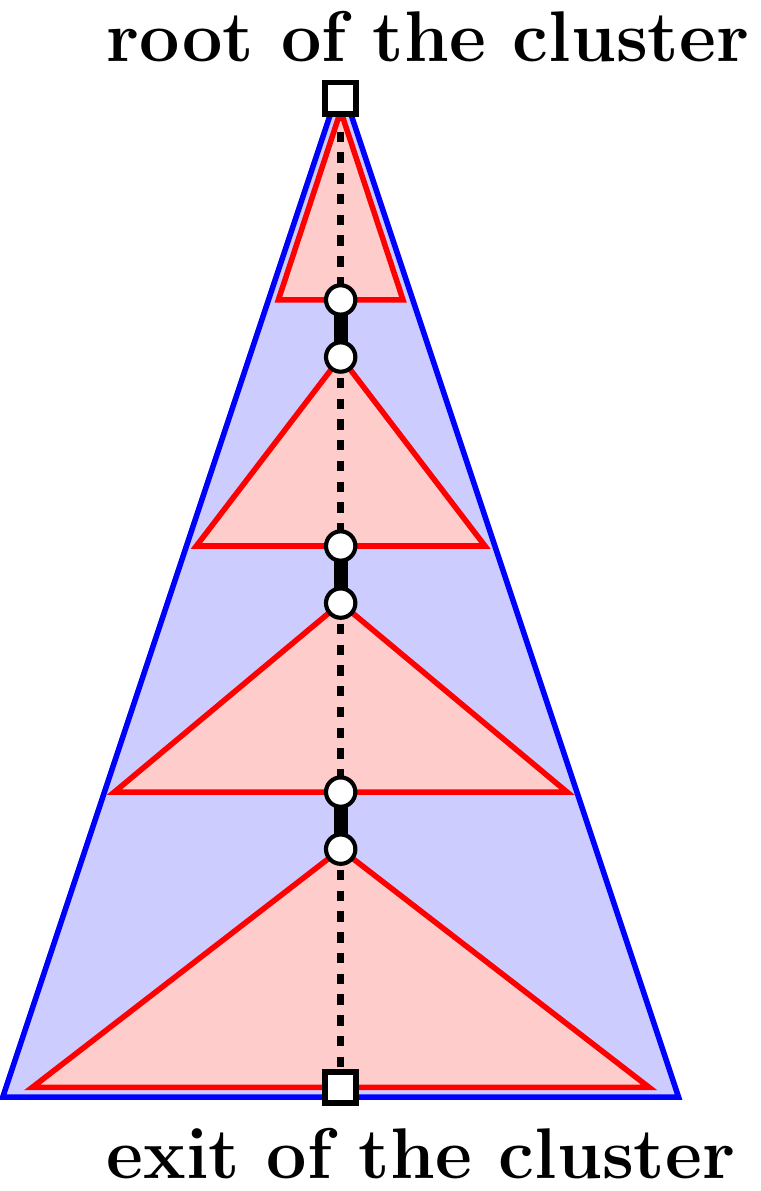}
    \caption{Decomposition of a passing cluster into \emph{cells}.
    The blue triangle represents a passing cluster.
    Removing the thick edges from the cluster results in a set of at most $1/\eps$ \emph{cells}.
    Each red triangle represents a cell. Each of those cells has a \emph{root} vertex, an \emph{exit} vertex, and a \emph{spine} (dashed).
    See \cref{sec:decomposition-level-3}.
    }
    \label{fig:cluster-cell}
    \end{subfigure}
    \caption{Three-level decomposition of a component.\vspace{5mm}}
    \label{fig:three-level}
\end{figure}

\subsection{Solutions Within Each Component}
\label{sec:outline-decomposition}

A significant difficulty is to compute solutions within each component.
It would be natural to attempt to extend the approach in the setting when all terminals have equal demands~\cite{MZ22}. In that setting,  the \emph{demands of the subtours}\footnote{The \emph{demand of a subtour} is the total demand of the terminals visited by that subtour.} in each component are among a polynomial number of values; since the component is visited by a constant  number of tours in a near-optimal solution, that solution inside the component can be computed exactly in polynomial time using a simple dynamic program.
However, when the terminals have arbitrary demands, the demands of the subtours in each component might be among an exponential number of values.\footnote{For example, consider a component that is a star graph with $\Theta(n)$ leaves, where the $i^{\rm th}$ leaf has demand $1/2^i$.}
Indeed, unless $P=NP$, we cannot compute in polynomial time a better-than-$1.5$ approximate solution inside a component, since that problem generalizes the bin packing problem (\cref{sec:hardness-tree-component}).

To compute in polynomial time good approximate solutions within each component, at a high level, we simplify the solution structure in each component, so that the demands of the subtours in that component are among a \emph{constant} $O_\eps(1)$ number of values, while increasing the cost of the solution by at most a multiplicative factor $1.5+O(\eps)$.

Where does the $1.5$ factor come from? Intuitively, our construction creates an additional subtour to cover a selected subset of terminals, charging each edge on that  subtour to \emph{two} existing subtours using that edge, thus adding a $0.5$ factor to the cost.

In the rest of this section, we explain our approach in more details.

\subsubsection{Multi-Level Decomposition (\cref{sec:multileveldecomposition})}
First, we distinguish \emph{big} and \emph{small} terminals depending on their demands.
The number of big terminals in a component is $O_\eps(1)$.
Next, we partition the small terminals of a component into $O_\eps(1)$ parts using a \emph{multi-level decomposition} as follows.
In the first level, a component is decomposed into $O_\eps(1)$ \emph{blocks} so that all terminals strictly inside a block are small; see \cref{fig:component-block,sec:decomposition-level-1}.
In the second level, each block is decomposed into $O_\eps(1)$ \emph{clusters} so that the overall demand of each cluster is roughly an $\eps$ fraction of the demand of a component; see \cref{fig:block-cluster,sec:decomposition-level-2}.
Intuitively, the clusters are such that, \emph{if} we assign the small terminals of each cluster to a single subtour, the subtour capacities would be violated only slightly.
We define the \emph{spine} of a cluster to be the path traversing that cluster.
In the third level, each cluster is decomposed into $O_\eps(1)$ \emph{cells} so that the spine of each cell is roughly an $\eps$ fraction of the spine of a cluster; see \cref{fig:cluster-cell,sec:decomposition-level-3}.

\begin{figure}[h]
    \centering
    \includegraphics[scale=0.25]{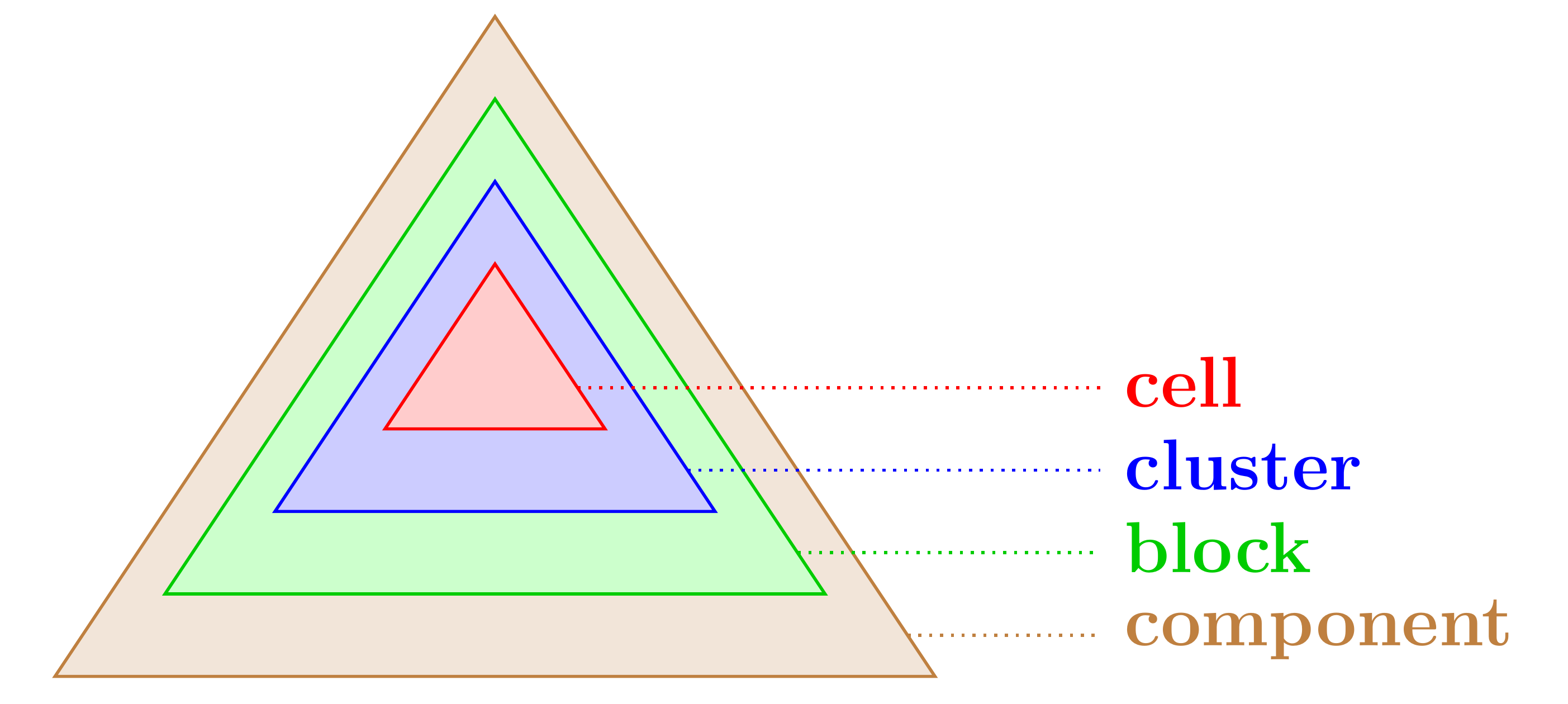}
    \caption{Relation of the multiple levels in the decomposition.}
    \label{fig:nesting}
\end{figure}



\subsubsection{Simplifying the Local Solution (\cref{sec:local})}
\label{sec:Overview-simplify}
The main technical contribution in this paper is the Local Theorem (\cref{thm:local}), which simplifies a local solution inside a component so that, \emph{in each cell, a single subtour visits all small terminals}, while increasing the cost of the local solution by at most a multiplicative factor $1.5+O(\eps)$.
The Local Theorem builds upon techniques from \cite{MZ22} together with substantial new ideas.

A first attempt is to combine all subtours of a cluster into a single subtour.
However, there are two obstacles.
First, the resulting subtours in the component are no longer connected; reconnecting those subtours would require including the spines of the clusters, which would be too expensive.
Secondly, the resulting subtours in the component may exceed their capacities, so an extra cost is needed to reduce the demands of those subtours; in the equal demand setting~\cite{MZ22}, that extra cost is at most an $\eps$ fraction of the solution cost, but this is no longer the case in the arbitrary demand setting.

We overcome those obstacles thanks to the decomposition of a cluster into \emph{cells}.
In the analysis, we introduce the technical concept of \emph{threshold cells} (\cref{fig:threshold_before}), and we ensure that each cluster contains \emph{at most one} threshold cell.
A crucial step is to \emph{include the spine subtour of the threshold cell} into the solution (\cref{fig:threshold_after}).
This enables us to reassign all small terminals of each cell to a single subtour, without losing connectivity, while only slightly violating the subtour capacities.

To reduce the demand of each subtour exceeding capacity, we select some cells from that subtour, and we remove all pieces in that subtour belonging to those cells.
We show that each removed piece is connected to the root through at least \emph{two} subtours in the solution (\cref{lem:connected}, see \cref{fig:analysis}).
That property is a main technical novelty in this paper.
It enables us to reconnect all removed pieces with an extra cost of at most \emph{half} of the solution cost (\cref{lem:extra-cost-2}), hence an approximation ratio of $1.5+O(\eps)$.

\subsection{Postprocessing}
As in \cite{MZ22}, we modify the tree so that it has only $O_\eps(1)$ levels of components; see \cref{sec:hat-T}.
Consider a near-optimal solution in the resulting tree (\cref{thm:MZ22-opt}).
After applying the Local Theorem (\cref{thm:local}) to simplify the local solutions in all components, we obtain a global solution (\cref{thm:global}); see \cref{sec:global}.
We observe that the possible subtour demands in that global solution are within a polynomial number of values (\cref{fact:Y}).
Furthermore, we leverage the  \emph{adaptive rounding} technique due to Jayaprakash and Salavatipour~\cite{jayaprakash2021approximation} and we show that, in each subtree, the subtour demands are among a constant $O_\eps(1)$ number of values (\cref{thm:structure}); see \cref{sec:structure}.

Finally, we design a polynomial time dynamic program to compute the best solution that satisfies the structural constraints established previously.
The computed solution is a $(1.5+O(\eps))$-approximation.
See \cref{sec:DP}.

\begin{remark}
For more general metrics, such as graphs of bounded treewidth and the Euclidean space, it is challenging to simplify the solution structure due to the lack of the \emph{unique} spine in a subproblem.
It is an open question to design optimal approximation algorithms in those metrics.
\end{remark}

\section{Preliminaries}
\label{sec:notations}
Let $T$ be a rooted tree $(V,E)$ with edge weights $w(u,v)\geq 0$ for all $(u,v)\in E$.
Let $n$ denote the number of vertices in $V$.
The \emph{cost} of a tour (resp.\ a subtour) $t$, denoted by $\cost(t)$, is the overall weight of the edges on $t$.
For a set $S$ of tours (resp.\ subtours), the \emph{cost} of $S$, denoted by $\cost(S)$, is $\sum_{t\in S}\cost(t)$.

\begin{definition}[UCVRP on trees]
An instance of the \emph{unsplittable capacitated vehicle routing problem (UCVRP)} on \emph{trees} consists of
\vspace{-0.5em}
\begin{itemize}
\itemsep-0.2em
\item an edge weighted \emph{tree} $T=(V,E)$ with \emph{root} $r\in V$ representing the \emph{depot},
    \item a set $V'\subseteq V$ of \emph{terminals},
    \item for each terminal $v\in V'$, a \emph{demand} of $v$, denoted by $\demand(v)$, which belongs to $(0,1]$.
\end{itemize}
\noindent A feasible solution is \emph{a set of tours} such that
\vspace{-0.5em}
\begin{itemize}
\itemsep-0.2em
\item each tour starts and ends at $r$,
    \item the demand of each terminal is covered by a \emph{single} tour,
    \item the total demand of the terminals covered by each tour does not exceed the \emph{capacity} of 1.
\end{itemize}
The goal is to find a feasible solution of minimum cost.
\end{definition}

For any two vertices $u,v\in V$, let $\dist(u,v)$ denote the distance between $u$ and $v$ in the tree $T$.

We say that a tour (resp.\ a subtour) \emph{visits} a terminal if it covers the demand of that terminal.
For technical reasons, we allow \emph{dummy} terminals of appropriate demands to be included artificially.
The \emph{demand} of a tour (resp.\ a subtour) $t$, denoted by $\demand(t)$, is defined to be the total demand of all terminals (including dummy terminals) visited by $t$.

\subsection{Reduction to Bounded Distance Instances}
\label{sec:bounded-dist}
\begin{definition}[Definition 3 in \cite{MZ22}]
\label{def:bounded-distance}
Let $D_{\min}$ (resp.\ $D_{\max}$) denote the minimum (resp.\ maximum) distance between the depot and any terminal in the tree $T$.
We say that $T$ has \emph{bounded distances} if
$D_{\max}<(1/\eps)^{(1/\eps)-1}\cdot D_{\min}.$
\end{definition}

The next theorem (\cref{thm:dist-reduce}) enables us to assume without loss of generality that the tree $T$ has bounded distances.

\begin{theorem}[Theorem~5 in \cite{MZ22}]
\label{thm:dist-reduce}
For any $\rho\geq 1$, if there is a polynomial time $\rho$-approximation algorithm for the UCVRP on trees with \emph{bounded distances}, then there is a polynomial time $(1+5\eps) \rho$-approximation algorithm for the UCVRP on trees with \emph{general distances}.
\end{theorem}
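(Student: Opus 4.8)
The plan is to use a standard scaling-and-bucketing argument. First I would partition the terminals into groups according to their distance from the depot: for each integer $i$, let $V_i$ be the set of terminals $v$ with $\dist(r,v) \in [(1/\eps)^{i}, (1/\eps)^{i+1})$ (up to an appropriate choice of the base exponent). Within each such group, the ratio between the largest and smallest depot-distance is at most $1/\eps$, which already gives bounded distances if the required factor $(1/\eps)^{(1/\eps)-1}$ is at least $1/\eps$, i.e.\ for $\eps$ small enough. But a single tour in the optimal solution may visit terminals from many different groups, so the subtlety is that we cannot simply solve each group independently without paying for the connections to the depot.

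The key step is a charging/truncation argument. I would first observe that $\OPT \ge 2 D_{\max}$, since some tour must reach the farthest terminal. Then I would group the indices $i$ into \emph{blocks} of $\lceil 1/\eps \rceil$ consecutive values, but with the blocks shifted by a random (or best-choice) offset among the $\lceil 1/\eps \rceil$ possibilities; by averaging, there is a choice of offset such that the total optimal cost incurred on edges at "boundary" scales between consecutive blocks is at most an $\eps$ fraction of $\OPT$. Within one block of scales, the distances span a ratio of $(1/\eps)^{\lceil 1/\eps\rceil} \le (1/\eps)^{(1/\eps)}$; to get the slightly sharper bound $(1/\eps)^{(1/\eps)-1}$ stated in the definition I would be a bit more careful with the block length, using $\lceil 1/\eps \rceil - 1$ or absorbing one factor into the innermost/outermost scale. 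For terminals lying strictly below the lowest scale of a block, I would contract the tree at that scale (replacing the portion of the tree below a vertex at depth equal to the block's minimum scale by a single super-terminal, or by recursing), paying only the boundary cost identified above; for terminals above the block this portion of the tree is handled by a coarser recursive instance.

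Concretely, I would peel off one block of scales at a time: solve the bounded-distance instance restricted to the current block (applying the given $\rho$-approximation algorithm), contract everything below it, and recurse on the remaining tree, which now has one fewer block. The total cost is $\rho$ times the sum of the per-block optimal costs, plus the cost of the extra connections at block boundaries. The per-block optima sum to at most $(1+O(\eps))\OPT$ by the offset-averaging argument, and the extra connection cost is at most $O(\eps)\OPT$ by the same argument (each boundary edge is used by at most one tour and contributes $O(\eps)$ times its own optimal usage, since it sits at a scale that is an $\eps$-fraction of the scales above it). Combining gives a $(1+5\eps)\rho$-approximation after absorbing constants. The main obstacle is the accounting at block boundaries: one must argue that redirecting a subtour that crosses a boundary back to the depot (to split it between a "lower" instance and an "upper" instance) costs only an $\eps$ fraction of that subtour's length, which relies crucially on the geometric separation of scales within a block being at least a $1/\eps$ factor, and on choosing the offset to avoid the worst alignment. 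Since this is Theorem~5 of \cite{MZ22}, I would ultimately just cite it, but the sketch above is how I would reconstruct the proof.
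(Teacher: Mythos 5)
This paper does not actually prove \cref{thm:dist-reduce}: it is imported verbatim as Theorem~5 of \cite{MZ22}, so there is no in-paper proof to compare against, and your closing remark that you would ultimately just cite it is exactly what the authors do. Your reconstruction sketch is the standard scale-decomposition argument and matches the spirit of the proof in \cite{MZ22}: bucket terminals by depot distance into classes of ratio $1/\eps$, group roughly $1/\eps$ consecutive classes into a window of ratio at most $(1/\eps)^{(1/\eps)-1}$, choose the best of the $O(1/\eps)$ offsets by averaging, solve each bounded-distance window instance with the given $\rho$-approximation, and charge the cost of reconnecting window pieces of an optimal tour to the depot against the geometric separation between windows, losing only a $(1+O(\eps))$ factor overall.

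Two caveats on the details you supply. First, the parenthetical claim that ``each boundary edge is used by at most one tour'' is neither true in general (edges near the depot are shared by many tours) nor needed; the quantity averaged over offsets should be, for each optimal tour, the reconnection cost $2\cdot\dist(r,\cdot)$ of its pieces in the non-deepest windows it visits, which is at most an $\eps$ fraction of the length of the piece in the next-farther window precisely because of the $1/\eps$ gap created by the offset choice. Second, your alternative of contracting everything below a scale into a single super-terminal is unsafe in the \emph{unsplittable} setting: the merged demand can exceed the capacity $1$, and forcing it onto one tour changes the problem. The peeling variant you also mention --- each terminal placed in exactly one window, one independent instance per window on the original tree, tours split at window boundaries (which preserves feasibility since each piece's demand is at most that of the original tour) --- is the version to keep.
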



\subsection{Decomposition Into Components (\cref{fig:tree})}

\label{sec:components}

We decompose the tree $T$ into components as in \cite{MZ22}.

\begin{lemma}[Lemma 9 in~\cite{MZ22}]
\label{lem:decomposition}
Let $\Gamma=12/\eps$.
There is a polynomial time algorithm to compute a partition of the edges of the tree $T$ into a set $\mathcal{C}$ of \emph{components}, such that all of the following properties are satisfied:
\vspace{-0.5em}
\begin{enumerate}
\itemsep-0.2em
    \item Every component $c\in \mathcal{C}$ is a connected subgraph of $T$;
    the \emph{root} vertex of the component $c$, denoted by $r_c$, is the vertex in $c$ that is closest to the depot.
    \item A component $c$ shares vertices with other components at vertex $r_c$ and possibly at one other vertex, called the \emph{exit} vertex of the component $c$ and denoted by $e_c$.
    We say that $c$ is an \emph{internal} component if $c$ has an exit vertex, and is a \emph{leaf} component otherwise.
    \item The total demand of the terminals in each component $c\in \mathcal{C}$ is at most $2\Gamma$.
    \item The number of components in  $\mathcal{C}$ is at most $\max\{1,3\cdot \demand(T)/\Gamma\}$, where $\demand(T)$ denotes the total demand of the terminals in the tree $T$.
\end{enumerate}
\end{lemma}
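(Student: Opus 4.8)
The plan is to build $\mathcal C$ by a recursive top-down ``peeling'' procedure along a carefully chosen path, verify properties 1--3 directly from the construction, and prove property 4 (the bound on $|\mathcal C|$) by induction with a slightly strengthened invariant; I expect only this last step to be genuinely delicate. Throughout I use the standing assumption that $T$ is binary with the terminals as its leaves, and I write $T_v$ for the subtree rooted at $v$ and $\demand(T_v)$ for its total terminal demand. The easy case is $\demand(T)\le 2\Gamma$: output $\mathcal C=\{T\}$, a single leaf component, which satisfies properties 1--3 trivially and has $|\mathcal C|=1\le\max\{1,3\demand(T)/\Gamma\}$. So assume $\demand(T)>2\Gamma$.

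I would define a recursion $\textsc{Decompose}(v)$ partitioning the edges of $T_v$. If $\demand(T_v)\le 2\Gamma$, return the single leaf component $T_v$ rooted at $v$. If $v$ has two ``heavy'' children (both with subtree demand $>\Gamma$), recurse into the two branches separately (placing the connecting edges into the respective sub-decompositions), without opening a component at $v$. Otherwise, open a component $c$ rooted at $v$ and grow a \emph{spine}: with $x\leftarrow v$ and a committed-demand counter $A\leftarrow 0$, repeatedly let $a,b$ be the children of $x$ with $\demand(T_a)\ge\demand(T_b)$, and (i) if $\demand(T_x)\le 2\Gamma$, stop with $e_c\leftarrow x$ and attach $T_x$ to $\mathcal C$ as a leaf component; (ii) else if $\demand(T_b)\le\Gamma$ and $A+\demand(T_b)\le 2\Gamma$, swallow $T_b$ into $c$, set $A\leftarrow A+\demand(T_b)$, advance $x\leftarrow a$, continue; (iii) else stop with $e_c\leftarrow x$ and recurse into the child subtree(s) of $e_c$ not yet in $c$ (with the connecting edges). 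Thus $c$ is the spine from $v$ to $e_c$ together with all subtrees hanging off the non-followed side of the spine, each of demand at most $\Gamma$; equivalently $c=T_v\setminus T_{e_c}$ as edge sets and $\demand(c)=A$. Degenerate situations (the spine never reaches a leaf, because a terminal has demand $\le 1<2\Gamma$ and rule (i) fires first; and no component is ever a lone vertex, because at a both-heavy vertex we split rather than open a component) are handled by the obvious conventions, including the usual treatment of a rooted subtree whose root has a single child.

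Properties 1--3 can then be read off. Every component is a full subtree or a spine with hanging subtrees, hence connected, with $r_c$ its vertex nearest the depot; the only vertices $c$ can share with other components are $r_c$ (with the component one level up) and $e_c$ (with the at most two components opened just below $e_c$), because every other subtree hanging off the spine lies entirely inside $c$ --- this is precisely why rule (ii) refuses to swallow a subtree of demand more than $\Gamma$. For property 3, the counter $A$ increases only when $A+\demand(T_b)\le 2\Gamma$, so $A\le 2\Gamma$ throughout and $\demand(c)=A\le 2\Gamma$ for spine components, while leaf components have demand $\le 2\Gamma$ by the base case. The algorithm runs in polynomial time: subtree demands are precomputed in $O(n)$, exit vertices are pairwise distinct (each uses its unique parent edge), so the recursion tree has $O(n)$ nodes, and each call walks a path of length $O(n)$.

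The crux is property 4. Classify each spine component $c$: if it stopped via $A+\demand(T_b)>2\Gamma$ with $\demand(T_b)\le\Gamma$, then $A>\Gamma$, so $\demand(c)>\Gamma$ (``type-a''); if it stopped via $\demand(T_b)>\Gamma$, then both subtrees below $e_c$ have demand $>\Gamma$ (``type-b''), but $\demand(c)=A$ may be tiny. A direct count fails exactly because type-b components can be arbitrarily light, so the ``$+1$'' one would charge for such a $c$ is not covered by its own demand. The fix is to prove, by induction on the recursion, the strengthened invariant: $\textsc{Decompose}(v)$ produces exactly one component when $\demand(T_v)\le 2\Gamma$, and at most $3\demand(T_v)/\Gamma-1$ components when $\demand(T_v)>\Gamma$. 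In the type-b step, both recursive calls have subtree demand $>\Gamma$, so applying the invariant to each contributes two ``$-1$'' terms that more than absorb the ``$+1$'' for $c$; in the type-a step one recursive call has demand $\le\Gamma$ (hence makes exactly one component) while the other has demand $>\Gamma$, and combining its invariant with $\demand(c)=A>\Gamma$ and $\demand(T_v)=\demand(c)+\demand(T_{e_c})$ closes the arithmetic (it reduces to $A+\demand(T_b)\ge 2\Gamma/3$); the small-subtree stop and the both-heavy split are easier still. Applying the invariant at the root gives $|\mathcal C|\le 3\demand(T)/\Gamma-1<\max\{1,3\demand(T)/\Gamma\}$. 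I would flag the identification of the right slack in this invariant --- so that light type-b components are ``paid for'' by the two heavy subtrees they spawn --- as the one non-routine point; everything else is bookkeeping dictated by the required shared-vertex structure.
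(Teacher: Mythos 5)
Your counting argument via the strengthened invariant ($\le 3\demand(T_v)/\Gamma - 1$ components when $\demand(T_v)>\Gamma$, the ``$-1$'' paid for by the two heavy subtrees in the type-b case and by $A+\demand(T_b)>2\Gamma$ in the type-a case) is a clean way to get property~4 and is indeed the delicate part done correctly, but your construction has a genuine gap upstream of it, in the handling of \emph{chains of both-heavy vertices}, and this breaks property~2.

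Concretely, suppose the spine of a component $c$ stops at $e_c$ because $e_c$ has two heavy children $a,b$. You recurse into $\textsc{Decompose}(a)$ and $\textsc{Decompose}(b)$, each with its connecting edge from $e_c$. Now suppose $a$ \emph{also} has two heavy children $a_1,a_2$: by your rule you ``recurse into the two branches separately \dots without opening a component at $a$.'' But then the inherited edge $(e_c,a)$ has nowhere legitimate to live. If it is pushed into, say, $\textsc{Decompose}(a_1)$'s output (the reading suggested by ``placing the connecting edges into the respective sub-decompositions''), the resulting component has root $e_c$ and passes through $a$, while $\textsc{Decompose}(a_2)$ produces a component rooted at $a$; those two components then share $a$, which is neither the root nor the exit of the first one, violating property~2. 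If instead you open a bridge component $\{(e_c,a)\}$ with root $e_c$ and exit $a$, the construction and your count both survive (the two ``$-1$''s from the heavy $T_{a_1},T_{a_2}$ absorb the extra ``$+1$'' and the bridge has demand $0$), but this contradicts your own design decision, which you state explicitly: ``no component is ever a lone vertex, because at a both-heavy vertex we split rather than open a component.'' So as written the proposal either leaves an edge unassigned or produces a component sharing vertices with three or more others; you need to commit to opening bridge components at both-heavy vertices that carry an inherited edge (or equivalently drop the ``split without opening'' rule except at the global root), and then re-verify the invariant, which as noted does go through.

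For comparison, the paper's route (the cluster version of which is proved in Appendix~B as a ``minor adaptation'' of the cited Lemma~9 of \cite{MZ22}) avoids this issue structurally rather than by case analysis: it first identifies all \emph{leaf clusters} in one bottom-up pass, then declares every branching vertex of the resulting \emph{backbone} to be a \emph{key vertex}, and only then cuts the segments between consecutive key vertices. Because every branching vertex is a key vertex by construction, no cluster can traverse a branching vertex, so the two-shared-vertex property holds automatically; the count is then handled by an injection sending bad (light) clusters to leaf clusters with at most three pre-images each (\cref{lem:three-pre-images}), rather than by an inductive invariant. Your approach, once patched, would be a valid alternative proof --- the invariant-based count is arguably slicker than the three-pre-images injection --- but the both-heavy handling is precisely where the paper's two-stage structure earns its keep.
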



\begin{definition}[\cite{MZ22}]
\label{def:subtour-component}
Let $c\in\mathcal{C}$ be any component.
A \emph{subtour in component~$c$} is a path $t$ that starts and ends at the root $r_c$ of component $c$, and such that every vertex on $t$ is in component $c$.
We say that $t$ is a \emph{passing subtour} if $c$ has an exit vertex and that vertex belongs to $t$, and is an \emph{ending subtour} otherwise.
\end{definition}

\section{Multi-Level Decomposition in a Component}
\label{sec:multileveldecomposition}

Let $c\in \mathcal{C}$ be any component.
We distinguish \emph{big} and \emph{small} terminals in $c$ depending on their demands.

\begin{definition}[big and small terminals]
\label{def:big-small}
Let $\alpha=\eps^{(1/\eps)+1}$.
Let $\Gamma'=\eps\cdot \alpha/\Gamma$, where $\Gamma$ is defined in \cref{lem:decomposition}.
We say that a terminal $v$ is \emph{big} if $\demand(v)>\Gamma'$ and \emph{small} otherwise.
\end{definition}

We partition the \emph{small} terminals in $c$ using a \emph{multi-level decomposition}: first, the component $c$ is decomposed into \emph{blocks} (\cref{sec:decomposition-level-1}); next, each block is decomposed into \emph{clusters} (\cref{sec:decomposition-level-2}); and finally, each cluster is decomposed into \emph{cells} (\cref{sec:decomposition-level-3}).

We introduce some common notations for blocks, clusters, and cells.
Each block (resp.\ cluster or cell) has a \emph{root} vertex and at most one \emph{exit} vertex.
We say that a terminal $v$ is \emph{strictly} inside a block (resp.\ cluster or cell) if $v$ belongs to the block (resp.\ cluster or cell) and, in addition, $v$ is different from the root vertex and the exit vertex of the block (resp.\ cluster or cell).
Note that any terminal strictly inside a block (resp.\ cluster or cell) is small.
The \emph{demand} of a block (resp.\ cluster or cell) is defined as the total demand of all terminals \emph{strictly} inside that block (resp.\ cluster or cell).
We say that a block (resp.\ cluster or cell) is \emph{passing} if it has an exit vertex and is \emph{ending} otherwise.
The \emph{spine} of a passing block (resp.\ passing cluster or passing cell) is the path between the root vertex and the exit vertex of that block (resp.\ cluster or cell).

\subsection{Decomposition of a Component Into Blocks (\cref{fig:component-block})}
\label{sec:decomposition-level-1}
    Let $c$ be any component.
    Let $U\subseteq V$ denote the set of vertices consisting of the big terminals in $c$, the \emph{root} vertex of $c$, and possibly the \emph{exit} vertex of $c$ if $c$ is an \emph{internal} component (see \cref{lem:decomposition} for definitions).
    Let $T_U$ denote the subtree of $c$ spanning the vertices in $U$.
    We say that a vertex in $T_U$ is a \emph{key} vertex if either it belongs to $U$ or it has two children in $T_U$.
    We define a \emph{block} to be a maximally connected subgraph of component $c$ in which any key vertex has degree 1; in other words, blocks are obtained by splitting the component at the key vertices.
    The blocks form a partition of the edges of component $c$.

\subsection{Decomposition of a Block Into Clusters (\cref{fig:block-cluster})}
\label{sec:decomposition-level-2}
    As an adaptation from \cref{lem:decomposition}, we decompose a block into clusters in \cref{lem:cluster-decomposition}.

\begin{lemma}
\label{lem:cluster-decomposition}
Let $b$ be any block.
There is a polynomial time algorithm to compute a partition of the edges of the block $b$ into a set of \emph{clusters}, such that all of the following properties are satisfied:
\vspace{-0.5em}
\begin{enumerate}
\itemsep-0.2em
    \item Every cluster $x$ is a connected subgraph of $b$;
    the \emph{root} vertex of the cluster $x$, denoted by $r_x$, is the vertex in $x$ that is closest to the depot.
    \item A cluster $x$ shares vertices with other clusters at vertex $r_x$ and possibly at one other vertex, called the \emph{exit} vertex of the cluster $x$ and denoted by $e_x$.
    If block $b$ has an exit vertex $e_b$, then there is a cluster $x$ in $b$ such that $e_x=e_b$.
    \item The demand of each cluster in $b$ is at most $2\Gamma'$.
    \item The number of clusters in $b$ is at most $3\cdot (\demand(b)/\Gamma'+1)$.
\end{enumerate}
\end{lemma}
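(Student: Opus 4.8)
The plan is to mimic the proof of \cref{lem:decomposition} (Lemma~9 in~\cite{MZ22}), applied to the block $b$ in place of the whole tree $T$, but with two twists: the demand budget is now $\Gamma'$ rather than $\Gamma$, and we must be careful that if the block $b$ already has an exit vertex $e_b$, then one of the resulting clusters inherits it. Throughout, recall that every terminal strictly inside $b$ is small, i.e.\ has demand at most $\Gamma'$, which is exactly what makes a greedy demand-accumulation argument work with threshold $\Gamma'$.

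First I would root $b$ at $r_b$ and process its edges in a bottom-up (post-order) traversal, maintaining for each vertex $u$ the \emph{residual demand} $\demand(b_u)$ of the subtree $b_u$ of $b$ hanging below $u$, where demand of a terminal is counted only if it has not yet been assigned to a cluster. Whenever, walking up from the leaves, the residual demand below some vertex $u$ first reaches $\Gamma'$, we ``cut'' at $u$: the maximal already-processed connected piece below $u$ whose internal key-type branching respects degree constraints becomes a new cluster $x$ with root $u$ (so $r_x = u$), and the demand previously accumulated there is zeroed out. Because every small terminal contributes at most $\Gamma'$, and the cut is triggered the moment the accumulator crosses $\Gamma'$, the demand strictly inside each cluster so formed lies in $[\Gamma', 2\Gamma')$ — this gives Property~3 with the bound $2\Gamma'$. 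Branching vertices of $b$ with two children carrying demand are handled as in~\cite{MZ22}: we close off a cluster at such a vertex before merging the two branches, which keeps each cluster a connected subgraph sharing with its neighbours only at its root and at most one other vertex (its exit vertex $e_x$), giving Properties~1 and~2. The leftover piece containing $r_b$, whose residual demand may be less than $\Gamma'$, forms one final cluster rooted at $r_b$.

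For the inherited exit vertex in Property~2: if $b$ has an exit vertex $e_b$, I would treat $e_b$ as a mandatory cut point at the very start (mark it like a branching vertex of $b$), so that the cluster containing $e_b$ is closed off there; that cluster then has $e_x = e_b$, as required. This is cost-free in the accounting because forcing one extra cut only increases the number of clusters by $1$. For Property~4, the counting argument is the standard one: each cut except the last is triggered either by the accumulator crossing $\Gamma'$ — and there can be at most $\demand(b)/\Gamma'$ such events since the crossed demand is never revisited — or by a branching vertex of $b$, and the number of those is at most the number of leaves of the spanning structure, which (as in the proof of \cref{lem:decomposition}) is absorbed into the same geometric factor, yielding at most $3(\demand(b)/\Gamma' + 1)$ clusters; the ``$+1$'' covers both the final $r_b$-cluster and the forced $e_b$-cut. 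All steps run in a single linear-time traversal, so the algorithm is polynomial.

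The main obstacle I anticipate is not the demand bookkeeping but the \emph{topological} bookkeeping: ensuring simultaneously that (i) each cluster is connected, (ii) it meets the rest of $b$ in at most two vertices (its root and its exit), and (iii) the forced cut at $e_b$ does not conflict with a demand-triggered or branching-triggered cut at or near $e_b$. The clean way around this is to invoke the construction of \cref{lem:decomposition} essentially verbatim on the instance obtained from $b$ by (a) adding $e_b$ to the set of distinguished vertices that play the role of ``terminals with the closest-to-depot tie-breaking,'' and (b) rescaling so that $\Gamma$ becomes $\Gamma'$; the four properties of \cref{lem:decomposition} then translate line-by-line into the four properties claimed here, with the only new content being the identification $e_x = e_b$, which follows because $e_b$ is a leaf of the spanning subtree and hence an endpoint (root or exit) of exactly one cluster, and it cannot be the root of that cluster since $e_b$ is farther from the depot than $r_b$.
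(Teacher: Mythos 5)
Your overall plan---rerunning the decomposition of \cref{lem:decomposition} inside the block $b$ with threshold $\Gamma'$ and giving $e_b$ special treatment---is indeed the paper's approach, but two steps of your write-up have genuine gaps. The main one is Property~4, which you never actually prove. Your cutting rule, ``close off a cluster at every branching vertex of $b$ with two children carrying demand,'' creates too many clusters as literally stated: in a block that is a long path with a tiny-demand leaf hanging off every path vertex, each path vertex is such a branching vertex, so you would produce $\Theta(n)$ clusters while $\demand(b)/\Gamma'$ can be $O(1)$. The correct construction only cuts at the root of $b$, at the roots of the \emph{leaf clusters} (deepest subtrees of demand at least $\Gamma'$), and at vertices with two children in the backbone spanning those roots; and even then the bound $3\cdot(\demand(b)/\Gamma'+1)$ is not automatic, because the topmost cluster between two consecutive key vertices may have demand below $\Gamma'$. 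The paper closes this with a separate charging argument (\cref{lem:three-pre-images}): every cluster is mapped to a ``good'' cluster (a leaf cluster or one of demand at least $\Gamma'$) so that each good cluster receives at most three pre-images, and all good clusters except possibly the one containing $e_b$ are big. Your phrase ``absorbed into the same geometric factor'' is exactly where this argument should appear, and nothing in the proposal supplies it.

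Second, the requirement that some cluster have $e_x=e_b$ does not follow ``verbatim'' from \cref{lem:decomposition}, and your justification is incorrect on both counts: a leaf of $b$ can lie strictly inside a cluster (shared with no other cluster, hence neither its root nor its exit), and in the paper's construction, when $e_b$ belongs to no leaf cluster, one creates a singleton cluster $\{e_b\}$ whose root \emph{is} $e_b$, contradicting your claim that $e_b$ cannot be a cluster root. The fix is explicit special handling: if $e_b$ falls inside an existing leaf cluster, designate $e_b$ as that cluster's exit vertex; otherwise add the trivial cluster $\{e_b\}$ (this is one source of the ``$+1$'' in Property~4). Your ``mandatory cut point'' idea gestures at this, but since $e_b$ is a leaf of $b$, cutting there removes no edges and by itself changes nothing; the designation (or the trivial cluster) is what is actually needed.
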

\noindent The proof of \cref{lem:cluster-decomposition} is in \cref{sec:cluster-decomposition}.

\subsection{Decomposition of a Cluster Into Cells (\cref{fig:cluster-cell})}
\label{sec:decomposition-level-3}
    Let $x$ be any cluster.
    If $x$ is an ending cluster, then the decomposition of $x$ consists of a single \emph{cell}, which is the entire cluster $x$.
    If $x$ is a passing cluster, then we decompose $x$ into \emph{cells} as follows.
    Let $\ell$ denote the cost of the spine of  cluster $x$.
    When $\ell=0$, the decomposition of $x$ consists of a single \emph{cell}, which is the entire cluster $x$.
    Next, we assume that $\ell>0$.
    For each integer $i\in[1,(1/\eps)-1]$, there exists a unique edge $(u,v)$ on the spine of cluster $x$ satisfying $\min(\dist(r_x,u),\dist(r_x,v))\leq i\cdot \eps \cdot\ell< \max(\dist(r_x,u),\dist(r_x,v))$; let $e_i$ denote that edge.
    Removing the edges $e_1,e_2,\dots, e_{(1/\eps)-1}$ from cluster $x$ results in at most $1/\eps$ connected subgraphs; each subgraph is called a \emph{cell}.
    Observe that those cells form a partition of the vertices of cluster $x$.

    From the construction, the (unique) cell inside an ending cluster is an ending cell, and the cells inside a passing cluster are passing cells.
    The cost of the spine of any passing cell in a passing cluster is at most an $\eps$ fraction of the cost of the spine of that cluster.

\begin{fact}
\label{fact:constant-number}
In any component $c$, the number of cells and the number of big terminals are both $O_\eps(1)$.
\end{fact}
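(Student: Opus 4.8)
The plan is to bound the two quantities separately by climbing the three-level decomposition from cells up to blocks, using only the counting properties already established. I would first dispose of the big terminals: by \cref{def:big-small} every big terminal has demand more than $\Gamma'=\eps\alpha/\Gamma$, while by Property~3 of \cref{lem:decomposition} the total demand of the terminals in $c$ is at most $2\Gamma$, so $c$ contains at most $2\Gamma/\Gamma'=2\Gamma^2/(\eps\alpha)$ big terminals; since $\Gamma$ and $\alpha$ depend only on $\eps$, this is $O_\eps(1)$. Consequently the set $U$ of \cref{sec:decomposition-level-1} (the big terminals of $c$, its root $r_c$, and possibly its exit $e_c$) has $|U|=O_\eps(1)$.

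Next I would bound the number of blocks. The key vertices are the vertices of $U$ together with the vertices having two children in $T_U$; since every leaf of the spanning subtree $T_U$ belongs to $U$, the tree $T_U$ has at most $|U|-1$ branching vertices, so the number $k$ of key vertices is $O_\eps(1)$. The blocks are the connected pieces obtained by splitting $c$ at the key vertices, and splitting a tree at a vertex of degree $d$ increases the number of connected pieces by exactly $d-1$; as the tree is binary, every vertex has degree at most $3$, so the number of blocks is at most $1+2k=O_\eps(1)$. The one point that needs care here is that a block need not be a simple path: a small terminal may hang off a degree-two vertex of $T_U$ and be absorbed into a block, so rather than reasoning about the shape of blocks I would rely on this clean counting identity, which yields the bound regardless of such hanging subtrees.

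It remains to bound the number of cells per block. For a fixed block $b$, Property~4 of \cref{lem:cluster-decomposition} bounds the number of clusters in $b$ by $3(\demand(b)/\Gamma'+1)$, and $\demand(b)$ is at most the total demand of the terminals in $c$, hence at most $2\Gamma$, so $b$ contains $O_\eps(1)$ clusters. For a fixed cluster $x$, the construction in \cref{sec:decomposition-level-3} removes at most $(1/\eps)-1$ edges and therefore produces at most $1/\eps$ cells. Multiplying the three estimates, the number of cells in $c$ is at most the number of blocks times the maximum number of clusters in a block times the maximum number of cells in a cluster, a product of three $O_\eps(1)$ quantities, hence $O_\eps(1)$. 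The whole proof is thus bookkeeping on top of the decomposition lemmas, with the block count being the only mildly delicate step.
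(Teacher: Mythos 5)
Your proposal is correct and follows essentially the same route as the paper: bound the number of big terminals by $2\Gamma/\Gamma'$, deduce $|U|=O_\eps(1)$, count blocks via the binary-tree structure, then multiply by the cluster bound from \cref{lem:cluster-decomposition} and the $1/\eps$ cells per cluster. Your block count via key vertices and the splitting identity is just a slightly more detailed version of the paper's ``at most $2|U|$ blocks'' bookkeeping, so there is nothing substantively different to compare.
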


\begin{proof}
By \cref{lem:decomposition}, the total demand of the terminals in component $c$ is at most $2\Gamma$.
Since the demand of a big terminal is at least $\Gamma'$, there are at most $2\Gamma/\Gamma'=O_\eps(1)$ big terminals in $c$.

From the construction in \cref{sec:decomposition-level-1}, the set $U$ consists of at most $2+2\Gamma/\Gamma'$ vertices.
Since each vertex in $c$ has at most two children, the number of blocks in $c$ is at most $2|U|\leq 4+4\Gamma/\Gamma'$.
From the construction in \cref{sec:decomposition-level-2}, each block $b$ is partitioned into at most $3 \cdot(\demand(b)/\Gamma'+1)$ clusters, where $\demand(b)$ is at most the total demand of the terminals in component $c$, which is at most $2\Gamma$.
From the construction in \cref{sec:decomposition-level-3}, each cluster is partitioned into at most $1/\eps$ cells. So the number of cells in $c$ is at most
$\left(4+4\Gamma/\Gamma'\right)\cdot (3 \cdot (2\Gamma/\Gamma'+1))\cdot (1/\eps)=O_\eps (1).$
\end{proof}

\begin{definition}[Adaptation from \cref{def:subtour-component}]
\label{def:subtour-clusters-cells}
A \emph{subtour in a cluster (resp.\ cell)} is a path $t$ that starts and ends at the root of that cluster (resp.\ cell), and such that every vertex on $t$ is in that cluster (resp.\ cell).
We say that $t$ is a \emph{passing subtour} if that cluster (resp.\ cell) has an exit vertex and that vertex belongs to $t$, and is an \emph{ending subtour} otherwise.
The \emph{spine subtour} in a passing cluster (resp.\ passing cell) consists of the spine of that cluster (resp.\ cell) in both directions.
\end{definition}


\section{Simplifying the Local Solution}
\label{sec:local}

In this section, we prove the Local Theorem (\cref{thm:local}).

\begin{theorem}[Local Theorem]
\label{thm:local}
Let $c$ be any component.
Let $S_c$ denote a set of at most $(2\Gamma/\alpha)+1$ subtours in component $c$ visiting all terminals in $c$.
Then there exists a set $S^*_c$ of subtours in component $c$ visiting all terminals in $c$, such that all of the following properties hold:
\begin{enumerate}
\item For each cell in $c$, a single subtour in $S^*_c$ visits all small terminals in that cell;
\item $S^*_c$ contains one particular subtour $\bar{t}$ of demand at most 1, and the subtours in $S^*_c\setminus \{ \bar{t}\}$ are in one-to-one correspondence with the subtours in $S_c$,
    such that for every subtour $t$ in $S_c$ and its corresponding subtour $t^*$ in $S^*_c\setminus \{ \bar{t}\}$, the demand of $t^*$ is at most the demand of $t$, and in addition, if $t$ is a passing subtour in $c$, then $t^*$ is also a passing subtour in $c$;
\item The cost of $S^*_c$ is at most $1.5 + 2\eps$ times the cost of $S_c$.
\end{enumerate}
\end{theorem}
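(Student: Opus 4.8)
The plan is to build $S^*_c$ from $S_c$ in two phases. The first, \emph{consolidation} phase rearranges which subtour visits which small terminal, one cluster at a time, so that every cell ends up served by a single subtour (Property~1), at the price of a few short detours. The second, \emph{repair} phase carves out some cells from the subtours that have become overfull and collects them into the single extra subtour $\bar t$; here the doubly‑connected structure of the removed pieces is what bounds the reconnection cost by half of $\cost(S_c)$, giving Properties~2 and~3.

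\emph{Phase 1.} I process the clusters of $c$ one at a time, modifying the subtours of $S_c$ only within the current cluster and only by exchanging ownership of the small terminals strictly inside it. Fix a cluster $x$, write $D_x$ for the total demand of its small terminals, and for each subtour $t$ of $S_c$ that enters $x$ let $b_t$ be the demand of the small terminals of $x$ that $t$ currently visits, so that $\sum_t b_t=D_x$. Scanning the cells of $x$ (for a passing cluster, in order along the spine), I hand the small terminals of each cell to a subtour that already traverses that cell's spine --- so that the reassignment only moves the cheap off‑spine leaf edges between subtours, at no net cost --- and that has not yet received its full ``budget'' $b_t$ of reassigned demand. The construction is arranged so that the scan is forced off budget on at most one cell of $x$, its unique \emph{threshold cell} (\cref{fig:threshold_before}); for that cell I instead add to the current solution the \emph{spine subtour of the threshold cell} (\cref{def:subtour-clusters-cells}, \cref{fig:threshold_after}). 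This round trip both carries the threshold cell's small terminals and reconnects every subtour that previously ran through the threshold cell, so that no existing subtour is ever lengthened along the spine. After all clusters are processed, each cell is served by a single subtour, every passing subtour is still passing (it kept its spine and only acquired detours hanging off vertices already on it), and the only new edges so far are the threshold spines.

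\emph{Phase 2 and cost accounting.} After Phase~1 a subtour $t$ may have demand above its original value (and hence possibly above $1$), but only by the small‑terminal demand of the cells on which it was the off‑budget choice; since there are $O_\eps(1)$ cells in $c$ by \cref{fact:constant-number} and each contributes at most $2\Gamma'=2\eps\alpha/\Gamma$, the total excess over all subtours is $O_\eps(1)\cdot 2\Gamma'\ll 1$. For each overfull $t$ I pick a minimal family of its cells whose removal brings $\demand(t)$ back to at most its original value, delete from $t$ exactly the off‑spine edges and leaves of those cells (so $t$ still reaches $e_c$ if it did before), and move every deleted \emph{piece} into the new subtour $\bar t$; by the bound just stated $\demand(\bar t)\le 1$, which yields the correspondence required in Property~2. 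For the cost, \cref{lem:connected} guarantees that each deleted piece is joined to $r_c$ through at least two distinct subtours of the solution, so by \cref{lem:extra-cost-2} all the pieces can be threaded onto the single round trip $\bar t$ at an extra reconnection cost of at most $\tfrac12\cost(S_c)$, each edge used to route $\bar t$ back to $r_c$ being charged at rate $\tfrac12$ against the two paid traversals supplied by \cref{lem:connected}. The internal cost of the pieces merely transfers from their old subtours to $\bar t$, and the threshold spines cost at most $2\eps$ times the total length of the cluster spines, which is at most $2\eps\cdot\cost(S_c)$ since every cluster spine that carries a threshold cell is traversed by a subtour of $S_c$. Altogether $\cost(S^*_c)\le\cost(S_c)+2\eps\cdot\cost(S_c)+\tfrac12\cost(S_c)\le(1.5+2\eps)\cost(S_c)$, which is Property~3.

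\emph{Where the difficulty lies.} The crux is \cref{lem:connected}: proving that every piece removed in Phase~2 is doubly connected to the root. This is exactly where the three‑level decomposition is needed --- a removed piece sits inside a cell that some subtour runs through, and the threshold spine inserted in Phase~1 provides a second, independent connection --- and it is this structural guarantee that replaces the naive per‑piece reconnection bound by $\tfrac12\cost(S_c)$, i.e.\ the source of the final $1.5$ rather than $2$ (see \cref{fig:analysis}). The remaining delicate points, both controlled by the choices of $\alpha$, $\Gamma'$ and the $O_\eps(1)$ cell count, are keeping $\demand(\bar t)\le 1$ and keeping the threshold spines few and cheap enough to fit inside the $2\eps$ slack; the ``at most one threshold cell per cluster'' property is what makes the latter work.
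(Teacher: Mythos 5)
Your proposal captures the right high-level ideas --- the threshold cells, inserting the threshold cell's spine subtour, and using double-connectivity of the removed pieces to bound the reconnection cost by half of $\cost(S_c)$ --- but there is a genuine gap in the feasibility argument, and it is not a small one.

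You bound $\demand(\bar t)$ by arguing that Phase~1 goes ``off budget'' on at most one cell per cluster, so the total excess is at most $(\text{number of cells})\cdot 2\Gamma' = O_\eps(1)\cdot 2\Gamma'\ll 1$. That inequality is false. By \cref{fact:constant-number} the number of cells in a component is roughly $(\Gamma/\Gamma')^2/\eps$, and $\Gamma'=\eps\alpha/\Gamma$, so the product $(\text{number of cells})\cdot 2\Gamma'$ is on the order of $\Gamma^3/(\eps^2\alpha)$, which is astronomically large for small $\eps$, not $\ll 1$. More generally, any bound of the form $(\text{number of clusters or cells})\cdot 2\Gamma'$ is at least $\Omega(\Gamma)$ and cannot be below $1$. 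The paper avoids this trap by bounding the excess \emph{per subtour}, not per cluster or per cell: this is precisely what the Assignment Lemma (\cref{lem:assignment}, from Becker and Paul) delivers, namely that the demand of each subtour increases by at most the \emph{maximum} cluster (resp.\ cell) demand $2\Gamma'$, regardless of how many clusters were routed to it. Since there are only $|S_c|\leq (2\Gamma/\alpha)+1$ subtours and $\Gamma'\cdot\Gamma/\alpha=\eps$, the per-subtour bound of $6\Gamma'$ from Steps~1,~3,~4 sums to less than $13\eps<1$ (\cref{lem:feasible}). Your greedy ``fill up budgets'' scan does not give a per-subtour bound of $O(\Gamma')$: a single subtour can be the off-budget recipient in many different clusters, and the budgets do not compose the way you need across clusters. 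To close the gap you would either have to invoke the Assignment Lemma (which is what the paper does, applying it once at the cluster level for ending subtours and once at the cell level for passing subtours), or supply an alternative scheduling argument achieving the same max-weight excess guarantee.

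Two smaller points. First, your Phase~1 treats the entire consolidation as one pass, but ending cells (which coincide with ending clusters) have no spine, so ``hand the small terminals of each cell to a subtour that already traverses that cell's spine'' leaves them out; the paper handles ending clusters in a separate step (Step~1) before turning to passing cells (Step~3). Second, your cost accounting attributes $2\eps\cdot\cost(S_c)$ to the inserted threshold spine subtours and a clean $\tfrac12\cost(S_c)$ to the reconnection; in the paper, the threshold spine subtours cost only $\eps\cdot\cost(S_c)$ (each cluster spine is at most half of a passing subtour's cost, \cref{lem:threshold-cells}), but the reconnection cost is $(\tfrac12+\eps)\cost(S_c)$ rather than $\tfrac12\cost(S_c)$, because threshold-spine edges are themselves nice edges and must be charged twice when threading $\bar t$ (\cref{lem:extra-cost-2}). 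The two totals coincide at $(1.5+2\eps)\cost(S_c)$, but your reconnection bound of exactly $\tfrac12$ is not justified as stated.
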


\subsection{Construction of $S^*_c$}
\label{sec:S*c}

The construction of $S^*_c$ starts from $S_c$ and proceeds in 5 steps.
Step 2 uses a new concept of \emph{threshold cells} and is the main novelty in the construction.
Step 1 and Step 3 are based on the following lemma due to Becker and Paul~\cite{becker2019framework}.

\begin{lemma}[Assignment Lemma, Lemma~1 in \cite{becker2019framework}]
\label{lem:assignment}
Let $G=(V[G],E[G])$ be an edge-weighted bipartite graph with vertex set $V[G]=A\uplus B$ and edge set $E[G]\subseteq A\times B$, such that each edge $(a,b)\in E[G]$ has a weight $w(a,b)\geq 0$.
For each vertex $b\in B$, let $N(b)$ denote the set of vertices $a\in A$ such that $(a,b)\in E[G]$.
We assume that $N(b)\neq \emptyset$ and the weight $w(b)$ of the vertex $b$ satisfies $0\leq w(b)\leq \sum_{a\in N(b)}w(a,b)$.
Then there exists a function $f: B\to A$ such that each vertex $b\in B$ is assigned to a vertex $a\in N(b)$ and, for each vertex $a\in A$, we have
\[\sum_{b\in B\mid f(b)=a} w(b) - \sum_{b\in B\mid (a,b)\in E[G]} w(a,b)\leq \max_{b\in B} \big\{w(b)\big\}.\]
\end{lemma}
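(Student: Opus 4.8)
\textbf{Proof plan for the Assignment Lemma (\cref{lem:assignment}).}

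The plan is to prove this by a constructive argument: build the function $f$ greedily, processing the vertices of $B$ in an order that guarantees the overload at each $a\in A$ never exceeds $\max_{b\in B}\{w(b)\}$. First I would set notation: for a partial assignment $f$ defined on a subset $B'\subseteq B$ and for $a\in A$, write $\mathrm{load}(a)=\sum_{b\in B'\mid f(b)=a}w(b)$ for the total weight currently routed to $a$, and $\mathrm{cap}(a)=\sum_{b\in B\mid (a,b)\in E[G]}w(a,b)$ for the fixed ``capacity'' of $a$ coming from the edge weights incident to it. The target inequality says that at the end, $\mathrm{load}(a)-\mathrm{cap}(a)\le W$ for every $a$, where $W:=\max_{b\in B}\{w(b)\}$.

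The key idea is a suitable processing rule: maintain the invariant that, among all vertices $a$ that still have ``room'' (i.e.\ $\mathrm{load}(a)<\mathrm{cap}(a)$), we assign the next $b\in B$ to some $a\in N(b)$ with $\mathrm{load}(a)<\mathrm{cap}(a)$ whenever such an $a$ exists; only when every neighbor of $b$ is already at or above its capacity do we assign $b$ arbitrarily within $N(b)$ (which is nonempty by hypothesis). Under this rule, consider any fixed $a\in A$ at the end of the process and let $b_0$ be the last vertex assigned to $a$. At the moment $b_0$ was assigned, either (i) $\mathrm{load}(a)<\mathrm{cap}(a)$ held just before assigning $b_0$, so afterwards $\mathrm{load}(a)<\mathrm{cap}(a)+w(b_0)\le \mathrm{cap}(a)+W$, and no further vertex is assigned to $a$ after $b_0$, so this bounds the final load; or (ii) every neighbor of $b_0$, in particular $a$ itself, already satisfied $\mathrm{load}(a)\ge\mathrm{cap}(a)$. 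Case (ii) is the delicate one — I need to rule it out or handle it. To rule it out, I would additionally require that the processing order is chosen so that $b_0$, the last vertex sent to $a$, is always sent to $a$ at a time when $a$ still has room; concretely, one processes the vertices of $B$ and, whenever a vertex must be assigned to a full neighbor, one instead does it early enough (before that neighbor fills up). The clean way to formalize this is to process $B$ in nonincreasing order of $w(b)$ and, at each step, route $b$ to the neighbor $a\in N(b)$ minimizing $\mathrm{load}(a)-\mathrm{cap}(a)$; then I claim the minimizer always has $\mathrm{load}(a)-\mathrm{cap}(a)<W$ before the assignment unless $B'$ already exhausts all the capacity reachable from $b$, a situation I can charge against $\sum_{a\in N(b)}w(a,b)\ge w(b)$.

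Alternatively — and this is likely the cleanest route to write up — I would prove it via a flow/Hall-type argument: construct an auxiliary network where each $b\in B$ is a source of $w(b)$ units, each $a\in A$ is a sink of capacity $\mathrm{cap}(a)+W$, edges $(b,a)$ for $(a,b)\in E[G]$ have infinite capacity, and then show a feasible fractional assignment exists (using $w(b)\le\sum_{a\in N(b)}w(a,b)$ to verify the Hall/Gale–Hoffman condition on every subset $S\subseteq B$, since $\sum_{b\in S}w(b)\le \sum_{b\in S}\sum_{a\in N(b)}w(a,b)\le \sum_{a\in N(S)}\mathrm{cap}(a)\le\sum_{a\in N(S)}(\mathrm{cap}(a)+W)$), and finally round the fractional assignment to an integral one. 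The rounding step is where the ``$+W = +\max_b w(b)$'' slack is consumed: rounding a fractional assignment on a bipartite graph can be done so that the integral load at each $a$ exceeds its fractional load by strictly less than $\max_b w(b)$ (standard iterative-rounding / cycle-cancelling on the fractional support), which gives exactly the claimed bound.

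\textbf{Main obstacle.} The crux is the rounding (equivalently, case (ii) in the greedy argument): going from a fractional assignment respecting capacity $\mathrm{cap}(a)+W$ to an integral one while keeping the overflow below $W$ at \emph{every} sink simultaneously. I expect to handle this by the classical fact that the fractional-assignment polytope has a basic structure supported on a forest, so on each tree one can round greedily from the leaves inward, and each $a$ receives at most its fractional share plus one extra $b$, whose weight is at most $W$; the strictness ($<W$ rather than $\le W$, matching the lemma's $\le$) follows since the extra $b$ was only partially assigned to $a$ in the fractional solution, so the true increment is strictly less than $w(b)\le W$. I would double-check the degenerate cases ($N(b)$ with a single element, zero weights, $W=0$) separately, as they are immediate but need a line.
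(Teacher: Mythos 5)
The paper does not actually prove this lemma; it is imported verbatim as Lemma~1 of Becker and Paul~\cite{becker2019framework}. So there is no in-paper argument to compare against; what follows is an assessment of your blind attempt on its own merits.

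Your second route (fractional relaxation via a flow/Hall argument, then rounding) is the right kind of proof, and it is essentially the Shmoys--Tardos rounding for the generalized assignment problem specialized to machine-independent processing times $p_{ij}=w(j)$; that is almost certainly what Becker--Paul do as well. Your Hall computation is correct and in fact shows more than you use: it certifies a fractional assignment with load at most $\mathrm{cap}(a)$ at every $a$, so there is no need to inflate the sink capacities to $\mathrm{cap}(a)+W$. As written, that inflation is not just superfluous but harmful: a fractional solution respecting $\mathrm{cap}(a)+W$, followed by a rounding that adds up to another $W$, only yields $\mathrm{cap}(a)+2W$. You should keep the sinks at $\mathrm{cap}(a)$.

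The genuine gap is the rounding step, and you have correctly identified it as the crux but not closed it. Your plan rests on ``the classical fact that the fractional-assignment polytope has a basic structure supported on a forest.'' That fact holds for the transportation polytope (equality constraints on both sides), but \emph{not} for the polytope you actually have, namely $\{x\ge 0: \sum_a x_{ab}=1,\ \sum_b w(b)x_{ab}\le \mathrm{cap}(a)\}$. A basic solution of this polytope has support of size at most $|B|+|A_t|$ (where $A_t$ is the set of tight capacity constraints), which only forces each connected component of the support graph to be a tree \emph{or unicyclic}; when the $w(b_i)$ along a potential alternating cycle are unequal, the cycle cannot be cancelled without leaving the polytope, so extreme points with cycles really occur. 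Your ``round greedily from the leaves inward'' argument is fine on tree components (process support-leaves; a degree-1 $b$ is forced, and a degree-1 $a$ receives its single fractional $b$ with increment $w(b)(1-x_{ab})<W$; every $a$ receives at most one such partial $b$), but it never reaches a conclusion on the cycle, because a cycle has no leaf. You need one extra idea there: on an alternating cycle $a_1b_1a_2\cdots a_\ell b_\ell a_1$ of tight vertices, orient it and set $f(b_i)=a_i$; tightness at $a_i$ gives
\[
\mathrm{load}_f(a_i)-\mathrm{cap}(a_i)\ \le\ w(b_i)-\bigl(w(b_{i-1})x_{a_ib_{i-1}}+w(b_i)x_{a_ib_i}\bigr)\ =\ w(b_i)(1-x_{a_ib_i})-w(b_{i-1})x_{a_ib_{i-1}}\ <\ W,
\]
since $x_{a_ib_i}>0$ on the support. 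With that added, and with the sink capacities corrected to $\mathrm{cap}(a)$, the plan goes through. Your first route (the greedy on $B$) is not in a provable state as written: you yourself flag case~(ii) as unresolved, and the heuristic ``process by nonincreasing $w(b)$, send to the least-overloaded neighbor'' does not come with an invariant that bounds the final overload; I would not rely on it without a proof.
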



\paragraph{Step 1: Combining ending subtours within each cluster.}
Let $A_0$ denote $S_c$.
We define a weighted bipartite graph $G$ in which the vertices in one part represent the subtours in $A_0$ and the vertices in the other part represent the clusters in $c$.\footnote{With a slight abuse, we identify a vertex in $G$ with either a subtour in $A_0$ or a cluster in $c$.}
There is an edge in $G$ between a subtour $a\in A_0$ and a cluster $x$ in $c$ if and only if $a$ contains an ending subtour $t$ in $x$; the weight of the edge is defined to be $\demand(t)$.
For each cluster $x$ in $c$, we define the weight of $x$ in $G$ to be the sum of the weights of its incident edges in $G$.
We apply the Assignment Lemma (\cref{lem:assignment}) to the graph $G$ (deprived of the vertices of degree 0) and obtain a function $f$ that maps each cluster $x$ in $c$ to some subtour $a\in A_0$ such that $(a,x)$ is an edge in $G$.

We construct a set of subtours $A_1$ as follows: for every cluster $x$ in $c$ and for every subtour $a\in A_0$ containing an ending subtour $t$ in $x$, the subtour $t$ is removed from $a$ and added to the subtour $f(x)$.
Observe that each resulting subtour in $A_1$ is connected.
From the construction, \emph{for each cluster $x$, at most one subtour in $A_1$ has an ending subtour in $x$}.
In particular, for any \emph{ending cell}, which is  equivalent to an ending cluster, a single subtour in $A_1$ visits all small terminals in that cell.

\paragraph{Step 2: Extending ending subtours within threshold cells.}

Let $x$ be any passing cluster in $c$ such that there is a subtour in $A_1$ containing an ending subtour in $x$.
From Step~1 of the construction, such a subtour in $A_1$ is unique; let $t_e$ denote the corresponding ending subtour in $x$.
Define the \emph{threshold cell} of cluster $x$ to be the deepest cell in $x$ containing vertices of $t_e$. See \cref{fig:threshold_before}.
We add to $t_e$ the part of the \emph{spine subtour in the threshold cell of $x$} that does not belong to $t_e$, resulting in a subtour $\tilde t_e$; see \cref{fig:threshold_after}.

Let $A_2$ denote the resulting set of subtours in $c$ after the extension within all threshold cells.
From the construction, \emph{for each passing cell $s$, all subtours in $s$ that are contained in $A_2$ are passing subtours in $s$.}

\begin{figure}[t]
\centering
    \begin{subfigure}{0.45\textwidth}
    \centering
    \includegraphics[scale=0.35]{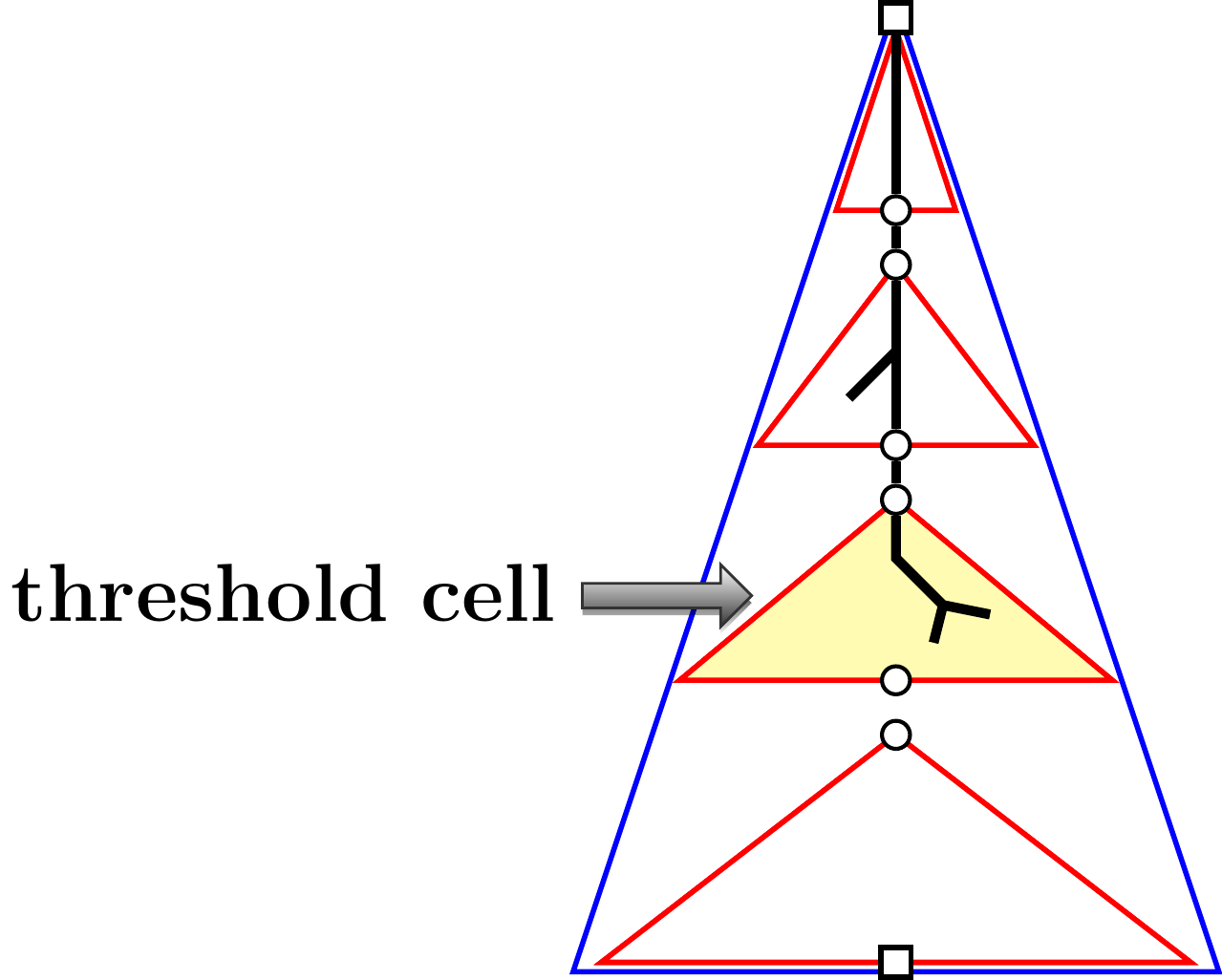}
    \caption{Subtour before extension.}
    \label{fig:threshold_before}
    \end{subfigure}
    \hfill
    \begin{subfigure}{0.45\textwidth}
    \centering
    \includegraphics[scale=0.35]{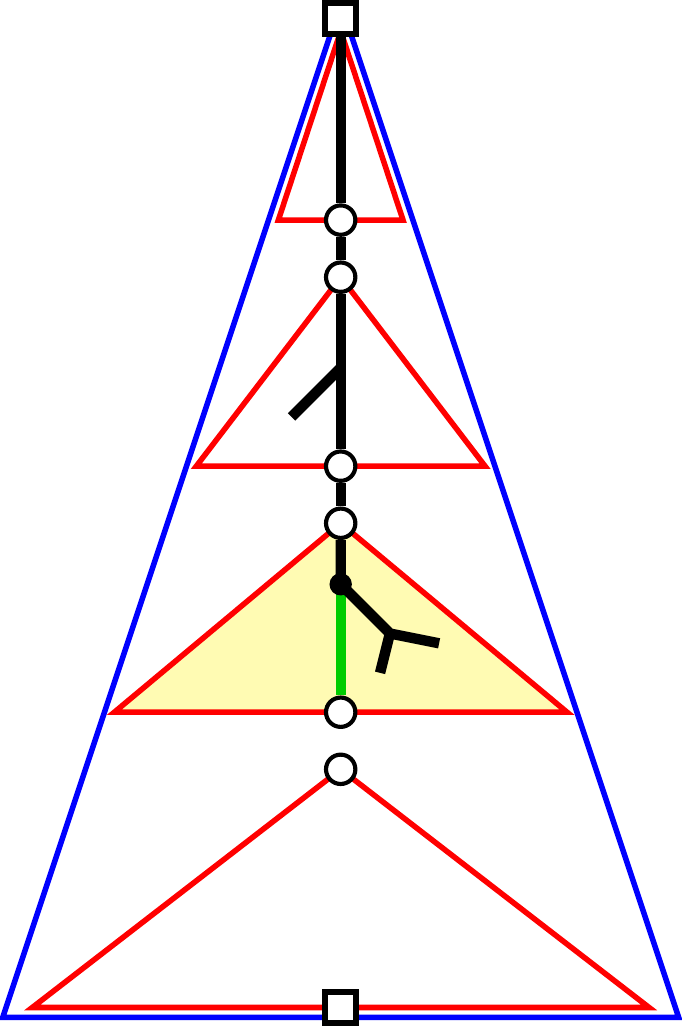}
    \caption{Subtour after extension.}
    \label{fig:threshold_after}
    \end{subfigure}
    \caption{\small The threshold cell and the extension of an ending subtour.
    The outermost triangle in blue represents a cluster $x$.
    In \cref{fig:threshold_before}, the black segments represent the ending subtour $t_e$ in $x$.
    The \emph{threshold cell} of cluster $x$ is the \emph{deepest} cell visited by $t_e$ and is represented by the yellow triangle.
    In \cref{fig:threshold_after}, subtour $t_e$ is extended within the threshold cell: the green segment represents the part of the \emph{spine subtour of the threshold cell} that is added to $t_e$, resulting in a subtour $\tilde t_e$.}
    \label{fig:threshold}
\end{figure}

\paragraph{Step 3: Combining passing subtours within each passing cell.}
We define a weighted bipartite graph $G'$ in which the vertices in one part represent the subtours in $A_2$ and the vertices in the other part represent the passing  cells in $c$.\footnote{With a slight abuse, we identify a vertex in $G'$ with either a subtour in $A_2$ or a passing cell in $c$.}
There is an edge in $G'$ between a subtour $a\in A_2$ and a passing cell $s$ in $c$ if and only if $a$ contains a non-spine passing subtour $t$ in $s$;  the weight of the edge is defined to be the total demand of the small terminals on $t$.
For each passing cell $s$ in $c$, we define the weight of $s$ in $G'$ to be the sum of the weights of its incident edges in $G'$.
We apply the Assignment Lemma (\cref{lem:assignment}) to the graph $G'$ (deprived of the vertices of degree 0) and obtain a function $f'$ that maps each passing cell $s$ in $c$ to some subtour $a\in A_2$ such that $(a,s)$ is an edge in $G'$.

We construct a set of subtours $A_3$ as follows:
for every passing cell $s$ in $c$ and for every subtour $a\in A_2$ containing a non-spine passing subtour $t$ in $s$, the subtour $t$ is removed from $a$ except for the spine subtour of $s$; the removed part is added to the subtour $f'(s)$.
Observe that each resulting subtour in $A_3$ is connected.
From the construction, \emph{for each passing cell $s$, a single subtour in $A_3$ visits all small terminals in $s$}.

\paragraph{Step 4: Correcting subtour capacities.}
For each subtour $t_3$ in $A_3$, let $t_0$ denote the corresponding subtour in $A_0$.
As soon as the demand of $t_3$ is greater than the demand of $t_0$, we repeatedly modify $t_3$ as follows: find a terminal $v$ that is \emph{visited by $t_3$ but not visited by $t_0$}; let $s$ denote the cell containing $v$ and let $t_s$ denote the subtour of $t_3$ in cell $s$;
if $s$ is an ending cell, then remove $t_s$ from $t_3$; and if $s$ is a passing cell, then remove $t_s$ from $t_3$ except for the spine subtour of~$s$.

Let $A_4$ denote the resulting set of modified subtours.
Observe that each subtour in $A_4$ is connected.
From the construction, \emph{the demand of each subtour in $A_4$ is at most the demand of the corresponding subtour in $A_0$.}
Note that the big terminals in each subtour in $A_4$ are the same as the big terminals in the corresponding subtour in $A_0$.\footnote{Any big terminal cannot be removed, since it is the exit vertex of some cell, thus belongs to the spine of that cell.
}

Let $\mathcal{R}$ denote the set of the removed pieces.
We claim that the total demand of the pieces in $\mathcal{R}$ is at most 1 (\cref{lem:feasible}).

\paragraph{Step 5: Creating an additional subtour.}
We connect all pieces in $\mathcal{R}$ by a single subtour in component $c$; let $\bar{t}$ be that subtour.

\vspace{2mm}
Finally, let $S^*_c$ denote $A_4\cup \{\bar{t}\,\}$.

\subsection{Analysis on the Cost of $S^*_c$}
\label{sec:analysis-S*c}
From the construction of $S_c^*$, we observe that the cost of $S_c^*$ equals the cost of $S_c$ plus the extra costs in Step~2 and in Step~5 of the construction, denoted by $W_2$ and $W_5$, respectively.

To analyze the extra costs, first, in a preliminary lemma (\cref{lem:threshold-cells}), we bound the overall cost of the spines of the threshold cells.
\cref{lem:threshold-cells} will be used to analyze both $W_2$ (\cref{cor:extra-cost-1}) and $W_5$ (\cref{lem:extra-cost-2}).
\begin{lemma}
\label{lem:threshold-cells}
The overall cost of the spines of all threshold cells in the component $c$ is at most $(\eps/2)\cdot \cost(S_c)$.
\end{lemma}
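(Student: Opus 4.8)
The plan is to charge the spine of each threshold cell to the spine of the cluster that contains it, and then to charge the total cost of the cluster spines to $\cost(S_c)$. First recall that, by Step~1 of the construction, for each cluster at most one subtour of $A_1$ contains an ending subtour in it, so the threshold cell of a cluster is well defined and \emph{each cluster contains at most one threshold cell}. Moreover, by the construction of cells in \cref{sec:decomposition-level-3}, the spine of any cell of a passing cluster $x$ has cost at most $\eps$ times the cost of the spine of $x$. Hence
\[
\sum_{\text{threshold cells of }c}\cost(\text{spine})\;\le\;\sum_{\substack{\text{clusters }x\text{ of }c\\\text{with a threshold cell}}}\eps\cdot\cost(\text{spine of }x)\;\le\;\eps\sum_{\text{clusters }x\text{ of }c}\cost(\text{spine of }x),
\]
and it remains to prove that $\sum_x\cost(\text{spine of }x)\le\tfrac12\cost(S_c)$.

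For this last inequality I would use that the spines of distinct clusters are pairwise edge-disjoint — clusters partition the edges of each block by \cref{lem:cluster-decomposition}, and blocks partition the edges of $c$ by the construction of \cref{sec:decomposition-level-1} — together with the claim that every edge lying on some cluster spine is traversed at least twice, in total, by the subtours of $S_c$; summing this claim over all such edges then yields $\sum_x\cost(\text{spine of }x)\le\tfrac12\cost(S_c)$. To prove the claim, let $(u,v)$ be an edge on the spine of a passing cluster $x$, with $v$ its deeper endpoint. Then the exit vertex $e_x$ of $x$ lies at $v$ or strictly below it, and since $e_x$ is shared with another cluster (\cref{lem:cluster-decomposition}) there are edges of $c$ below $v$; following a maximal downward path from $v$ we reach a leaf of $c$, which by the preprocessing is either a terminal of $c$ or the exit vertex $e_c$ of $c$. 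In the first case $S_c$ visits that terminal and therefore traverses $(u,v)$ at least twice (once down, once up). In the second case $c$ is internal, so there are terminals in the components hanging below $e_c$; since $S_c$ arises as the restriction to $c$ of a feasible solution, it contains a passing subtour, which necessarily traverses the entire path from $r_c$ to $e_c$, and in particular $(u,v)$, again at least twice in total.

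Combining the two bounds gives $\sum_{\text{threshold cells}}\cost(\text{spine})\le\eps\cdot\tfrac12\cost(S_c)=(\eps/2)\cost(S_c)$, as claimed. The step I expect to require the most care is the two-traversals claim, and within it the treatment of the (possibly terminal-free) stretch of the $r_c$--$e_c$ path, where one has to appeal to the presence of a passing subtour in $S_c$; the remaining ingredients — edge-disjointness of the cluster spines, the at-most-one-threshold-cell-per-cluster fact, and the $\eps$-comparison between a cell's spine and its cluster's spine — are immediate from the multi-level decomposition.
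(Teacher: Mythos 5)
Your skeleton matches the paper's — at most one threshold cell per cluster, the $\eps$-comparison between a cell's spine and its cluster's spine, and a ``paid for twice by $S_c$'' argument — but the way you assemble it introduces a genuine gap. You relax the sum over threshold cells to the sum over \emph{all} clusters of $c$ and then need the claim that \emph{every} edge on \emph{every} cluster spine is traversed at least twice by $S_c$. Under the hypotheses of \cref{thm:local}, $S_c$ is just some set of at most $(2\Gamma/\alpha)+1$ subtours covering the terminals of $c$; nothing forces any subtour of $S_c$ to traverse a spine edge whose deeper endpoint is the exit vertex $e_c$ of the component (for every other spine edge your descent argument is fine, since the two-children/leaves-are-terminals preprocessing guarantees a terminal of $c$ below the deeper endpoint). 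Your patch for that case invokes ``$S_c$ arises as the restriction to $c$ of a feasible solution'', which is not a hypothesis of the Local Theorem; worse, it is not even how $S_c$ is formed in \cref{thm:global}, where only the tours of $S$ that \emph{visit terminals in} $c$ are restricted into $S_c$ — a tour that merely transits through $c$ toward deeper components contributes nothing to $S_c$, so the passing subtour you appeal to need not exist. Concretely, if the cluster whose exit is $e_c$ has a heavy spine (say a long edge into $e_c$) and no subtour of $S_c$ reaches $e_c$, your intermediate inequality $\sum_x\cost(\mathrm{spine\ of\ }x)\le\tfrac12\cost(S_c)$ is simply false, even in instances where the lemma's conclusion is unproblematic (e.g.\ when that cluster contains no threshold cell), because you charge spines of clusters that $S_c$ never enters.

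The paper avoids summing over all clusters: for each threshold cell it charges its spine to the passing subtour $t_x$ of $S_c$ inside the \emph{same} cluster $x$ (that subtour contains every spine edge of $x$ twice, so the cluster spine costs at most $\cost(t_x)/2$), and since each cluster contains at most one threshold cell and clusters are edge-disjoint, these charges hit disjoint parts of $S_c$. (The existence of $t_x$ is asserted there in one line; for every cluster whose exit vertex differs from $e_c$ it is justified by exactly your descend-to-a-terminal argument, and that is the only place where the two proofs lean on the same unstated corner case.) To repair your write-up, restrict the per-edge accounting to the clusters that actually contain a threshold cell and argue double coverage of their spines by the passing subtour of $S_c$ within that cluster, rather than claiming double coverage of all cluster spines by $S_c$ globally.
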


\begin{proof}
Consider any threshold cell $s$.
Let $x$ be the passing cluster that contains $s$.
As observed in \cref{sec:decomposition-level-3}, the cost of the spine of cell $s$ is at most an $\eps$ fraction of the cost of the spine of the cluster $x$.
Since $x$ is a passing cluster, at least one subtour in $S_c$ contains a passing subtour in $x$; let $t_x$ denote that passing subtour in $x$.
Observe that $t_x$ contains each edge of the spine of cluster $x$ in both directions (\cref{def:subtour-clusters-cells}), so the cost of the spine of $x$ is at most $\cost(t_x)/2$.
Thus the cost of the spine of $s$ is at most $(\eps/2)\cdot\cost(t_x)$.
We \emph{charge} the cost of the spine of $s$ to $t_x$.

From the construction, each cluster contains at most one threshold cell.
Thus the costs of the spines of all threshold cells are charged to disjoint parts of $S_c$.
The claim follows.
\end{proof}

Observe that the extra cost in Step~2 of the construction is at most the overall cost of the spine subtours in all threshold cells in the component $c$, which equals twice the overall cost of the spines of those cells by \cref{def:subtour-clusters-cells}.

\begin{corollary}
\label{cor:extra-cost-1}
The  extra cost $W_2$ in Step~2 of the construction is at most $\eps\cdot\cost(S_c)$.
\end{corollary}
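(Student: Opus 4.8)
The plan is to derive the bound directly from the observation stated just above the corollary together with \cref{lem:threshold-cells}. First I would recall precisely what Step~2 does: for each passing cluster $x$ that carries an ending subtour $t_e$ in $A_1$ (there is at most one such $t_e$ per cluster, by Step~1), we extend $t_e$ into $\tilde t_e$ by adjoining only the portion of the spine subtour of the threshold cell of $x$ that $t_e$ does not already traverse. Hence the extra cost charged to cluster $x$ is at most the cost of the entire spine subtour of the threshold cell of $x$. Since each cluster contains at most one threshold cell, summing over all passing clusters gives
\[
W_2 \;\le\; \sum_{s}\cost\big(\text{spine subtour of } s\big),
\]
where the sum ranges over all threshold cells $s$ in the component $c$.

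Second, I would translate the spine subtours into spines. By \cref{def:subtour-clusters-cells}, the spine subtour of a passing cell consists of the spine of that cell traversed in both directions, so its cost is exactly twice the cost of the spine of the cell. Thus the right-hand side above equals $2\sum_{s}\cost(\text{spine of }s)$, and by \cref{lem:threshold-cells} this is at most $2\cdot(\eps/2)\cdot\cost(S_c)=\eps\cdot\cost(S_c)$, which is the claimed inequality.

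I do not anticipate any real obstacle here: the corollary is essentially a restatement of \cref{lem:threshold-cells} after accounting for the factor $2$ between a spine and its spine subtour. The only points that need a line of justification are that Step~2 adds at most one whole spine subtour per threshold cell (immediate from the construction) and that distinct threshold cells belong to distinct clusters, so that these contributions may be summed without any double counting.
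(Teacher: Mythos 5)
Your proposal is correct and follows essentially the same route as the paper: bound $W_2$ by the total cost of the spine subtours of the threshold cells, note that each spine subtour costs twice the corresponding spine by \cref{def:subtour-clusters-cells}, and apply \cref{lem:threshold-cells}. The additional justifications you mention (one extension per threshold cell, at most one threshold cell per cluster) are exactly the facts the paper relies on from the construction and from \cref{lem:threshold-cells}.
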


Next, we bound the extra cost in Step~5 of the construction.


\begin{fact}
\label{fact:same-type}
Let $t$ denote any subtour in $S_c$.
Let $x$ denote any cluster in $c$.
Let $r_c$ and $r_x$ denote the root vertices of component $c$ and of cluster $x$, respectively;
let $e_x$ denote the exit vertex of cluster $x$.
If the $r_c$-to-$r_x$ path (resp.\ the $r_c$-to-$e_x$ path) belongs to $t$, then that path belongs to the corresponding subtour of $t$ throughout the construction in \cref{sec:S*c}.
\end{fact}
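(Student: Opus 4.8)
The plan is to fix a subtour $t\in S_c$ and, say, the $r_c$-to-$r_x$ path $P$, describe $P$ structurally, and then check that none of the five steps in \cref{sec:S*c} ever deletes an edge of $P$ from the correspondent of $t$; the $r_c$-to-$e_x$ case is handled identically, with one extra cluster appended below. Since $r_c$ is the vertex of $c$ closest to the depot, $P$ descends monotonically, so it passes through a sequence of distinct clusters $y_1,\dots,y_m$, entering each $y_i$ at its root (the vertex of $y_i$ closest to the depot) and leaving it at its exit; hence $r_{y_1}=r_c$ (since $r_c$ is a split vertex of the block and cluster decompositions), $e_{y_i}=r_{y_{i+1}}$ for $i<m$, $e_{y_m}=r_x$, and --- since $T$ is a tree --- the part of $P$ inside $y_i$ is exactly the spine of $y_i$. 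In particular every $y_i$ is a \emph{passing} cluster, so all of its cells are passing cells, and by the construction in \cref{sec:decomposition-level-3} the spine of $y_i$ decomposes into pieces, each of which is the spine of a cell of $y_i$, separated by the at most $(1/\eps)-1$ cut edges, which lie in no cell. Thus every edge of $P$ either lies inside a passing cell $s$ as an edge of the spine of $s$, or is a cut edge lying in no cell. For the $r_c$-to-$e_x$ path, one appends the cluster $x$ itself (also passing) to $y_1,\dots,y_m$ and the same description holds.

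Next I would go through the steps. Step~2 and Step~5 only add edges (and Step~5 does not even modify $A_4$), so they cannot delete an edge of $P$. Step~3 and Step~4 delete from a subtour only the non-spine part of a passing subtour inside a passing cell, hence always retain the spine subtour of every passing cell; since the edges of $P$ lying in a cell form part of the spine of that cell, all such cells are passing, and the cut edges of $P$ lie in no cell, no edge of $P$ is deleted, while the remaining deletions (inside ending cells, or inside cells not met by $P$) affect only edges disjoint from $P$. The remaining step is Step~1, which moves ending subtours between subtours, and here the key observation is that \emph{if a subtour contains the full spine of a cluster $y$, then it has no ending subtour in $y$}: the edges of that subtour inside $y$ split into connected components, each reachable from outside $y$ only through $r_y$ or $e_y$ and hence containing $r_y$ or $e_y$; the component carrying the spine contains both, so it is a passing subtour and is the unique component containing $r_y$, whereas an ending subtour would be a further component containing $r_y$, a contradiction. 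Applying this to $t$, which contains the spine of each $y_i$ (and of $x$ in the $r_c$-to-$e_x$ case), shows that Step~1 removes nothing from the correspondent of $t$ inside any $y_i$ (or $x$); the ending subtours it removes from, or adds to, the correspondent of $t$ lie in other clusters and thus either affect only edges disjoint from $P$ (removal) or merely enlarge the correspondent (addition). Inducting over the five steps then gives the claim.

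The step I expect to be the real obstacle is Step~1 --- concretely, ruling out that an initial segment of some spine on $P$ is carried away inside an ending subtour. This is exactly what the observation above prevents, and it rests on the structural fact that each connected component of the restriction of a subtour to a cluster is anchored at the cluster's root or exit. Everything else --- that $P$ visits only passing cells, and that Steps~3 and 4 always spare spine subtours --- is routine bookkeeping once the structure of $P$ is in hand.
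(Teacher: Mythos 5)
The paper states \cref{fact:same-type} without proof, treating it as immediate from the construction, so there is no paper argument to compare against. Your proof is correct, and it is essentially the argument the authors must have in mind. The structural description of the $r_c$-to-$r_x$ path as a concatenation of cluster spines (interleaved with the cut edges of the cell decomposition, which lie in no cell) is accurate, and once that is in hand the claims about Steps 2--5 really are routine: Step~2 and Step~5 are additive, and Steps 3 and 4 always preserve spine subtours of passing cells, which is all of $P$. You correctly identified that the only non-obvious point is Step~1, and the observation you use to handle it --- that the restriction of a subtour in $c$ to a cluster $y$ has a unique connected component incident to $r_y$, so a subtour containing the spine of $y$ has a passing subtour and hence no ending subtour in $y$ --- is exactly the structural fact that makes Step~1 leave $P$ untouched. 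It relies, as you note, on each component of the restriction being anchored at $r_y$ or $e_y$, which in turn uses that a cluster meets the rest of the component only at those two vertices; this is Lemma~\ref{lem:cluster-decomposition}, Property~2. The one small point worth making explicit if you were to write this up formally is the interpretation of ``$a$ contains an ending subtour in $x$'': it should be read as the connected component of $a$'s edges in $x$ containing $r_x$, so that there is at most one such object per cluster, which is implicitly how the bipartite graph $G$ in Step~1 is built. Under that reading your argument closes cleanly.
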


\begin{definition}[nice edges]
We say that an edge $e$ in component $c$ is \emph{nice} if $e$ belongs to \emph{at least two} subtours in $A_2$.
\end{definition}

The next Lemma is the main novelty in the analysis.

\begin{lemma}
\label{lem:connected}
Any piece in $\mathcal{R}$ is connected to the root $r_c$ of component $c$ through nice edges in $c$.
\end{lemma}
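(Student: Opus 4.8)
The plan is to trace the life of a removed piece $P \in \mathcal{R}$ backwards through the five construction steps and show that its root is attached to $r_c$ by a sequence of nice edges. Recall that $P$ is a subtour in some cell $s$, removed in Step~4 from a subtour $t_4$ (equivalently $t_3 \in A_3$) because $t_3$'s demand exceeded that of its ancestor $t_0 \in A_0$; moreover $P$ contains a terminal $v$ visited by $t_3$ but not by $t_0$. Let $r_s$ denote the root vertex of the cell $s$ containing $P$; it suffices to show the $r_c$-to-$r_s$ path in $c$ consists only of nice edges, i.e.\ edges lying on at least two subtours in $A_2$.

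First I would argue that the $r_c$-to-$r_s$ path lies on $t_3$, hence (by \cref{fact:same-type} applied along the path of cluster roots/exit vertices on the way from $r_c$ down to $r_s$, together with the fact that the construction only ever deletes cell-interior pieces and never the connecting spine segments between cells) this same path lies on the corresponding subtour $t_2 \in A_2$. That gives us \emph{one} subtour in $A_2$ through every edge of the $r_c$-to-$r_s$ path. The heart of the argument — and the step I expect to be the main obstacle — is producing a \emph{second} subtour in $A_2$ through each such edge. Here the key point is that $t_3$ acquired the "extra" terminal $v$ via the reassignments in Step~1 or Step~3: either $v$'s cell's ending subtour was moved to $f(x)=t_3$ in Step~1, or a non-spine passing piece containing $v$ was moved to $f'(s)=t_3$ in Step~3. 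In the Step~3 case, before the move that piece belonged to some subtour $a \in A_2$ with $a \neq t_2$ (since $t_3$'s demand strictly increased, the reassigned piece came from a genuinely different subtour); and $a$, being a subtour reaching into cell $s$ from $r_c$, must traverse the entire $r_c$-to-$r_s$ path. So every edge of that path lies on both $t_2$ and $a$, two distinct subtours of $A_2$, hence is nice. In the Step~1 case the reassigned ending subtour lived inside some cluster $x$ containing the cell $s$; one must check that the \emph{other} subtour $a$ that originally carried it (again distinct from the ancestor of $t_3$) reaches $r_s$ — this uses that an ending subtour in $x$ together with its host subtour's path from $r_c$ to $r_x$ forms a connected object, and that $s$ is a descendant cell, so $a$'s path from $r_c$ down into $s$ passes through $r_s$. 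The subtlety to handle carefully is that $A_2$ differs from $A_0$ and $A_1$ by the Step~2 extensions, so I must confirm that "distinct in $A_0$" propagates to "distinct in $A_2$" and that the Step~2 extension of an ending subtour into its threshold cell does not destroy the $r_c$-to-$r_s$ connectivity of the second witness; since Step~2 only \emph{adds} edges, this direction is safe.

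A second delicate point is the boundary between clusters and cells: the removed piece $P$ is a cell-interior piece, so $r_s$ might be a branching vertex or a big terminal, and the $r_c$-to-$r_s$ path may weave through several cells and clusters. I would organize the proof by walking down this path one cluster at a time, and within each cluster one cell at a time, invoking \cref{fact:same-type} to carry the "first witness" $t_2$ along, and repeatedly invoking the Step~1/Step~3 reassignment structure to supply the "second witness" on each segment. The cleanest formulation is probably: show that for the specific subtour $t_3$ and the specific extra terminal $v$, there is a single second subtour $a \in A_2$, $a \neq t_2$, whose path from $r_c$ covers the whole $r_c$-to-$r_s$ segment; then niceness of every edge on that segment is immediate. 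Once $r_s$ is reached through nice edges and $P$ hangs off $r_s$, the lemma follows, since the statement only asserts connectivity of $P$ to $r_c$ through nice edges (the edges strictly inside $P$ need not themselves be nice).
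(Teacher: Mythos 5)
There is a genuine gap, and it sits exactly where the lemma's difficulty lies. Niceness is defined with respect to $A_2$, but your ``second witness'' is certified only in $A_0$. In your Step-1 case, the donor subtour $a$ that originally carried the ending subtour containing $v$ loses precisely that ending subtour in Step~1 (it is detached from $a$ and attached to $f(x)$); the descendant of $a$ in $A_2$ may have had no other presence in the cluster $x$, so it need not contain the $r_x$-to-$r_s$ portion of the path at all. The subtlety you flag (``Step~2 only adds edges, so this direction is safe'') addresses the harmless direction; the dangerous direction is that Steps~1 and~3 \emph{remove} material from the donors, so ``distinct and reaching $r_s$ in $A_0$'' does not transfer to $A_2$. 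The paper's proof does not use the donor inside the cluster at all: it first observes that at least two subtours of $S_c$ visit terminals of the cluster $x$ containing the piece (otherwise nothing of $x$ could land in $\mathcal{R}$), which by \cref{fact:same-type} makes the $r_c$-to-$r_x$ path nice; inside $x$ it then does a case analysis, and in the hard case (exactly one passing subtour $t_p$ of $S_c$ in $x$) the two $A_2$-witnesses are the subtour containing $t_p$ and the subtour containing the \emph{combined} ending subtour $t_e$ --- and when the cell $s$ is the threshold cell of $x$, the witness is the Step-2 extension $\tilde t_e$, whose added spine subtour of $s$ is exactly what makes the spine of $s$ nice. Your argument never invokes the threshold cell or the Step-2 extension, which is the heart of the lemma; without it the claim fails, for instance, when the removed piece lies in the threshold cell and attaches to the spine of $s$ below the deepest point reached by $t_e$ before the extension, since there only the subtour containing $t_p$ traverses those spine edges in $A_2$.

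A secondary but real issue is your closing step ``once $r_s$ is reached through nice edges and $P$ hangs off $r_s$, the lemma follows.'' For a passing cell the removed piece is the non-spine part of the subtour in $s$, so it attaches to \emph{interior} vertices of the spine of $s$, not to $r_s$; the spine edges of $s$ between $r_s$ and those attachment points (the paper in fact makes the whole $r_x$-to-$e_s$ path nice) must also be nice, and this is again where the threshold-cell extension is needed. Relatedly, Step~4 removes \emph{everything} non-spine of $t_3$ in the chosen cell, including branches whose terminals were in $t_0$ all along, so a witness tailored to the single extra terminal $v$ does not automatically cover all attachment points of the piece.
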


\begin{proof}
Consider any piece  $q\in \mathcal{R}$.
Let $s$ be the cell containing $q$.
Let $x$ be the cluster containing $q$.
See \cref{fig:analysis}.
Let $r_s$ and $r_x$ denote the root vertices of cell $s$ and of cluster $x$, respectively.
Observe that the terminals in $x$ are visited by at least two subtours in $S_c$.
This is because, if all terminals in cluster $x$ are visited by a single subtour in $S_c$, then those terminals belong to the corresponding subtour throughout the construction, thus none of those terminals belongs to a piece in $\mathcal{R}$, contradiction.
Thus the $r_c$-to-$r_x$ path belongs to at least two subtours in $S_c$.
By \cref{fact:same-type}, the $r_c$-to-$r_x$ path belongs to at least two subtours in $A_2$, thus every edge on the $r_c$-to-$r_x$ path is nice.
It suffices to show the following Claim:
\begin{center}
\emph{Piece $q$ is connected to vertex $r_x$ through nice edges in $c$.\qquad (*)}
\end{center}

\noindent There are two cases:

\emph{Case 1: $x$ is an ending cluster.} See \cref{fig:analysis-1}.
From the decomposition in \cref{sec:decomposition-level-3}, $s$ is an ending cell and $s$ equals $x$.
Piece $q$ is an ending subtour in $x$ and in particular contains $r_x$.
Claim (*) follows trivially.

\emph{Case 2: $x$ is a passing cluster.}
Let $e_s$ and $e_x$ denote the exit vertices of cell $s$ and of cluster $x$, respectively.
Observe that at least one subtour in $S_c$ contains a passing subtour in $x$.
There are two subcases.

\emph{Subcase 2(i): At least two subtours in $S_c$ contain passing subtours in $x$.} See \cref{fig:analysis-2}.
Then the $r_c$-to-$e_x$ path belongs to at least two subtours in $S_c$.
By \cref{fact:same-type}, the $r_c$-to-$e_x$ path belongs to at least two subtours in $A_2$, thus each edge on the spine of $x$ is nice.
Since piece $q$ contains a vertex on the spine of $x$, Claim (*) follows.

\emph{Subcase 2(ii): Exactly one subtour in $S_c$ contains a passing subtour in $x$.} See \cref{fig:analysis-3,fig:analysis-4}.
Let $t_p$ denote that passing subtour in $x$.
As observed previously, at least two subtours in $S_c$ visit terminals in $x$, so there must be at least one subtour in $S_c$ that contains an ending subtour in $x$.
Let $t_e^{1},\dots, t_e^{m}$ (for some $m\geq 1$) denote the ending subtours in $x$ contained in the subtours in $S_c$.
In Step~1 of the construction, the $m$ ending subtours are combined into a single ending subtour, denoted by $t_e$ (recall that the threshold cell of $x$ is defined with respect to $t_e$);
and in Step~2 of the construction, subtour $t_e$ is extended to a subtour $\tilde t_e$ (\cref{fig:threshold}).
Note that the passing subtour $t_p$ remains unchanged in Steps~1~and~2 of the construction.
We observe that cell $s$ is either above or equal to the threshold cell of $x$.
This is because, if cell $s$ is below the threshold cell of $x$, then all terminals in $s$ are visited by a single subtour in $S_c$, i.e.,\ the subtour $t_p$, so those terminals belong to the corresponding subtour of $t_p$ throughout the construction, thus none of those terminals belongs to a piece in $\mathcal{R}$, contradiction.
Hence the following two subsubcases.

\emph{Subsubcase~2(ii)($\alpha$): $s$ is above the threshold cell of $x$}, see \cref{fig:analysis-3}.
Each edge on the $r_x$-to-$e_s$ path belongs to both subtours $t_p$ and $t_e$, hence is nice.
Since $q$ contains some vertex on the spine of $s$, Claim (*) follows.

\emph{Subsubcase~2(ii)($\beta$): $s$ equals the threshold cell of $x$}, see \cref{fig:analysis-4}.
Observe that each edge on the $r_x$-to-$e_s$ path belongs to $\tilde t_e$ due to the extension of the ending subtour $t_e$ within the threshold cell (Step~2 of the construction).
Thus each edge on the $r_x$-to-$e_s$ path belongs to both subtours $t_p$ and $\tilde t_e$, hence is nice.
Since $q$ contains some vertex on the spine of $s$, Claim (*) follows.
\end{proof}

\begin{figure}[t]
\centering
\vspace{20mm}
    \begin{subfigure}[t]{0.24\textwidth}
    \centering
    \includegraphics[scale=0.35]{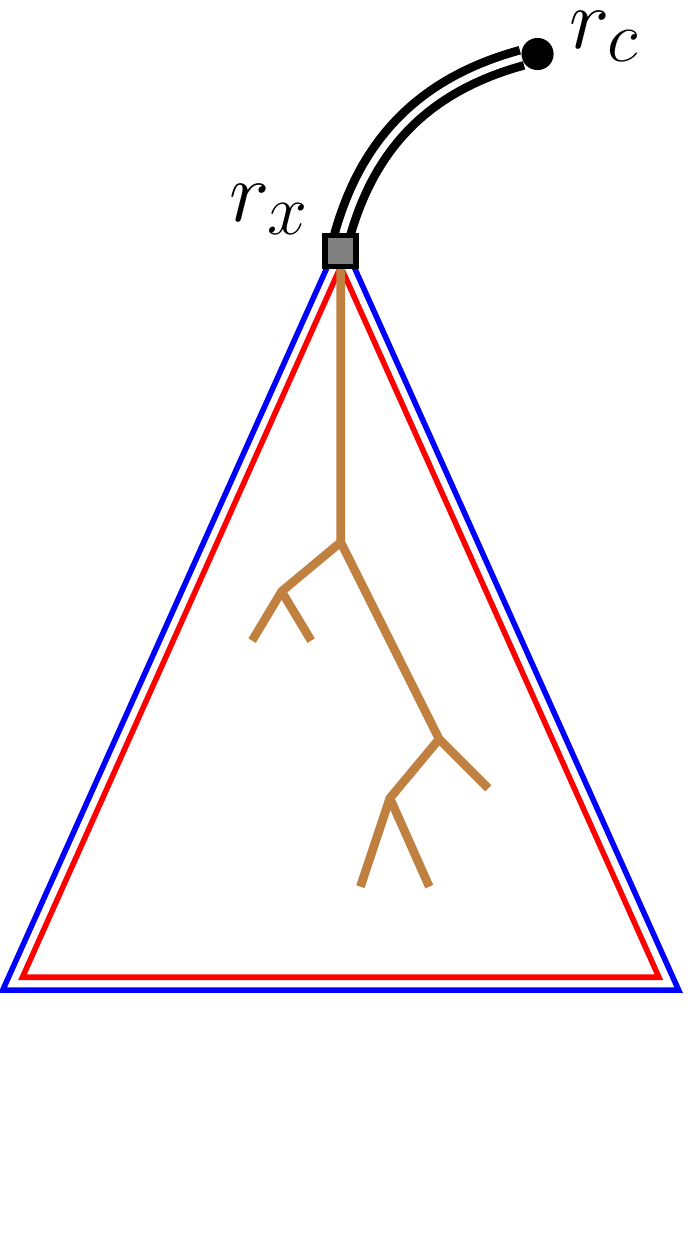}
    \caption{Case 1.}
    \label{fig:analysis-1}
    \end{subfigure}
    \begin{subfigure}[t]{0.24\textwidth}
    \centering
    \includegraphics[scale=0.35]{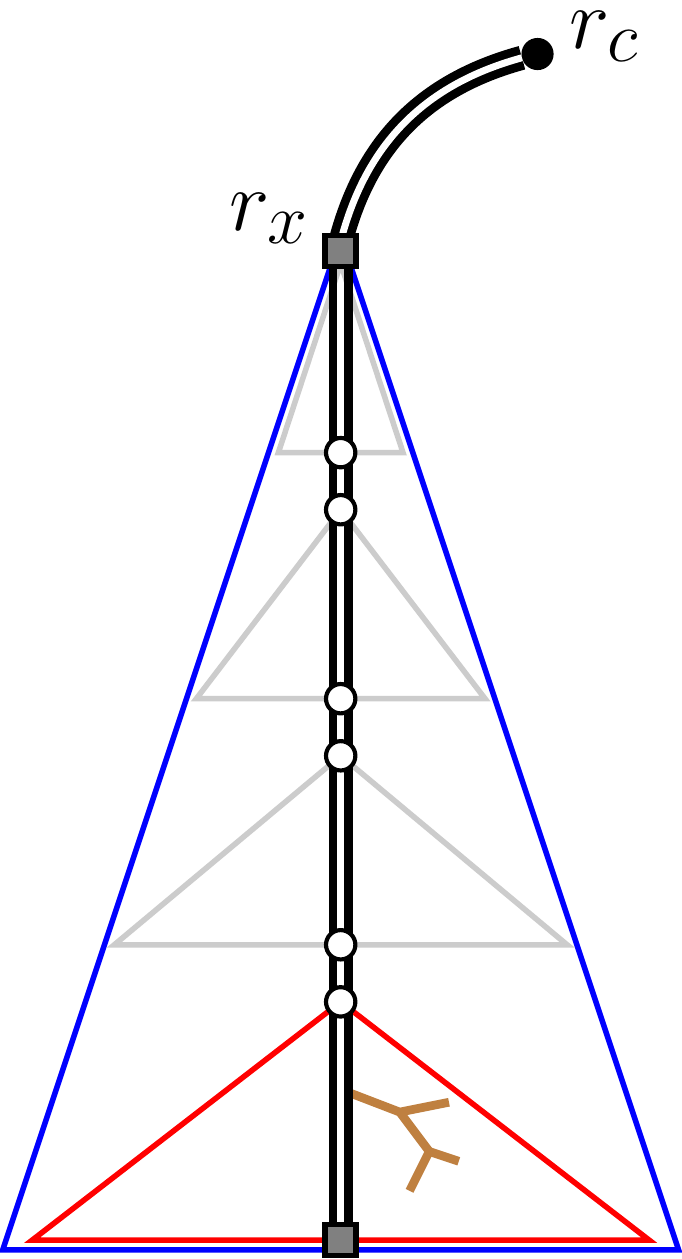}
    \caption{Subcase 2(i).}
    \label{fig:analysis-2}
    \end{subfigure}
    \hfill
    \begin{subfigure}[t]{0.24\textwidth}
    \centering
    \includegraphics[scale=0.35]{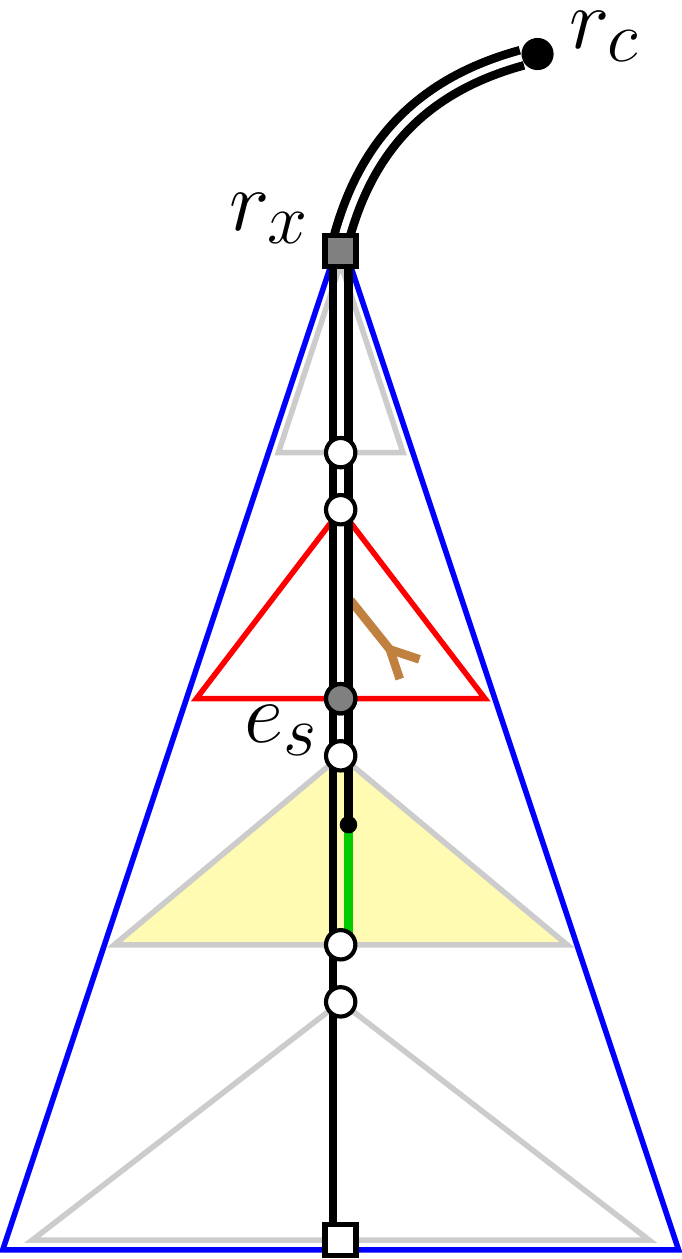}
    \caption{Subsubcase 2(ii)($\alpha$).}
    \label{fig:analysis-3}
    \end{subfigure}
    \hfill
    \begin{subfigure}[t]{0.24\textwidth}
    \centering
    \includegraphics[scale=0.35]{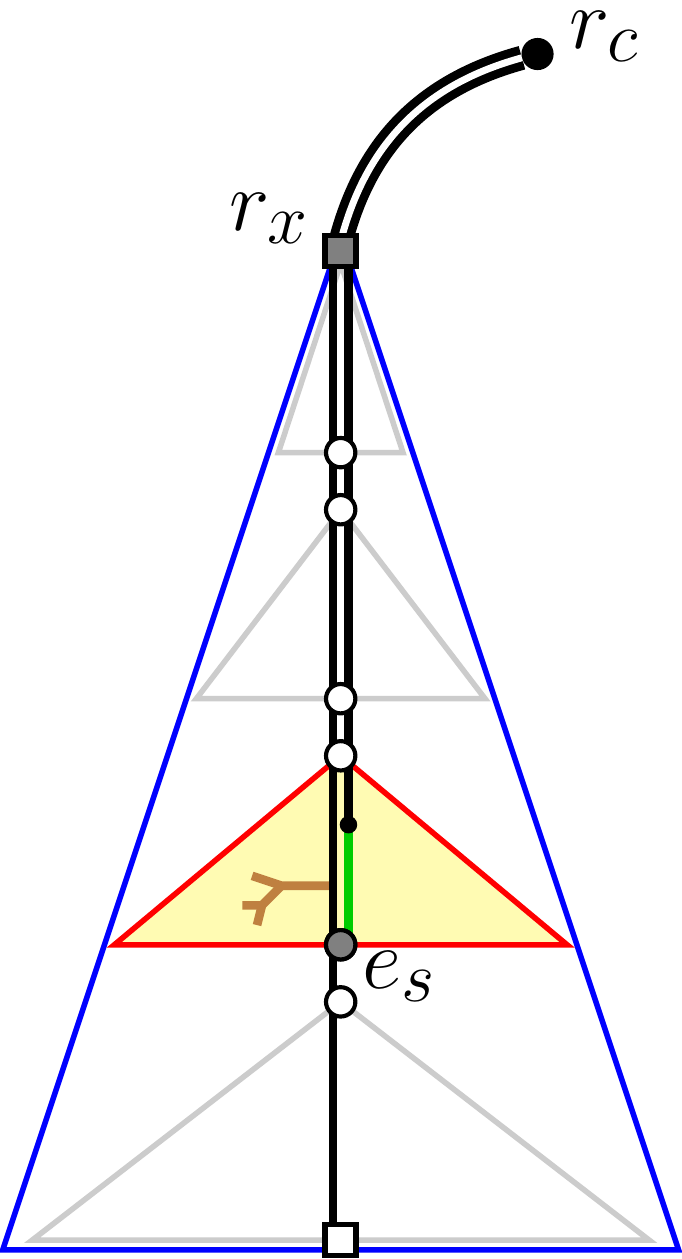}
    \caption{Subsubcase 2(ii)($\beta$)}
    \label{fig:analysis-4}
    \end{subfigure}
    \caption{\small Illustrations for the different cases in the proof of \cref{lem:connected}.
    A removed piece $q\in \mathcal{R}$ is in brown.
    The cell $s$ containing that piece is represented by the triangle in red; the cluster $x$ containing that piece is represented by the outermost triangle in blue.
    The black node $r_c$ is the root of component $c$.
    In \cref{fig:analysis-1}, $x$ is an ending cluster.
    In \cref{fig:analysis-2}, $x$ is a passing cluster, and the solution $S_c$ contains two passing subtours in $x$.
    In \cref{fig:analysis-3,fig:analysis-4}, $x$ is a passing cluster, and the solution $S_c$ contains a unique passing subtour in $x$; the yellow triangle represents the threshold cell of $x$.
    In the case when $q$ belongs to the threshold cell (\cref{fig:analysis-4}), $q$ is connected to $r_c$ through at least two subtours, thanks to the extension of the ending subtour within the threshold cell.
    }
    \label{fig:analysis}
\end{figure}

\begin{lemma}
\label{lem:extra-cost-2}
The extra cost $W_5$ in Step~5 of the construction is at most $(0.5+\eps)\cdot \cost(S_c)$.
\end{lemma}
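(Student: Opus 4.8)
The plan is to bound $W_5$, the cost of the additional subtour $\bar t$ created in Step~5 to connect all pieces in $\mathcal R$. The key structural input is \cref{lem:connected}: every piece $q\in\mathcal R$ is connected to $r_c$ through \emph{nice} edges, i.e.\ edges that belong to at least two subtours in $A_2$. So I would first let $F$ denote the union, over all pieces $q\in\mathcal R$, of the set of nice edges on a path from $q$ to $r_c$; by \cref{lem:connected} the subgraph $F\cup\mathcal R$ is connected and contains $r_c$. One can connect all pieces in $\mathcal R$ by a single subtour $\bar t$ that traverses (each edge of) $F$ together with (each edge inside) the pieces of $\mathcal R$, each at most twice (a doubling / Euler-tour argument on the tree structure). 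Hence $\cost(\bar t)\le 2\,w(F) + 2\,\cost(\mathcal R)$, where $w(F)$ is the total weight of the nice edges used.

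The next step is to charge $w(F)$ against $\cost(S_c)$. Since every edge in $F$ is nice, it lies on at least two subtours of $A_2$; and the total cost of $A_2$ is at most $\cost(S_c)+W_2\le (1+\eps)\cost(S_c)$ by \cref{cor:extra-cost-1}. An edge of weight $w(e)$ that appears on $k\ge 2$ subtours of $A_2$ contributes at least $2w(e)$ to $\cost(A_2)$, and distinct edges contribute to disjoint parts of $A_2$; therefore $2\,w(F)\le \cost(A_2)\le (1+\eps)\cost(S_c)$, i.e.\ $w(F)\le \tfrac12(1+\eps)\cost(S_c)$. For the term $\cost(\mathcal R)$: each removed piece is a subtour in a cell, and in a \emph{passing} cell the spine subtour is retained and only the non-spine part is removed, so every removed piece is (a part of) a subtour that was in $A_2$ but got cut in Step~3 or Step~4; summing, $\cost(\mathcal R)$ is dwarfed by a small multiple of the cost of the corresponding cells' spines. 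More carefully, the removed pieces are edge-disjoint pieces of $A_3$-subtours, so $\cost(\mathcal R)\le\cost(A_3)$; but the crude bound $2w(F)+2\cost(\mathcal R)$ is too weak, so instead I would observe that the pieces of $\mathcal R$ together with $F$ form a subgraph of the tree, and the single subtour $\bar t$ can traverse the \emph{edges of this whole subgraph} at most twice — and the edges of this subgraph that are not in $F$ are exactly the edges of the removed pieces, which (being parts of cells, with big terminals excluded) contribute an $O(\eps)$ fraction. This is where the $\eps$ in the bound comes from.

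Concretely, I would argue: the edges of the removed pieces in passing cells lie inside those cells but off the spine, and by the decomposition in \cref{sec:decomposition-level-3} the cost of any single cell's edge set is small relative to the cluster spine, which in turn is charged (as in the proof of \cref{lem:threshold-cells}) to half a passing subtour of $S_c$ in that cluster; summing over the $O_\eps(1)$ cells (\cref{fact:constant-number}) this is at most $(\eps/2)\cost(S_c)$ or so. Combining, $W_5=\cost(\bar t)\le 2w(F)+2\cost(\mathcal R)\le (1+\eps)\cost(S_c) + \eps\cdot\cost(S_c) \le (0.5+\eps)\cdot\cost(S_c)$ once the nice-edge double-counting is applied correctly — namely, the factor $2$ from the Euler traversal of $F$ cancels against the factor $2$ from niceness, leaving $w(F)\cdot\frac{2}{1}\le\cost(A_2)$ contributing only $\le 0.5(1+\eps)\cost(S_c)$ to $W_5$, not $1\cdot\cost(S_c)$.

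The main obstacle I anticipate is getting the constant exactly $0.5$ rather than $1$: one must avoid paying twice for the nice edges (once for traversing $F$ in both directions and once because they are "nice"), which requires the observation that traversing a tree edge in both directions costs $2w(e)$ \emph{and} a nice edge already accounts for $\ge 2w(e)$ of $\cost(A_2)$, so these two factors of $2$ are the same factor, not multiplied. The second delicate point is confirming that $\cost(\mathcal R)$ — the removed interior pieces — really is only an $O(\eps)$ fraction; this relies on the cell decomposition being fine (spines are $\eps$-fractions of cluster spines) and on big terminals never being removed, so that removed pieces never include the "expensive" backbone edges that the subtours of $S_c$ need anyway.
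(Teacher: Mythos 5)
There is a genuine gap, in two places. First, you are bounding the wrong quantity. In the paper's accounting, $W_5$ is the \emph{net} extra cost of Step~5: the pieces of $\mathcal{R}$ were removed from the subtours of $A_3$ in Step~4, so when they reappear inside $\bar t$ their cost is merely transferred, not added; the only new cost is that of the connecting edges, i.e.\ at most two copies of each nice edge. By setting $W_5=\cost(\bar t)\le 2w(F)+2\cost(\mathcal{R})$ you are forced to bound $\cost(\mathcal{R})$, and your claim that the removed pieces contribute only an $O(\eps)$ fraction of $\cost(S_c)$ is unjustified and false in general: the cell decomposition of \cref{sec:decomposition-level-3} controls the \emph{spine} length and the \emph{demand} of a cell, not the total edge weight hanging off the spine, so a removed piece (an ending subtour of an ending cell, or the non-spine part of a passing subtour) can carry a constant fraction of $\cost(S_c)$. \cref{lem:threshold-cells} bounds threshold-cell spines only and cannot be stretched to cover these pieces. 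The fix is exactly the paper's bookkeeping: $\cost(S^*_c)=\cost(A_4)+\cost(\bar t)=\cost(S_c)+W_2+(\text{connection cost})$, so no bound on $\cost(\mathcal{R})$ is needed.

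Second, your charging of the nice edges uses the wrong multiplicity. A nice edge $e$ lies on at least two subtours of $A_2$, and each such subtour is a closed walk from $r_c$ in a tree, hence traverses $e$ in both directions; so $e$ contributes at least $4w(e)$ to $\cost(A_2)$, not the $2w(e)$ you state. From your inequality $2w(F)\le\cost(A_2)\le(1+\eps)\cost(S_c)$ one only gets $W_5\le 2w(F)\le(1+\eps)\cost(S_c)$, and the closing ``cancellation'' sentence does not repair the arithmetic: if $2w(F)\le\cost(A_2)$ then the term $2w(F)$ is bounded by $\cost(A_2)$, not by half of it. With the correct factor, $4w(F)\le\cost(A_2)\le(1+\eps)\cost(S_c)$ gives $2w(F)\le\frac{1+\eps}{2}\cost(S_c)$, which does yield the lemma (and is a legitimate variant of the paper's argument, which instead splits nice edges into those lying on two subtours of $S_c$, charged with multiplicity $4$ directly to $\cost(S_c)$, and threshold-cell spine edges, charged via \cref{lem:threshold-cells}). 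As written, however, your proof establishes at best a bound of roughly $(1+\eps)\cdot\cost(S_c)$ plus an unproven term, not $(0.5+\eps)\cdot\cost(S_c)$.
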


\begin{proof}
First, we argue that the extra cost $W_5$ in Step~5 of the construction is at most twice the overall cost of the nice edges in $c$.
Let $H$ be the multi-subgraph in $c$ that consists of the pieces in $\mathcal{R}$ and two copies of each nice edge in $c$ (one copy for each direction).
Since any piece in $\mathcal{R}$ is connected to the root $r_c$ of component $c$ through nice edges (\cref{lem:connected}), $H$ induces a connected subtour in $c$.

Second, we bound the overall cost of the nice edges in $c$.
From the construction, any nice edge $e$ in $c$ is of at least one of the two classes:
\begin{enumerate}
\item edge $e$ belongs to at least two subtours in $S_c$;
\item edge $e$ belongs to the spine of a threshold cell in component $c$.
\end{enumerate}
Each nice edge $e$ of the first class has at least 4 copies in $S_c$, since each subtour to which $e$ belongs contains 2 copies of $e$ (one for each direction).
Thus the overall cost of the nice edges of the first class is at most $0.25\cdot \cost(S_c)$.
By \cref{lem:threshold-cells}, the overall cost of the nice edges of the second class is at most $(\eps/2)\cdot\cost(S_c)$.
Hence the overall cost of the nice edges of both classes is at most $(0.25+\eps/2)\cdot \cost(S_c)$.

Since the extra cost $W_5$ is at most twice the overall cost of the nice edges in $c$, we have $W_5\leq (0.5+\eps)\cdot \cost(S_c)$.
\end{proof}

From \cref{cor:extra-cost-1,lem:extra-cost-2}, we conclude that
\[\cost(S_c^*)= \cost(S_c)+W_2+W_5\leq (1.5+2\eps)\cdot \cost(S_c).\]
Hence the third property of the claim in the Local Theorem (\cref{thm:local}).

\subsection{Feasibility}

From the construction, $S_c^*$ is a set of subtours in $c$ visiting all terminals in $c$.
The first property of the claim in the Local Theorem (\cref{thm:local}) follows from the construction.
The second property of the claim follows from the construction, \cref{fact:same-type}, and the following \cref{lem:feasible}.

\begin{lemma}
\label{lem:feasible}
The total demand of the pieces in $\mathcal{R}$ is at most 1.
\end{lemma}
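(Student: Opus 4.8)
The plan is to account, cluster by cluster, for the total demand of the small terminals that get moved into some piece of $\mathcal{R}$ during Step~4, and to bound that demand by the sum, over all clusters, of a term that is roughly $2\Gamma'$ per cluster, with the total telescoping against $\demand(c)\le 2\Gamma$ and the bound on the number of clusters. The key point to exploit is that a piece $q\in\mathcal{R}$ was removed from some subtour $t_3$ because $\demand(t_3)$ exceeded $\demand(t_0)$, where $t_0$ is the corresponding subtour in $A_0=S_c$; so the total over-capacity that Step~4 must shed, summed over all subtours, is at most $\sum_{t_3\in A_3}\max\{0,\demand(t_3)-\demand(t_0)\}$. I would first show that this quantity equals $\sum_{t\in A_3}\demand(t)-\sum_{t\in A_0}\demand(t)$ up to the contribution of subtours whose demand went \emph{down}; since every terminal of $c$ is visited exactly once in every $A_i$ (the constructions only move terminals between subtours, never delete or duplicate them), $\sum_{t\in A_i}\demand(t)=\demand(c)$ is invariant, hence $\sum_{t_3}\max\{0,\demand(t_3)-\demand(t_0)\}=\sum_{t_3}\max\{0,\demand(t_0)-\demand(t_3)\}$, and both are at most the total ``demand displacement'' caused by Steps~1--3.

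Next I would localize the displacement. In Steps~1 and~3 the Assignment Lemma (\cref{lem:assignment}) is applied with vertex weights $w(b)$ equal to cluster demands (Step~1) and to the total small-terminal demand of a passing cell (Step~3); its guarantee says that for each subtour $a$, the demand newly assigned to $a$ minus the demand already on $a$ for that cluster/cell is at most $\max_b w(b)$. In Step~1 each cluster has demand at most $2\Gamma'$ (\cref{lem:cluster-decomposition}(3)), and a cell's small-terminal demand in Step~3 is also at most $2\Gamma'$ since a cell sits inside a cluster. Step~2 moves no terminals at all (it only adds spine edges). So each of the (at most $O_\eps(1)$, by \cref{fact:constant-number}) clusters and cells contributes at most $2\Gamma'$ to the displacement; however a naive sum over $O_\eps(1)$ objects gives only $O_\eps(1)$, not $1$. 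The real argument must instead be \emph{per subtour}: a subtour $t_3$ can have its demand inflated, relative to $t_0$, only by terminals of clusters $x$ with $f(x)=t_3$ (Step~1) and cells $s$ with $f'(s)=t_3$ (Step~3), and for each such $x$ (resp.\ $s$) the inflation from that cluster (resp.\ cell) is at most $2\Gamma'$ by the Assignment Lemma — but crucially we only ever remove, in Step~4, terminals lying in \emph{distinct} cells, and once a cell is emptied from $t_3$ it contributes no more. I would argue that to bring $\demand(t_3)$ back down to $\demand(t_0)$ it suffices to empty, from $t_3$, the small terminals of at most one cluster's worth assigned in Step~1 plus at most one cell's worth assigned in Step~3, i.e.\ the piece(s) removed from a single subtour have total demand at most $2\Gamma'+2\Gamma' = 4\Gamma'$ per subtour. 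Wait — this still must be summed over subtours; so the cleaner route is to bound $|\mathcal{R}|$-total demand directly by $4\Gamma'$ times (number of subtours), and then use that the number of subtours is at most $(2\Gamma/\alpha)+1$ (hypothesis of \cref{thm:local}), giving $4\Gamma'\cdot((2\Gamma/\alpha)+1)=4(\eps\alpha/\Gamma)((2\Gamma/\alpha)+1)\le 9\eps\le 1$ for $\eps$ small, using $\Gamma'=\eps\alpha/\Gamma$.

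The main obstacle I anticipate is making the ``at most one cluster's worth plus one cell's worth needs to be removed per subtour'' claim precise, because Step~4 removes cells greedily and a single cell removal can over-correct or under-correct the demand, and one must be careful that the set $\mathcal{R}$ collected from a given $t_3$ is not larger than this bound — in particular, that when we remove a passing cell we keep its spine and only the off-spine small terminals leave, and that big terminals never leave (so they never contribute to $\mathcal{R}$). I would handle this by the following bookkeeping: for a fixed $t_3$ with $t_0$ its ancestor in $A_0$, the terminals visited by $t_3$ but not by $t_0$ all belong to cells $s$ with $f'(s)=t_3$ or to ending clusters/cells $x$ with $f(x)=t_3$; order the cells of $t_3$ by depth as Step~4 does; since each individual cell's small-terminal demand is at most $2\Gamma'$, after removing cells one at a time the running surplus $\demand(t_3)-\demand(t_0)$ drops below $0$ after removing total demand at most (previous surplus) $+\,2\Gamma'$, and the previous surplus is itself at most the displacement bound $2\Gamma'+2\Gamma'$ from Steps~1 and~3; hence the demand removed from $t_3$ into $\mathcal{R}$ is at most $6\Gamma'$. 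Summing over at most $(2\Gamma/\alpha)+1$ subtours yields a bound of $6\Gamma'\cdot((2\Gamma/\alpha)+1)$, which with $\Gamma'=\eps\alpha/\Gamma$ is at most $13\eps<1$ for sufficiently small $\eps$ (or one absorbs the constant by redefining $\alpha,\Gamma'$). The lemma follows.
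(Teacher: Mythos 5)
Your proof is correct and follows essentially the same route as the paper: you bound, per subtour, the demand removed in Step~4 by the net inflation from Steps~1 and~3 (each at most $2\Gamma'$ via the Assignment Lemma) plus one cell's worth of overshoot (another $2\Gamma'$), for a total of $6\Gamma'$ per subtour, then multiply by the hypothesis bound $(2\Gamma/\alpha)+1$ on the number of subtours to get $< 13\eps < 1$. The paper states this more crisply by tracking $\demand(t_i)$ across $A_0,\dots,A_4$ directly, but the accounting is identical; the only thing to tighten in your write-up is the mid-paragraph phrasing that could be read as claiming the Assignment Lemma bounds inflation \emph{per cluster} rather than per subtour aggregated over all clusters — your later ``cleaner route'' paragraph gets this right.
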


\begin{proof}
Observe that the pieces in $\mathcal{R}$ are removed from subtours in $A_3$.
Let $t_3$ denote any subtour in $A_3$.
Let $t_0$, $t_1$, $t_2$, and $t_4$ denote the corresponding subtours of $t_3$ in $A_0$, $A_1$, $A_2$, and $A_4$, respectively.
Let $\Delta$ denote the overall demand of the pieces that are removed from $t_3$ in Step~4 of the construction.
Observe that $\Delta=\demand(t_3)-\demand(t_4)$.
To bound $\Delta$, first, by Step~1 of the construction and the Assignment Lemma (\cref{lem:assignment}), $\demand(t_1)-\demand(t_0)$ is at most the maximum demand of a cluster, which is at most $2\Gamma'$ by the definition of clusters (\cref{sec:decomposition-level-2}).
By Step~2 of the construction, $\demand(t_2)=\demand(t_1)$.
By Step~3 of the construction and the Assignment Lemma (\cref{lem:assignment}), $\demand(t_3)-\demand(t_2)$ is at most the maximum demand of a cell, which is at most $2\Gamma'$ by the definition of cells (\cref{sec:decomposition-level-3}).
By Step~4 of the construction, $\demand(t_0)-\demand(t_4)$ is at most the maximum demand of a cell, which is at most $2\Gamma'$.
Combining, we have $\Delta=\demand(t_3)-\demand(t_4)\leq 6\Gamma'$.

The number of subtours in $A_3$ equals the number of subtours in $S_c$, which is at most $(2\Gamma/\alpha)+1$ by assumption.
Thus total demand of the pieces in $\mathcal{R}$ is at most $6\Gamma'\cdot ((2\Gamma/\alpha)+1)<13\eps<1$, assuming $\eps<1/13$.
\end{proof}

This completes the proof of the Local Theorem (\cref{thm:local}).

\section{Height Reduction}

\label{sec:hat-T}

In this section, we transform the tree $T$ into a tree $\hat T$ so that $\hat T$ has $O_\eps (1)$ levels of components.

All results in this section are already given in \cite{MZ22} for the equal demand setting.
The arguments for the arbitrary demand setting are identical, except for the proof of \cref{thm:MZ22-opt}, which is a minor adaptation of the proof in~\cite{MZ22}, see \cref{rmk:MZ22-opt}.

\begin{lemma}[Lemma 21 in \cite{MZ22}]
\label{lem:tild-D-H-eps}
Let $\tilde D=\alpha\cdot\eps\cdot D_{\min}$, where $\alpha$ is defined in \cref{def:big-small} and $D_{\min}$ is defined in \cref{def:bounded-distance}.
Let $H_\eps=(1/\eps)^{(2/\eps)+1}$.
For each $i\in[1,H_\eps]$, let $\mathcal{C}_i\subseteq \mathcal{C}$ denote the set of components $c\in \mathcal{C}$ such that $\dist(r,r_c)\in\left[(i-1)\cdot \tilde D, i\cdot \tilde D\right)$.
Then any component $c\in \mathcal{C}$ belongs to a set $\mathcal{C}_i$ for some $i\in [1,H_\eps]$.
\end{lemma}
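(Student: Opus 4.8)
The plan is to prove this by a direct bounding argument on the distance from the depot to a component root, using only the fact that the tree has \emph{bounded distances} (\cref{def:bounded-distance}), and tracking constants carefully. First I would recall that the component root $r_c$ is the vertex of $c$ closest to the depot (\cref{lem:decomposition}, property~1), so $r_c$ lies on the $r$-to-$v$ path for some terminal $v$ inside $c$; hence $\dist(r,r_c)\le \dist(r,v)\le D_{\max}$. This shows $\dist(r,r_c)\in[0,D_{\max})$, so it suffices to check that $[0,D_{\max})$ is covered by the union of the intervals $\big[(i-1)\tilde D,\,i\tilde D\big)$ for $i\in[1,H_\eps]$, i.e.\ that $H_\eps\cdot\tilde D\ge D_{\max}$.

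Next I would unfold the definitions of the two quantities. We have $\tilde D=\alpha\cdot\eps\cdot D_{\min}$ with $\alpha=\eps^{(1/\eps)+1}$ (\cref{def:big-small}), so $\tilde D=\eps^{(1/\eps)+2}\cdot D_{\min}$, and $H_\eps=(1/\eps)^{(2/\eps)+1}=\eps^{-(2/\eps)-1}$. Therefore
\[
H_\eps\cdot\tilde D=\eps^{-(2/\eps)-1}\cdot\eps^{(1/\eps)+2}\cdot D_{\min}=\eps^{-(1/\eps)+1}\cdot D_{\min}=(1/\eps)^{(1/\eps)-1}\cdot D_{\min}.
\]
By the bounded-distance assumption (\cref{def:bounded-distance}), $D_{\max}<(1/\eps)^{(1/\eps)-1}\cdot D_{\min}=H_\eps\cdot\tilde D$. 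Hence for any component $c$, the value $\dist(r,r_c)$ lies in $[0,D_{\max})\subseteq[0,H_\eps\cdot\tilde D)=\bigcup_{i=1}^{H_\eps}\big[(i-1)\tilde D,\,i\tilde D\big)$, so $c$ belongs to $\mathcal{C}_i$ for the unique index $i=\lfloor \dist(r,r_c)/\tilde D\rfloor+1\in[1,H_\eps]$.

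There is essentially no hard step here: the statement is a bookkeeping consequence of the bounded-distance reduction, and the only thing to be careful about is the exponent arithmetic — making sure the exponent of $\eps$ (equivalently of $1/\eps$) coming out of $H_\eps\cdot\tilde D$ matches exactly the exponent in \cref{def:bounded-distance}, which it does. One minor point worth a sentence in the writeup: if $D_{\min}=0$ (some terminal coincides with the depot) then $\tilde D=0$ and the intervals degenerate; but in that case $D_{\max}<(1/\eps)^{(1/\eps)-1}\cdot 0=0$ is impossible unless there are no terminals at positive distance, a degenerate instance one can dispose of separately, so we may assume $D_{\min}>0$. This matches the treatment in~\cite{MZ22}, from which the lemma is quoted verbatim.
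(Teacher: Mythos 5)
Your proof is correct and is the expected argument; the paper quotes this lemma from~\cite{MZ22} without reproducing its proof, and the bookkeeping you give — computing $H_\eps\cdot\tilde D = \eps^{-(2/\eps)-1}\cdot\eps^{(1/\eps)+2}\cdot D_{\min} = (1/\eps)^{(1/\eps)-1}\cdot D_{\min}$ and invoking the bounded-distance assumption together with $\dist(r,r_c)\le D_{\max}$ — is exactly what is intended. One tiny imprecision worth fixing in your writeup: the terminal $v$ witnessing $\dist(r,r_c)\le D_{\max}$ need not lie \emph{inside} the component $c$ (an internal component could in principle contain no terminal); it suffices that $r_c$ has some terminal descendant in the full tree, which always holds after preprocessing since terminals are precisely the leaves.
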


\begin{definition}[Definition 22 in \cite{MZ22}]
\label{def:critical}
We say that a set of components $\tilde{\mathcal{C}}\subseteq \mathcal{C}_i$ is \emph{maximally connected} if the components in $\tilde{\mathcal{C}}$ are connected to each other and $\tilde{\mathcal{C}}$ is maximal within $\mathcal{C}_i$.
For a maximally connected set  of components $\tilde{\mathcal{C}}\subseteq \mathcal{C}_i$, we define the \emph{critical vertex} of $\tilde{\mathcal{C}}$ to be the root vertex of the component $c\in\tilde{\mathcal{C}}$ that is closest to the depot.
\end{definition}

\begin{algorithm}[H]
\caption{Construction of the tree $\hat T$ (\cite{MZ22}).}
\label{alg:hat-T}
\begin{algorithmic}[1]
\For{each $i\in[1,H_\eps]$}
    \For{each maximally connected set of components $\tilde{\mathcal{C}}\subseteq \mathcal{C}_i$}
        \State $z\gets$ critical vertex of $\tilde{\mathcal{C}}$
        \For{each component $c\in \tilde{\mathcal{C}}$}
            \State $\delta\gets r_c$-to-$z$ distance in $T$
            \State \emph{Split} the tree $T$ at the root vertex $r_c$ of the component $c$
            \State Add an edge between the root of the component $c$ and $z$ with weight $\delta$
        \EndFor
    \EndFor
\EndFor
\State $\hat T\gets$the resulting tree
\end{algorithmic}
\end{algorithm}

\begin{fact}[Fact 6 in \cite{MZ22}]
\label{fact:hat-T}
\cref{alg:hat-T} constructs in polynomial time a tree $\hat T$ such that:
\begin{itemize}
    \item The components in $\hat T$ are the same as those in the tree $T$;
    \item Any solution to the UCVRP on the tree $\hat T$ can be transformed in polynomial time into a solution to the UCVRP on the tree $T$ without increasing the cost.
\end{itemize}
\end{fact}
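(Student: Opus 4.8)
The plan is to establish the three assertions of \cref{fact:hat-T} separately: the polynomial running time, the invariance of the set of components, and the cost-non-increasing transformation of a solution on $\hat T$ into one on $T$.

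\textbf{Running time and invariance of the components.} The outer loop of \cref{alg:hat-T} has $H_\eps=O_\eps(1)$ iterations; for a fixed level $i$, the maximally connected sets $\tilde{\mathcal C}$ partition $\mathcal C_i$, and by \cref{lem:tild-D-H-eps} every component lies in (exactly) one $\mathcal C_i$, so the innermost loop body is executed at most $|\mathcal C|$ times in total, which is polynomial in $n$ by part~4 of \cref{lem:decomposition} (using $\demand(T)\le n$). Each execution only computes an $r_c$-to-$z$ distance in $T$, detaches a subtree at a component root, and inserts one edge, all in polynomial time, so $\hat T$ is built in polynomial time. For the first bullet, observe that every operation acts \emph{only} at the root vertex $r_c$ of a component $c$: ``splitting at $r_c$'' separates the subtree carrying $c$ (and everything hanging below its exit vertex) from the part of the tree closer to the depot, detaching an edge that by \cref{lem:decomposition} is \emph{not} internal to $c$, since $r_c$ is the vertex of $c$ closest to the depot; and the freshly inserted edge $(r_c,z)$ is likewise internal to no component. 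Hence no edge internal to any component is ever deleted or created, so the collection of edge-disjoint connected subgraphs $\{c : c\in\mathcal C\}$, together with their root and exit vertices, persists unchanged in $\hat T$; the components of $\hat T$ are those of $T$.

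\textbf{Transforming a solution.} Let $\hat S$ be a feasible solution for the UCVRP on $\hat T$. Every edge of $\hat T$ is either an edge of $T$ internal to some component, or one of the inserted edges $(r_c,z)$ of weight exactly $\dist_T(r_c,z)$. I would transform $\hat S$ by replacing, in each of its tours, every traversal of an inserted edge $(r_c,z)$ by the unique $r_c$-to-$z$ path in $T$, traversed in the matching direction. This substitutes, for an edge of weight $\dist_T(r_c,z)$, a walk of total weight exactly $\dist_T(r_c,z)$ between the same two endpoints, so every tour remains a closed walk through $r$ and its cost is unchanged; summing over tours, $\cost$ does not increase. The inserted $T$-paths merely \emph{pass through} vertices of other components without covering any demand, so the set of terminals covered by each tour and the tour-to-terminal assignment are unaffected; hence every capacity constraint and the single-tour covering requirement continue to hold, and the result is a feasible solution for the UCVRP on $T$. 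The transformation is clearly polynomial time, which gives the second bullet.

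\textbf{Main obstacle.} The one point requiring care is the precise meaning of ``split the tree $T$ at $r_c$'' when $r_c$ is a vertex shared by several components --- e.g.\ when $r_c$ is also the exit vertex of the parent component, or simultaneously the root of two sibling components: the split must detach exactly the subtree that carries $c$ and everything strictly below it, conceptually duplicating $r_c$ so that the copy remaining with the other incident components keeps their internal edges intact (and the corner case $r_c=r$, where there is nothing to split, is trivial). This bookkeeping is handled exactly as in \cite{MZ22}; once it is pinned down, the three claims follow as above.
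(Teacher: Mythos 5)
The paper does not include a proof of this fact; it is cited verbatim from [MZ22] (Fact~6), and the surrounding text simply asserts that ``all results in this section are already given in [MZ22]... the arguments for the arbitrary demand setting are identical.'' Your reconstruction is correct and, as far as one can infer from the algorithm and the cited lemma, is the standard argument one would give. The running-time count is right (the $\mathcal{C}_i$ partition $\mathcal{C}$ because the distance intervals are disjoint and exhaustive by \cref{lem:tild-D-H-eps}, so the inner loop body runs $|\mathcal{C}|\le n-1$ times); the component-invariance argument is right (the algorithm touches only root vertices of components and inter-component connections); and the back-transformation --- substituting each inserted edge $(r_c,z)$ of weight $\dist_T(r_c,z)$ by the $r_c$-to-$z$ path in $T$ --- is the natural cost-preserving map, and you correctly observe that it leaves the terminal-to-tour assignment (hence feasibility and capacity) untouched, since traversed-but-uncovered vertices carry no demand obligation.

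One small imprecision: early in the component-invariance paragraph you say the split ``detaches an edge,'' which, read literally, would cut the parent-component edge incident to $r_c$ and destroy that parent component. You correct this in your final paragraph --- the split duplicates the vertex $r_c$, so the parent component keeps its own copy and all of its internal edges --- but it would be cleaner to state it that way from the start rather than as an afterthought. The observation that the corner cases (shared $r_c$, $r_c=z$, $r_c=r$) are resolved by vertex duplication is exactly the point worth making explicit; with that, the argument is complete.
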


\begin{theorem}[Adaptation of Theorem 23 in \cite{MZ22}]
\label{thm:MZ22-opt}
Consider the UCVRP on the tree $\hat T$.
There exist dummy terminals of appropriate demands and a solution $\OPT_2$ visiting all of the real and the dummy terminals, such that all of the following properties hold:
\begin{enumerate}
\item For each component $c$, there are at most $(2\Gamma/\alpha)+1$ tours visiting terminals in $c$;
\item For each component $c$ and each tour $t$ visiting terminals in $c$, the total demand of the terminals in $c$ visited by $t$ is at least $\alpha$;
\item The cost of $\OPT_2$ is at most $1+3\eps$ times the optimal cost for the UCVRP on the tree $T$.
\end{enumerate}
\end{theorem}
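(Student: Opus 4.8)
The plan is to start from an optimal solution $\OPT$ for the UCVRP on the tree $T$, transport it to the tree $\hat T$ using the structure of \cref{alg:hat-T}, and then modify it in two stages so that the three properties hold. Since the construction of $\hat T$ only reroutes the edge connecting each component's root $r_c$ to the rest of the tree (replacing the path from $r_c$ up to the critical vertex $z$ by a single edge of the same weight $\delta$), any tour in $T$ maps naturally to a walk in $\hat T$: whenever the tour enters or leaves a component at its root, we traverse the new shortcut edge instead of the old path. This costs no more than the original because $\delta$ equals the old $r_c$-to-$z$ distance, and in fact the resulting solution in $\hat T$ has cost at most that of $\OPT$ in $T$; this is exactly the mechanism behind \cref{fact:hat-T}. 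So the starting point is a feasible solution in $\hat T$ of cost at most $\OPT(T)$.

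Next I would enforce Properties~1 and~2 by a ``few-tours-per-component'' argument together with a merging/splitting step, introducing dummy terminals where needed. For each component $c$, consider the tours that visit terminals in $c$. Property~2 demands that each such tour carry at least $\alpha$ units of demand \emph{from within $c$}; a tour violating this can be handled by reassigning the small amount of demand it picks up in $c$ to a dummy terminal placed at $r_c$ (so that $c$'s ``real'' contribution to that tour is zeroed out), or by merging these light visits together. Because the total demand in $c$ is at most $2\Gamma$ (\cref{lem:decomposition}, Property~3) and each retained tour now carries at least $\alpha$ demand from $c$, the number of tours genuinely visiting $c$ is at most $2\Gamma/\alpha$; the ``$+1$'' absorbs one possible extra passing tour that threads through $c$ without collecting much demand. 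The dummy terminals are given demands chosen so that capacities are still respected and so that the bookkeeping in later sections (\cref{fact:Y}, \cref{thm:structure}) goes through; the cost of accommodating them is charged against the edges of $c$, which are already paid for by the tours passing through.

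The cost blow-up in Property~3 comes from two sources, each contributing a multiplicative $(1+O(\eps))$: first, the bounded-distance reduction already baked in (\cref{thm:dist-reduce}) and the height-reduction rerouting, which by \cref{fact:hat-T} is free but interacts with the component structure; second, the detours and reconnections needed when we merge light visits or redirect demand to dummy terminals at component roots. The key quantitative point is that $\tilde D = \alpha\eps D_{\min}$ and $H_\eps$ are chosen (\cref{lem:tild-D-H-eps}) precisely so that the extra distance incurred per component is a tiny fraction of what the optimal solution must already pay to reach the terminals in that component; summing a geometric-type bound over the $O_\eps(1)$ levels keeps the total overhead at $3\eps \cdot \OPT(T)$. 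The main obstacle I expect is the careful choice of dummy terminal demands and the merging rule so that all three properties hold \emph{simultaneously} without circular dependencies — in particular, ensuring that zeroing out or merging a tour's contribution in one component does not destroy Property~2 in another component that the same tour visits. This is exactly the ``minor adaptation'' flagged in \cref{rmk:MZ22-opt}: in the equal-demand setting the demands are uniform so the merging is transparent, whereas here one must argue that replacing a heterogeneous bundle of small demands by a single dummy of equal total demand is compatible with the capacity constraint and with the downstream adaptive-rounding step.
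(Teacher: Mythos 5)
Your sketch misses the central mechanism that the paper itself singles out as the ``minor adaptation'' in \cref{rmk:MZ22-opt}: the Iterated Tour Partitioning (ITP) step. The structure of the proof in~\cite{MZ22} (and its adaptation here) is not ``merge light visits and absorb the slack in dummy terminals.'' Rather, one reassigns a selected subset of (small) terminals to fresh tours so as to enforce Properties~1 and~2, and this reassignment can overload tour capacities; the overload is then repaired by concatenating the displaced terminals into a traveling-salesman tour $t_{\TSP}$ and cutting it into segments. In the equal-demand setting the segments can be cut at demand exactly $1$; with arbitrary demands you cannot cut exactly, which is why the adaptation cuts at demand in $[1-\alpha,1]$ — legitimate precisely because every terminal routed onto $t_{\TSP}$ has demand at most $\alpha$. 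Your proposal never explains how feasibility is restored after merging, and in particular never confronts the unsplittability obstacle that makes the arbitrary-demand case non-trivial. Saying the dummy-terminal choice must ``be compatible with the capacity constraint'' is naming the problem, not solving it.

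Two further inaccuracies. First, you assert that mapping $\OPT$ from $T$ into $\hat T$ is cost-free because $\hat T$ only reroutes edges of the same weight. That direction is not free: \cref{fact:hat-T} guarantees cost-preservation only from $\hat T$ back to $T$. In $\hat T$ every component root in a maximally connected group hangs directly off the critical vertex $z$, so a tour that in $T$ descended through a chain of components must in $\hat T$ repeatedly return to $z$, which can increase cost; the $(1+O(\eps))$ slack in Property~3 in part pays for this, and controlling it requires the $\tilde D$, $H_\eps$ bookkeeping of \cref{lem:tild-D-H-eps}, not the informal ``geometric-type bound'' you invoke. Second, your description of the role of dummy terminals is off: they are added to pad a tour's in-component demand \emph{up to} $\alpha$ (Property~2), not to zero out a light tour's contribution — zeroing out a real terminal's demand would leave it unvisited. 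The proof you would need is essentially: cite the construction of $\OPT_2$ from Theorem~23 of~\cite{MZ22} verbatim, and verify that the single place it uses equal demands — the ITP cutting rule — still gives a $(1+O(\eps))$ loss when cuts land in $[1-\alpha,1]$.
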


\begin{remark}
\label{rmk:MZ22-opt}
In the proof of \cref{thm:MZ22-opt}, everything in~\cite{MZ22} carries over to the arbitrary demand setting, except that the \emph{Iterated Tour Partitioning (ITP)} algorithm, which is used to reduce the demand of the tours exceeding capacity, is adapted as follows.
Let $t_{\TSP}$ be a traveling salesman tour visiting a selected subset of terminals (Section~3.1 in \cite{MZ22}).
We partition $t_{\TSP}$ into segments, such that all segments, except possibly the last segment, have demands between $1-\alpha$ and $1$, instead of being exactly $1$.
This is achievable since every terminal on $t_{\TSP}$  has demand at most $\alpha$.
The analysis is identical to \cite{MZ22} except for a suitable adaptation of the analysis of the ITP algorithm.
\end{remark}

\section{Combining Local Solutions}
\label{sec:global}

\begin{definition}
\label{def:Y-c}
For any component $c\in\mathcal C$, let $\mathcal{Q}_c$ denote the partition of the terminals of component $c$, such that each part of the partition consists of either all small terminals in a cell in $c$, or a single big terminal in $c$.
We define the set \[Y_c=\{\alpha\}\cup\Big(\big\{ \demand\big(\tilde{\mathcal{Q}}_c\big): \tilde{\mathcal{Q}}_c\subseteq \mathcal{Q}_c\big\}\cap (\alpha,1]\Big).\]
\end{definition}

\begin{theorem}
\label{thm:global}
Consider the UCVRP on the tree $\hat T$.
There exist dummy terminals of appropriate demands and a solution $S^*$ visiting all of the real and the dummy terminals, such that all of the following properties hold:
\begin{enumerate}
    \item For each cell, a single tour visits all small terminals in that cell;

    \item For each component $c$ and each tour $t$ visiting terminals in $c$, the total demand of the terminals in $c$ visited by $t$ belongs to $Y_c$;
    \item The cost of $S^*$ is at most $1.5+7\eps$ times the optimal cost for the UCVRP on the tree $T$.
\end{enumerate}
\end{theorem}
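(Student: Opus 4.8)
The plan is to start from the solution $\OPT_2$ on $\hat T$ guaranteed by \cref{thm:MZ22-opt}, and apply the Local Theorem (\cref{thm:local}) component by component to obtain $S^*$. For each component $c$, let $S_c$ be the set of subtours of $\OPT_2$ inside $c$ (obtained by restricting each tour visiting terminals in $c$ to its portion within $c$, which by \cref{def:subtour-component} is a subtour in $c$). By Property~1 of \cref{thm:MZ22-opt}, $|S_c|\le (2\Gamma/\alpha)+1$, so the hypothesis of \cref{thm:local} is met. Applying \cref{thm:local} yields $S_c^*$ satisfying its three properties. I would then reassemble the local solutions into a global solution $S^*$ on $\hat T$: the subtours in $S_c^*\setminus\{\bar t\}$ are spliced back into the tours of $\OPT_2$ in place of the original subtours $S_c$ (using the one-to-one correspondence of Property~2), and each extra subtour $\bar t$ is made into its own new tour. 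I would process the components consistently across levels so that, when a tour passes through several components, the passing/ending type of each local subtour matches what the tour needs — this is exactly what the second clause of Property~2 of \cref{thm:local} ("if $t$ is a passing subtour then $t^*$ is a passing subtour") guarantees, so connectivity of each reassembled tour is preserved.

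Next I would verify the three claimed properties. Property~1 is immediate: it is exactly Property~1 of \cref{thm:local} applied in every component, since the cells of $\hat T$ are the cells of the components, and the components are unchanged by height reduction (\cref{fact:hat-T}). Property~2: for a tour $t$ visiting terminals in $c$, after the surgery the terminals of $c$ that $t$ visits form a union of parts of $\mathcal{Q}_c$ — indeed by Property~1 each cell's small terminals go entirely to one tour, and each big terminal is a single part — so the demand of $t$ restricted to $c$ is $\demand(\tilde{\mathcal Q}_c)$ for some $\tilde{\mathcal Q}_c\subseteq\mathcal Q_c$. It remains to see this value lies in $Y_c$: if it exceeds $\alpha$ it is in $(\alpha,1]$ (it is $\le 1$ because the whole tour has demand $\le 1$ by feasibility, established below) hence in $Y_c$ by \cref{def:Y-c}; if it is $\le\alpha$ we record it as $\alpha$ by inserting a dummy terminal of demand $\alpha-\demand(\tilde{\mathcal Q}_c)$ into that cell/tour, which is exactly the role of the "$\{\alpha\}\cup$" in $Y_c$ and of the dummy terminals in the statement. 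For the new tours $\bar t$: each $\bar t$ has demand at most $1$ by Property~2 of \cref{thm:local}, and each $\bar t$ lives inside a single component, so its per-component demand is its total demand, which we likewise round up to $\alpha$ with a dummy terminal if needed.

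For feasibility (every tour has demand $\le 1$): for a reassembled tour $t$ obtained from a tour $t_0$ of $\OPT_2$, Property~2 of \cref{thm:local} says that in each component the new local demand is at most the old one, so $\demand(t)\le\demand(t_0)\le 1$; for the new tours $\bar t$, their demand is at most $1$ directly. Property~3 (cost): $\cost(S^*)=\sum_{c}\cost(S_c^*)\le (1.5+2\eps)\sum_c\cost(S_c)=(1.5+2\eps)\cost(\OPT_2)$ by Property~3 of \cref{thm:local}, summed over components (the components partition the edges, so the costs add). By Property~3 of \cref{thm:MZ22-opt}, $\cost(\OPT_2)\le(1+3\eps)\OPT_T$ where $\OPT_T$ is the optimum on $T$. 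Hence $\cost(S^*)\le(1.5+2\eps)(1+3\eps)\OPT_T\le(1.5+7\eps)\OPT_T$ for $\eps$ small enough (e.g.\ $\eps\le 1/2$, expanding $(1.5+2\eps)(1+3\eps)=1.5+6.5\eps+6\eps^2\le 1.5+7\eps$). Finally, $S^*$ is a solution on $\hat T$; by \cref{fact:hat-T} it transfers to $T$ without cost increase, but the statement is phrased on $\hat T$ so no transfer is needed here.

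The main obstacle I anticipate is the bookkeeping in the reassembly across multiple levels of components: ensuring that when a single tour of $\OPT_2$ threads through a chain of components, the Local Theorem's guarantees in each component are compatible so that the patched-together tour is a single connected closed walk of demand $\le 1$ — and that the dummy-terminal accounting (used both to realize $\alpha$ as a recorded demand and to keep demands feasible) is done coherently. The passing/ending preservation clause of \cref{thm:local} is precisely designed to resolve the connectivity part, so the remaining work is careful but routine.
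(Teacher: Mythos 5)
Your overall approach is the paper's: restrict $\OPT_2$ to each component, apply the Local Theorem, splice the corresponding subtours back, make each $\bar t$ a new tour, and pad per-component demands up to $\alpha$ with dummy terminals. However, your cost accounting has a genuine gap. The subtour $\bar t$ produced by \cref{thm:local} starts and ends at the component root $r_c$, not at the depot, so turning it into a tour costs an extra $2\cdot\dist(r,r_c)$ per component; your identity $\cost(S^*)=\sum_c\cost(S^*_c)$ silently drops these connections. Since the number of components can be large, this term is not obviously negligible: the paper bounds it by invoking Lemma~17 of \cite{MZ22}, which gives $\sum_c \dist(r,r_c)\leq (\eps/8)\cdot\opt$ (this relies on the bounded-distance preprocessing and the bound on the number of components in \cref{lem:decomposition}). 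Without some such argument, the $(1.5+2\eps)(1+3\eps)$ computation does not yield the stated $1.5+7\eps$ bound; with it, the total becomes $(1.5+2\eps)(1+3\eps)\opt+(\eps/4)\opt\leq(1.5+7\eps)\opt$, which is exactly where the paper's constant comes from.

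A second, smaller hole is in feasibility after padding. You add a dummy terminal to raise a tour's demand in $c$ to $\alpha$ whenever it falls below $\alpha$, but then argue capacity only via Property~2 of \cref{thm:local} (``new local demand $\leq$ old local demand''), which holds \emph{before} the padding. After padding, the local demand is $\max\{\demand(t^{(2)}),\alpha\}$, and to conclude it does not exceed the original you must invoke Property~2 of \cref{thm:MZ22-opt}: each tour of $\OPT_2$ has demand at least $\alpha$ in every component it visits, so $\max\{\demand(t^{(2)}),\alpha\}\leq\demand(t^{(0)})$ and summing over components keeps each tour within capacity. You never cite this property, and without it the dummy insertion could in principle push a tour over capacity. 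Both gaps are fixable with facts available in the paper, but as written the proof of Properties~2 (feasibility side) and~3 is incomplete.
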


\begin{proof}
Let $S$ denote the solution to the UCVRP on the tree $\hat T$ that is identical to $\OPT_2$ defined in \cref{thm:MZ22-opt}, except for ignoring dummy terminals in $\OPT_2$.
To construct $S^*$, we modify the solution $S$ component by component as follows.
Consider any component $c$.
Let $S_c$ denote the set of subtours in $c$ obtained by restricting the tours of $S$ visiting terminals in $c$.
By \cref{thm:MZ22-opt}, $|S_c|\leq (2\Gamma/\alpha)+1$.
We apply the Local Theorem (\cref{thm:local}) on $S_c$ to obtain a set $S_c^*$ of subtours in $c$, such that $S_c^*$ contains one particular subtour $\bar{t}$ of demand at most 1, and the subtours in $S_c^*\setminus\{\bar t\}$ are in one-to-one correspondence with the subtours in $S_c$.
For the subtour $\bar{t}$, we create a new tour from the depot by adding the $r$-to-$r_c$ connection in both directions.
Next, consider any subtour $t^{(1)}$ in $S_c$; let $t^{(2)}$ be the corresponding subtour in $S_c^*$.
We replace $t^{(1)}$ in the solution $S$ by a subtour $t$ defined as follows: $t$ is identical to $t^{(2)}$, except that if $t^{(2)}$ visits terminals in $c$ and the demand of $t^{(2)}$ is less than $\alpha$, then we add to $t$ a \emph{dummy} terminal at $r_c$  of demand $\alpha-\demand(t^{(2)})$.

Let $S^*$ denote the resulting solution.

By the second property of the Local Theorem (\cref{thm:local}), in any component $c$, if a subtour in $S_c$ is a passing subtour, then the corresponding subtour in $S_c^*$ is a passing subtour, thus each tour in $S^*$ is a connected tour starting and ending at the depot.
Again by the Local Theorem, in any component $c$, the subtours in $S_c^*$ together visit all terminals in $c$, thus the tours in $S^*$ together visit all of the real and the dummy terminals in $\hat T$.

To see that the tours in $S^*$ are within the capacity, first, by the second property of the Local Theorem (\cref{thm:local}), any additional tour created from a subtour $\hat t$ in some set $S_c^*$ is within the capacity.
Next, consider any tour $a$ among the remaining tours in $S^*$.
Consider any component $c$ that contains terminals visited by $a$; let $t$ be the subtour in $c$ from tour $a$.
Let $t^{(0)}$, $t^{(1)}$, and $t^{(2)}$ denote the corresponding subtours of $t$ in $\OPT_2$, $S_c$, and $S_c^*$, respectively.
By \cref{thm:MZ22-opt}, $\demand(t^{(0)})\geq \alpha$.
By definition of $S$, $\demand(t^{(1)})\leq \demand(t^{(0)})$.
Now we use the second property of the Local Theorem (\cref{thm:local}), which ensures that $\demand(t^{(2)})\leq \demand(t^{(1)})$.
Thus $\demand(t)= \max\{\demand(t^{(2)}),\alpha\}\leq \demand(t^{(0)})$.
Let $a^{(0)}$ denote the tour in $\OPT_2$ corresponding to $a$.
Then $\demand(a)\leq \demand(a_0)$.
Since $\OPT_2$ is a solution to the UCVRP (\cref{thm:MZ22-opt}), the total demand of (the real and the dummy) terminals in $a^{(0)}$ is at most 1.
Thus tour $a$ is within the capacity.

The first property of the claim follows from the first property of the Local Theorem (\cref{thm:local}).
The second property of the claim follows from the first property of the claim and the construction of $S^*$.
It remains to analyze the cost of $S^*$.

From the construction of $S^*$, we have
\begin{equation}
\label{eqn:S*}
\cost(S^*) = \sum_c \cost(S^*_c)+\sum_c 2\cdot \dist(r,r_c),
\end{equation}
where we use the fact that, for each component $c$, the distance in the tree $\hat T$ between the depot $r$ and the root $r_c$ of component $c$ equals that distance in the tree $T$.
By Lemma 17 in~\cite{MZ22},
\begin{equation}
    \label{eqn:MZ22-Lemma-17}
\sum_c\dist(r,r_c)\leq \frac{\eps}{8}\cdot \opt.
\end{equation}
By the third property of the Local Theorem (\cref{thm:local}), we have
\begin{equation}
\label{eqn:cost-Sc*}
\sum_c \cost(S^*_c)\leq \sum_c (1.5+2\eps)\cdot\cost(S_c)=(1.5+2\eps)\cdot\cost(S).
\end{equation}
By the definition of $S$ and \cref{thm:MZ22-opt}, we have
\begin{equation}
\label{eqn:cost-S}
\cost(S)= \cost(\OPT_2)\leq (1+3\eps)\cdot \opt,
\end{equation}
where $\opt$ denotes the optimal cost for the UCVRP on the tree $T$.

The last property of the claim follows from \cref{eqn:S*,eqn:cost-Sc*,eqn:cost-S,eqn:MZ22-Lemma-17}.
\end{proof}

\section{Structure Theorem}

\label{sec:structure}

\begin{definition}
\label{def:Y}
Let $Y\subseteq[\alpha,1]$ denote the set of values $y\in [\alpha,1]$ such that $y$ equals the sum of the elements in a multi-subset of $\bigcup_c Y_c$.
\end{definition}

\begin{fact}
\label{fact:Y}
$\{\mathcal{Q}_c\}_c$, $\{Y_c\}_c$, and $Y$ satisfy the following properties:
\begin{enumerate}
\item For any component $c$, the set $\mathcal{Q}_c$ consists of $O_\eps(1)$ parts and the set $Y_c$ consists of $O_\eps(1)$ values;
\item For any component $c$, we have $Y_c\subseteq Y$;
\item For any values $y\in Y$ and $y'\in Y$ such that $y+y'\leq 1$, we have $y+y'\in Y$;
\item The set $Y$ consists of $n^{O_\eps(1)}$ values.
\end{enumerate}
\end{fact}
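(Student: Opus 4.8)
The plan is to establish the four properties essentially by unwinding the definitions of $\mathcal{Q}_c$, $Y_c$, and $Y$, relying on the counting already done in \cref{fact:constant-number} and \cref{fact:same-type}. For property~1, I would first observe that $\mathcal{Q}_c$ has exactly one part per cell (collecting the small terminals of that cell) plus one part per big terminal; by \cref{fact:constant-number}, the number of cells and the number of big terminals in $c$ are both $O_\eps(1)$, so $\mathcal{Q}_c$ has $O_\eps(1)$ parts. Then $Y_c$ is, by \cref{def:Y-c}, contained in $\{\alpha\}$ together with the set of subset-sums $\demand(\tilde{\mathcal{Q}}_c)$ over $\tilde{\mathcal{Q}}_c \subseteq \mathcal{Q}_c$; since $\mathcal{Q}_c$ has $O_\eps(1)$ parts, there are at most $2^{O_\eps(1)} = O_\eps(1)$ such subsets, hence $|Y_c| = O_\eps(1)$.

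For property~2, I would argue directly from \cref{def:Y}: every element $y \in Y_c$ lies in $[\alpha,1]$ (by the definition of $Y_c$, which intersects with $(\alpha,1]$ and also adjoins $\alpha$), and $y$ is trivially the sum of the elements of the singleton multi-subset $\{y\}$ of $\bigcup_{c'} Y_{c'}$; therefore $y \in Y$. Property~3 is immediate from the definition of $Y$ as the set of all achievable multi-subset sums that land in $[\alpha,1]$: if $y$ is the sum of a multi-subset $M$ and $y'$ the sum of a multi-subset $M'$, then $y + y'$ is the sum of the multi-subset $M \uplus M'$, and if in addition $y + y' \le 1$ then (using $y, y' \ge \alpha > 0$, so $y+y' \ge \alpha$) we get $y + y' \in [\alpha,1]$, hence $y + y' \in Y$.

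The only property requiring a genuine estimate is property~4, the polynomial bound $|Y| = n^{O_\eps(1)}$, and I expect this to be the main obstacle since the set of multi-subset sums of $\bigcup_c Y_c$ could a priori be exponentially large. The key points to exploit are: (i) the number of components is $O(n)$ (in fact at most $\max\{1, 3\demand(T)/\Gamma\}$ by \cref{lem:decomposition}, and $\demand(T) \le n$), so $\bigcup_c Y_c$ is a union of $O(n)$ sets each of size $O_\eps(1)$, giving $|\bigcup_c Y_c| = O_\eps(n)$; and (ii) every element of every $Y_c$ is at least $\alpha = \eps^{(1/\eps)+1}$, a constant, so any multi-subset whose sum is at most $1$ uses at most $1/\alpha = O_\eps(1)$ elements. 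Therefore $Y$ is contained in the set of sums of multi-subsets of $\bigcup_c Y_c$ of size at most $1/\alpha$; the number of such multi-subsets is at most $\binom{|\bigcup_c Y_c| + 1/\alpha}{1/\alpha} = \binom{O_\eps(n)}{O_\eps(1)} = n^{O_\eps(1)}$, which bounds $|Y|$. I would close by noting that this also shows $Y$ can be enumerated in polynomial time, which is what the later dynamic program needs.
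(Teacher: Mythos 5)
Your proof is correct and follows essentially the same route as the paper: property~1 from \cref{fact:constant-number}, properties~2 and~3 by unwinding \cref{def:Y}, and property~4 by noting that every element of $\bigcup_c Y_c$ is at least $\alpha$, so each $y\in Y$ is a sum of at most $1/\alpha=O_\eps(1)$ values drawn from a set of size $O_\eps(n)$. One stray citation: your opening line invokes \cref{fact:same-type} as a counting fact, but that lemma concerns paths persisting through the subtour construction and plays no role here --- you never actually use it, so the argument is unaffected.
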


\begin{proof}
The first property of the claim follows from \cref{fact:constant-number}.
The second and the third properties of the claim follow from \cref{def:Y}.
For any component $c$, each value in $Y_c$ is at least $\alpha$, so each value $y\in Y$ is the sum of at most $1/\alpha$ values in $\bigcup_c Y_c$.
Since the number of components in the tree $\hat T$ is at most $n$, the fourth property of the claim follows.
\end{proof}

\begin{definition}[subtours at a vertex]
For any vertex $v\in V$, we say that a path is \emph{a subtour at $v$} if that path starts and ends at $v$ and only visits vertices in the subtree of $\hat T$ rooted at $v$.
\end{definition}

We build on \cref{thm:global} to obtain the following Structure Theorem.

\begin{theorem}[Structure Theorem]
\label{thm:structure}
Let $\beta=\frac{1}{4}\cdot \eps^{(4/\eps)+1}$.
Consider the UCVRP on the tree $\hat T$.
There exist dummy terminals of appropriate demands and a solution $\hat S$ visiting all of the real and the dummy terminals, such that all of the following properties hold:
\begin{enumerate}
    \item For each cell, a single tour visits all small terminals in that cell;
    \item For each component $c$ and each tour $t$ visiting terminals in $c$, the total demand of the terminals in $c$ visited by $t$ belongs to $Y_c$;
    \item For the root of each component and for each critical vertex (\cref{def:critical}), the demand of each subtour at that vertex belongs to $Y$;
    \item For each critical vertex, there exist $\frac{1}{\beta}$ values in $Y$ such that demand of each  subtour at a child of that vertex is among those values;
    \item The cost of $\hat S$ is at most $1.5+8\eps$ times the optimal cost for the UCVRP on the tree~$T$.
\end{enumerate}
\end{theorem}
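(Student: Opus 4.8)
The plan is to construct $\hat S$ by post-processing the solution $S^*$ produced by \cref{thm:global}. That solution already satisfies properties 1 and 2 and has cost at most $(1.5+7\eps)\cdot\opt$, so those two properties come for free, and property 5 will follow once the post-processing is shown to cost at most one more additive $\eps\cdot\opt$. What remains is to install properties 3 and 4.

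Property 3 I would derive from property 2 of \cref{thm:global} together with the closure properties of $Y$ (\cref{fact:Y}). The point is that a subtour at a component root $r_c$ --- and a critical vertex is a component root, by \cref{def:critical} --- is a single connected path that, as it descends, visits a collection of components $c'$, and inside each such $c'$ it picks up a demand that lies in $Y_{c'}\subseteq Y$ (property 2 of \cref{thm:global}, assuming, as we may, that each tour enters a given component's subtree in a single subtour) and is at least $\alpha$, while the whole subtour carries demand at most the capacity $1$. Iterating \cref{fact:Y} then places the demand of the whole subtour in $Y$.

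The main work is property 4, where I would invoke the adaptive rounding technique of Jayaprakash and Salavatipour. Fix a critical vertex $z$; its children in $\hat T$ are component roots, so by property 3 every subtour at a child of $z$ has demand in $Y$. Sort these subtours by demand, $d_1\ge d_2\ge\cdots\ge d_N$, split $\{1,\dots,N\}$ into $g:=1/\beta$ blocks of nearly equal cardinality, and round the demand of each subtour to the representative value of its block --- say its block-maximum, which is a value of $Y$ by property 3 --- by adjusting the terminals it carries (inserting a dummy terminal, or detaching a tail consisting of a subtour rooted at a deeper component root and re-routing that tail). This produces at most $1/\beta$ distinct demand values among the subtours at children of $z$, and the equal-cardinality blocking makes the total amount of demand moved at $z$ telescope to $O(N/g)$; since every subtour at a child of $z$ has demand at least $\alpha$, that is an $O\!\big(1/(g\alpha)\big)$ fraction of the demand lying below $z$. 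Processing the critical vertices in a suitable order (from the leaves toward the depot), and using the bounded-distance assumption (\cref{def:bounded-distance}) together with the estimate $\sum_c\dist(r,r_c)\le\frac{\eps}{8}\cdot\opt$ from \cref{eqn:MZ22-Lemma-17}, the total cost incurred --- after absorbing the $O_\eps(1)$ blow-up coming from the number $H_\eps$ of component levels --- is at most $\eps\cdot\opt$, which is exactly what forces $\beta=\tfrac14\eps^{(4/\eps)+1}$ to be taken this small. Along the way I would check that cells are never split and that the per-component demands are kept inside the sets $Y_c$, so that properties 1, 2 (and hence 3) survive the reshaping.

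Combining, $\cost(\hat S)\le\cost(S^*)+\eps\cdot\opt\le(1.5+7\eps)\cdot\opt+\eps\cdot\opt=(1.5+8\eps)\cdot\opt$, which is property 5. I expect the genuine difficulty to be the cost bookkeeping in the adaptive rounding step: one must show the reshaping is cheap even though (i) the critical vertices are nested through $O_\eps(1)$ levels, so a naive charge loses the factor $H_\eps$; (ii) unsplittability forbids freely rebalancing demand among subtours living in different subtrees, so the rounding has to be implemented through dummy terminals and detached-then-reconnected tails rather than by moving terminals around; and (iii) adjusting the subtours must not re-introduce forbidden demand values at shallower critical vertices or split a cell. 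Making all three of these quantitatively harmless simultaneously is precisely why $\beta$ is chosen so tiny.
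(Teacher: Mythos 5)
Your overall route is the same as the paper's: start from the solution $S^*$ of \cref{thm:global}, keep properties 1--2, observe that closure of $Y$ under addition (\cref{fact:Y}) gives property 3, and install property 4 by adaptive rounding at each critical vertex, processed bottom-up, with $1/\beta$ groups of equal cardinality. However, the proposal leaves out the step that actually makes the rounding feasible, and this is not a cosmetic omission. Rounding each subtour's demand \emph{up} to its group maximum by inserting a dummy terminal increases the demand of the tour currently containing that subtour, and since tours may already be at capacity $1$, this by itself produces an infeasible solution. The paper's construction resolves this by a rearrangement you never specify: after sorting the subtours at the children of the critical vertex in non-decreasing order of demand and rounding each group to its maximum, every subtour of group $i$ is detached and reattached in place of a subtour of group $i+1$ (whose \emph{original} demand dominates the rounded value, so no tour's demand increases), while the entire last group is discarded and each discarded subtour is completed into a brand-new tour by a direct depot connection. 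This shifting is also what preserves the invariant that each resulting subtour has demand at most that of its counterpart in $S^*$ (hence capacity feasibility and property 3 for $\hat S$, dummy terminals included), and the discarded subtours are precisely the source of the extra cost; your ``insert a dummy terminal, or detach a tail and re-route it'' gestures at this but supplies neither the mechanism nor the feasibility argument.

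The cost accounting is the second gap. You propose to charge the reshaping cost against $\sum_c\dist(r,r_c)\leq\frac{\eps}{8}\cdot \opt$ (\cref{eqn:MZ22-Lemma-17}), but that sum has one term per component, whereas the extra cost consists of one depot connection per \emph{discarded subtour}, and many subtours can be discarded at the same child (the same component root); so the number of new connections is not bounded by the number of components and this charging does not go through. The paper instead follows the analysis of Section~5.2 of \cite{MZ22}: using that every subtour at a child of a critical vertex has demand at least $\alpha$, that there are only $H_\eps$ levels of components, and the bounded-distance assumption, one obtains the multiplicative bound $\cost(\hat S)\leq\frac{2}{2-\eps}\cdot\cost(S^*)$, which combined with \cref{thm:global} yields $1.5+8\eps$. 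Your parameter intuition (tiny $\beta$ absorbing $1/\alpha$, $H_\eps$, and the distance ratio) points in the right direction, but as written both the feasibility step and the bookkeeping that you yourself flag as ``the genuine difficulty'' are missing rather than resolved.
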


In the rest of the section, we prove the Structure Theorem (\cref{thm:structure}).

\subsection{Construction of $\hat S$}
\label{sec:construction-hat-S}
We construct the solution $\hat S$ by modifying the solution $S^*$ defined in \cref{thm:global}.
The construction is an adaptation of Section~5.1 in~\cite{MZ22}.

\begin{definition}
\label{def:I}
Let $I\subseteq V$ denote the set of vertices $v\in V$ such that $v$ is either the root of a component or a critical vertex.
\end{definition}

We consider the vertices in $I$ in the bottom up order.

Let $v$ be any vertex in $I$.
Let $S^*(v)$ denote the set of subtours at $v$ in $S^*$.
We construct a set $A(v)$ of subtours at $v$ satisfying the following invariants:
\begin{itemize}
\item the subtours in $A(v)$ have a one-to-one correspondence with the subtours in $S^*(v)$; and
\item for each subtour in $A(v)$, its demand belongs to $Y$ and is at most the demand of the corresponding subtour in $S^*(v)$.
\end{itemize}
The construction of $A(v)$ is according to one of the following three cases on $v$.

\paragraph{Case 1: $v$ is the root vertex $r_c$ of a leaf component $c$ in $\hat T$.}
Let $A(v)=S^*(v)$.
For each subtour in $A(v)$, its demand belongs to $Y_c$ by \cref{thm:global}, thus belongs to $Y$ by \cref{fact:Y}.


\paragraph{Case 2: $v$ is the root vertex $r_c$ of an internal component $c$ in $\hat T$.}
For each subtour $a\in S^*(v)$, if $a$ contains a subtour at the exit vertex $e_c$ of component $c$, letting $t$ denote this subtour and $t'$ denote the subtour in $A(e_c)$ corresponding to $t$, we replace the subtour $t$ in $a$ by the subtour $t'$. Let $a'$ denote the resulting subtour.
By induction, the demand of $t'$ is at most that of $t$, thus the demand of $a'$ is at most the demand of $a$.
Again by induction, the demand of $t'$ belongs to $Y$.
Using the same argument as before, the demand of the subtour in $c$ that is contained in $a$ belongs to $Y$.
Since the set $Y$ is closed under addition (\cref{fact:Y}), the demand of $a'$ is in $Y$.

Let $A(v)$ be the resulting set of subtours at $v$.


\paragraph{Case 3: $v$ is a critical vertex in $\hat T$.}

Let $r_1,\dots, r_m$ be the children of $v$ in $\hat T$.
For each subtour $a\in S^*(v)$ and for each $i\in[1,m]$, if $a$ contains a subtour at $r_i$, letting $t$ denote this subtour and $t'$ denote the subtour in $A(r_i)$ corresponding to $t$, we replace $t$ in $a$ by $t'$.
Let $A_1(v)$ denote the resulting set of subtours at $v$.

Let $W_v$ denote the set of the subtours at the children of $v$ in $A_1(v)$, i.e., $W_v=A(r_1)\cup\dots\cup A(r_m)$.
By induction, the demand of each subtour in $W_v$ belongs to $Y$.
If $|W_v|\leq \frac{1}{\beta}$, let $A(v)=A_1(v)$.

Next, consider the non-trivial case when $|W_v|>\frac{1}{\beta}$.
We sort the subtours in $W_v$ in non-decreasing order of their demands, and partition these subtours into $\frac{1}{\beta}$ groups of equal cardinality.\footnote{We add empty subtours to the first groups if needed in order to achieve equal cardinality among all groups.}
We \emph{round} the demands of the subtours in each group to the maximum demand in that group.
For each $i\in[1,m]$ and each subtour at $r_i$, the demand of that subtour is increased to the rounded value by adding a \emph{dummy} terminal of appropriate demand at vertex $r_i$.
We rearrange the subtours in $W_v$ as follows.
\begin{itemize}
\item Each subtour $t\in W_v$ in the last group is discarded, i.e., detached from the subtour in $A_1(v)$ to which it belongs.
\item Each subtour $t\in W_v$ in other groups is associated in a one-to-one manner to a subtour $t'\in W_v$ in the next group.
Letting $a$ (resp.\ $a'$) denote the subtour in $A_1(v)$ to which $t$ (resp.\ $t'$) belongs, we detach $t$ from $a$ and reattach $t$ to $a'$.
\end{itemize}
Let $A(v)$ be the set of the resulting subtours at $v$ after the rearrangement for all $t\in W_v$.
From the construction, the demand of each subtour in $A(v)$ is at most the demand of the corresponding subtour in $S^*(v)$.
Since the demand of each subtour in $A(v)$ is the sum of values in $Y$ and the set $Y$ is closed under addition (Property~3 of \cref{fact:Y}), the demand of that subtour belongs to $Y$.

\vspace{5mm}
To construct a solution $\hat S$ visiting all terminals, it remains to cover those subtours that are discarded in the construction.
For each discarded subtour $t$, we complete $t$ into a separate tour by adding the connection to the depot in both directions.
Observe that the demand of $t$ belongs to $Y$.
Let $B$ denote the set of those newly created tours.

Let $\hat S=A(r)\cup B$, where $r$ denotes the root of the tree $\hat T$.

\subsection{Analysis of $\hat S$}
It is easy to see that $\hat S$ is a feasible solution to the UCVRP, i.e., each tour in $\hat S$ starts and ends at the depot and has total demand at most 1, and each terminal is covered by some tour in $\hat S$.
The solution $\hat S$ satisfies the first four properties of the claim.
Following the analysis in Section~5.2 in~\cite{MZ22}, we obtain that $\cost(\hat S)\leq \frac{2}{2-\eps}\cdot \cost(S^*)$, where we use the bounded distance property of $T$ and the properties of $\hat T$.
Combined with the bound on $\cost(S^*)$ in \cref{thm:global}, the last property of the claim follows.

This completes the proof of the Structure Theorem (\cref{thm:structure}).

\section{Dynamic Program}
\label{sec:DP}

In this section, we prove  \cref{thm:main-hat-T}.

 \begin{theorem}
\label{thm:main-hat-T}
There is a polynomial time dynamic program that computes a solution for the UCVRP on the tree $\hat T$ with cost at most $1.5+8\eps$ times the optimal cost on the tree $T$.
\end{theorem}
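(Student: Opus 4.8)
The plan is to reduce the problem to computing a minimum‑cost solution subject to the structural constraints of the Structure Theorem (\cref{thm:structure}). Indeed, \cref{thm:structure} produces a solution $\hat S$ satisfying its Properties~1--4 whose cost is at most $(1.5+8\eps)$ times the optimum on $T$; so it suffices to give a polynomial time algorithm returning a solution no costlier than the cheapest solution obeying Properties~1--4, and in fact it is enough that the family of solutions explored by a dynamic program contains $\hat S$. The dynamic program processes the vertices of $\hat T$ bottom‑up, and by a standard inductive argument its output then costs at most $\cost(\hat S)\le(1.5+8\eps)\cdot\opt$.

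For the state, at a vertex $v$ a partial solution on the subtree of $\hat T$ rooted at $v$ consists of a set of completed tours through the depot (whose total cost we accumulate) together with a set of \emph{open subtours at $v$} (paths from $v$ staying in that subtree); we keep only partial solutions obeying Properties~1--2 of \cref{thm:structure} inside that subtree, with open‑subtour demands in $Y$ at roots of components and at critical vertices (Property~3) and among a fixed $1/\beta$‑element subset of $Y$ at children of critical vertices (Property~4). The table entry for $v$ and an admissible \emph{profile} $\sigma$ (the vector recording, for each eligible demand value, how many open subtours at $v$ attain it) is the minimum accumulated cost of a partial solution realizing $\sigma$. Transitions: at a leaf terminal we assign it to one open subtour; within a component we treat the whole component as one step; at a critical vertex $z$ we merge the open subtours arriving from the children of $z$ into open subtours at $z$, or close some of them into depot tours, mimicking the construction of $\hat S$ in \cref{thm:structure} (in particular performing the adaptive‑rounding rearrangement so that Property~4 holds); at the depot we close every remaining subtour by appending the edge to the depot twice and read off the minimum total cost.

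The table is of polynomial size because of the combination of the structural properties. Inside a component there are only $O_\eps(1)$ cells and $O_\eps(1)$ big terminals (\cref{fact:constant-number}), so by Property~1 of \cref{thm:structure} (a single subtour covers all small terminals of each cell) there are $O_\eps(1)$ ways to wire the cells and big terminals to the open subtours passing through the component; each wiring, together with the profile arriving from below the exit vertex, determines for every open subtour its edge set within the component (the minimal subtree joining its visited vertices to the component root, plus the root‑to‑exit path if it is passing) and its demand contribution, which lies in $Y_c$ with $|Y_c|=O_\eps(1)$; the new demand of a subtour at the component root is its old demand plus a value in $Y_c$, and such sums remain in $Y$ since $Y$ is closed under addition bounded by $1$ (Property~3 of \cref{fact:Y}). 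At every vertex at which we store a profile the admissible subtour demands range over only $O_\eps(1)$ values --- this is what Property~4 of \cref{thm:structure} guarantees at the children of a critical vertex, and each critical vertex is handled by processing it together with the group of components hanging from it, the adaptive‑rounding rearrangement ensuring that the demands fed to the next level up are again among $O_\eps(1)$ values --- so a profile is a vector of $O_\eps(1)$ integers in $\{0,\dots,n\}$, i.e.\ one of $n^{O_\eps(1)}$ possibilities. Since $\hat T$ has only $O_\eps(1)$ levels of components, all of this composes a constant number of times and the dynamic program runs in time $n^{O_\eps(1)}$.

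The main obstacle is precisely keeping the state polynomial: a priori the number of tours --- hence the number of open subtours crossing a vertex --- can be $\Theta(n)$, so the naive state (a multiset of demands over the $n^{O_\eps(1)}$ values of $Y$) is super‑polynomial, and it is the adaptive‑rounding Property~4 of \cref{thm:structure} that collapses the admissible demands to $O_\eps(1)$ values and thereby tames the profile count; making this collapse survive the passage between consecutive component levels is the delicate point. The remaining work is careful bookkeeping: checking that each merge of subtours keeps them connected with the correct passing/ending type, that merged demands stay inside $Y$ (closure under bounded addition), and that the cost of shared spine edges is charged once per using subtour (two units per such subtour, one for each direction) rather than double‑counted; and finally verifying the base cases so that the induction does produce, at the depot, a solution no costlier than $\hat S$.
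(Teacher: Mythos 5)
Your proposal is correct and follows essentially the same two-phase dynamic program as the paper: enumerate local configurations inside each component (exploiting $|\mathcal{Q}_c|=O_\eps(1)$ and Properties~1--2 of \cref{thm:structure}), then propagate profiles of subtour demands bottom-up over $\hat T$, using Property~4's adaptive rounding at critical vertices and the closure of $Y$ under bounded addition (Property~3 of \cref{fact:Y}) to keep each profile a vector of $O_\eps(1)$ counts in $\{0,\dots,n\}$, giving $n^{O_\eps(1)}$ states. Minor points of imprecision --- the paper enumerates the set $X$ of $1/\beta$ rounding values rather than ``performing'' the rearrangement inside the DP, and polynomiality comes from the per-vertex state count rather than from $\hat T$ having $O_\eps(1)$ levels --- do not change the substance of the argument.
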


\cref{thm:main} follows immediately from \cref{fact:hat-T} and \cref{thm:main-hat-T}.

\vspace{3mm}

To design the dynamic program in \cref{thm:main-hat-T}, we compute the best solution on the tree~$\hat T$ that satisfies the properties in the Structure Theorem (\cref{thm:structure}).
The algorithm consists of two phases: the first phase computes local solutions inside components (\cref{sec:DP-base}) and the second phase
computes solutions in subtrees in the bottom up order (\cref{sec:DP-subtree}).
Properties~1~and~2 of the Structure Theorem are used in \cref{sec:DP-base}; Properties~3~and~4 of the Structure Theorem are used in \cref{sec:DP-subtree}.
The analysis on the cost of the output solution is given in \cref{sec:DP-analysis}.

\subsection{Computing Solutions Inside Components}
\label{sec:DP-base}
\begin{definition}
\label{def:local-config}
A \emph{local configuration} $(c,A)$ is defined by a component $c$ and a list $A$ consisting of $\ell(A)$ pairs $(y_1,b_1), (y_2,b_2),\dots, (y_{\ell(A)},b_{\ell(A)})$ such that
\begin{itemize}
    \item $\ell(A)\leq |\mathcal{Q}_c|=O_\eps(1)$; \item for each $i\in[1,\ell(A)]$, $y_i\in Y_c$ and $b_i\in \{\text{passing, ending}\}$.
\end{itemize}
The \emph{value} of the local configuration $(c,A)$, denoted by $f(c,A)$, is the minimum cost of a collection of $\ell(A)$ subtours $t_1,\dots,t_{\ell(A)}$ in $c$ such that:
\begin{itemize}
    \item $t_1,\dots,t_{\ell(A)}$ together cover all terminals in $c$;
    \item For each $i\in [1,\ell(A)]$, the demand of $t_i$ is at most $y_i$ and the type of $t_i$ is $b_i$;\footnote{When the demand of $t_i$ is strictly less than $y_i$, a \emph{dummy} terminal will be added to $t_i$ so that the total demand of the real and the dummy terminals on $t_i$ equals $y_i$.}
    \item For each cell in $c$,    some $t_i$ visits all small terminals in that cell.
\end{itemize}
\end{definition}

When the local configuration $(c, A)$ corresponds to the solution $\hat S$ defined in the Structure Theorem (\cref{thm:structure}), that solution satisfies the above constraints using the first two properties in that Theorem.
Thus the value of that local configuration is at most the cost of $\hat S$ in $c$.

The computation for the value of a local configuration is done by exhaistive search and detailed in \cref{alg:local-config}.

\begin{algorithm}
\caption{Computation for local configurations inside a component $c$}
\label{alg:local-config}
\begin{algorithmic}[1]
\For{each list $A=\left((y_1,b_1),(y_2,b_2),\dots,(y_{\ell(A)},b_{\ell(A)})\right)$}
\State $f(c,A)\gets \infty$
\For{each partition of $\mathcal{Q}_c$ into $\ell$ parts $\mathcal{Q}_c^{(1)},\dots,\mathcal{Q}_c^{(\ell)}$ s.t.\ $\forall i$, $\demand(\mathcal{Q}_c^{(i)})\leq y_i$}
            \For{each $i\in[1,\ell]$}
                \State $U\gets\{$terminals in $\mathcal{Q}_c^{(i)}\} $
                \If{$b_i$=passing} $U\gets U\cup\{\text{the exit vertex of }c\}$\EndIf
                \State $t_i\gets$ the subtour in $c$ spanning $U$
            \EndFor
            $f(c,A)\gets \min(f(c,A),\sum_i \cost(t_i))$
\EndFor
\EndFor
\State \Return $f(c,\cdot)$
\end{algorithmic}
\end{algorithm}

\paragraph{Running time.}
Since $|\mathcal{Q}_c|=O_\eps(1)$ (\cref{fact:Y}), the number of partitions of $\mathcal{Q}_c$ is $O_\eps(1)$, and for each partition, the subtour costs can be computed in polynomial time, hence polynomial time to compute the value of a local configuration.
Observe that the number of local configurations in $c$ is $O_\eps(1)$.
Thus the overall running time to compute the values of all local configurations in $c$ is polynomial.

\subsection{Computing Solutions in Subtrees}
\label{sec:DP-subtree}
The algorithm to compute solutions in a subtree is identical to the dynamic program in Section~6.2 in~\cite{MZ22} in the equal demand setting, except for a suitable adaptation of the values of the subtour demands.
In the arbitrary demand setting, the subtour demands are values in the set $Y$ by Property~3 in the Structure Theorem (\cref{thm:structure}).
Since the set $Y$ is closed under addition (\cref{fact:Y}), the dynamic program  in~\cite{MZ22} carries over to the arbitrary demand setting (this is the place where we use Property~4 in the Structure Theorem).
Since $Y$ consists of a polynomial number of values (\cref{fact:Y}), the polynomial running time of the algorithm in \cite{MZ22} carries over to the arbitrary demand setting.

For completeness, the algorithm and the running time analysis are given in \cref{sec:DP-subtree-complete}.

\subsection{Cost Analysis}
\label{sec:DP-analysis}
The cost of the output solution is at most the cost of the solution $\hat S$ in the Structure Theorem (\cref{thm:structure}), which is at most $1.5+8\eps$ times the optimal cost for the UCVRP on the tree $T$.

This completes the proof of \cref{thm:main-hat-T}.

\appendix

\section{Hardness on the Approximation}

\subsection{UCVRP on Paths}
\label{sec:hardness-path}
We reduce the bin packing problem to the UCVRP on paths.
Consider a bin packing instance with $n$ items of sizes $a_1,\dots,a_n$, where $0< a_i\leq 1$ for each $i\in[1,n]$. We construct an instance of the UCVRP on paths as follows.
The path consists of $n+1$ vertices $r,v_1,\dots,v_n$, such that $r$ is the depot and $v_1,\dots,v_n$ are the terminals.
The weight of the edge $r v_1$ equals 1; the weight of the edge $v_i v_{i+1}$ equals 0 for each $i\in [1,n-1]$.
The demand of $v_i$ equals $a_i$ for each $i\in [1,n]$.
Observe that a solution to this instance of the UCVRP is equivalent to a solution to the bin packing instance.
Since it is NP-hard to approximate the bin packing problem to better than a 1.5 factor~\cite{williamson2011design}, it is NP-hard to approximate the UCVRP on paths to better than a 1.5 factor.

\subsection{UCVRP Inside a Component}
\label{sec:hardness-tree-component}
We reduce the bin packing problem to the UCVRP inside a component.
Consider a bin packing instance with $n$ items of sizes $a_1,\dots,a_n$, where $0< a_i\leq 1$ for each $i\in[1,n]$.
We construct a component in the UCVRP as follows.
The component consists of $n+2$ vertices $r,v_0,v_1,\dots,v_n$ which form a star at $v_0$, such that $r$ is the depot and $v_1,\dots,v_n$ are the terminals.
The weight of the edge $r v_0$ equals 1; the weight of the remaining edges $v_0 v_{i}$ equals 0 for all $i\in [1,n]$.
The demand of $v_i$ equals $a_i$ for each $i\in [1,n]$.
Observe that a solution to this instance of the UCVRP is equivalent to a solution to the bin packing instance.
Since it is NP-hard to approximate the bin packing problem to better than a 1.5 factor~\cite{williamson2011design}, it is NP-hard to approximate the UCVRP in a component to better than a 1.5 factor.

\section{Proof of \cref{lem:cluster-decomposition}}
\label{sec:cluster-decomposition}

The proof of \cref{lem:cluster-decomposition} is a minor adaptation from the proof of \cref{lem:decomposition} in \cite{MZ22}.


For any subgraph $H$ of $b$, the \emph{demand} of $H$, denoted by $\demand(H)$, is defined as the total demand of all terminals in $H$ that are \emph{strictly} inside $b$.

First, we construct \emph{leaf clusters} as follows.
For any vertex $v\in b$ such that the subtree of $b$ rooted at $v$ has demand at least $\Gamma'$ and each of the subtrees rooted at the children of $v$ has demand strictly less than $\Gamma'$, we create a leaf cluster that equals the subtree of $b$ rooted at $v$.
If the block $b$ has an exit vertex $e_b$, we need to ensure that $e_b$ is the exit vertex of some cluster. To that end, we distinguish two cases: if $e_b$  belongs to some existing leaf cluster, then we set $e_b$ to be the exit vertex of that cluster; otherwise, we create a (trivial) leaf cluster consisting of the singleton $\{e_b\}$ and set $e_b$ to be the root vertex and the exit vertex of that cluster.
See \cref{alg:cluster-leaf} for a formal description of the construction of leaf clusters.
Observe that the leaf clusters are disjoint subtrees of $b$.

\begin{definition}[key vertices]
\label{def:key-vertices}
The \emph{backbone} of $b$ is the subgraph of $b$ spanning the root of $b$ and the roots of the leaf clusters.
We say that a vertex $v\in b$ is a \emph{key vertex} if it belongs to one of the three cases: (1) $v$ is the root of the block $b$; (2) $v$ is the root of a leaf cluster; (3) $v$ has two children in  the backbone.
\end{definition}
We say that two key vertices  $v_1$ and $v_2$ are \emph{consecutive} if the $v_1$-to-$v_2$ path in the tree does not contain any other key vertex.
For each pair of consecutive key vertices $(v_1,v_2)$, we consider the subgraph \emph{between} $v_1$ and $v_2$, and decompose that subgraph into \emph{internal clusters}, each of demand at most $2\Gamma$, such that all of these cluster are \emph{big} (i.e., of demand at least $\Gamma$) except possibly for the upmost cluster.
See \cref{alg:cluster-internal} for the detailed construction of internal clusters.

\begin{algorithm}[t]
\caption{Construction of leaf clusters in block $b$.}
\label{alg:cluster-leaf}
\begin{algorithmic}[1]
\For{each vertex $v\in b$}
    \State $T(v)\gets$ subtree of $b$ rooted at $v$
\EndFor
\For{each non-leaf vertex $v\in b$}
    \State Let $v_1$ and $v_2$ denote the two children of $v$ in $b$
    \If{$\demand(T(v))\geq \Gamma'$ and $\demand(T(v_1))<\Gamma'$ and $\demand(T(v_2))<\Gamma'$}
        \State Create a leaf cluster that equals $T(v)$ and has root vertex $v$
    \EndIf
\EndFor
\If{$b$ has an exit vertex $e_b$}
    \If{$e_b$ belongs to some existing leaf cluster $x$}
        \State $e_x\gets e_b$
    \Else
        \State Create a leaf cluster $x^*$ that equals a singleton $\{e_b\}$
        \State $e_{x^*}\gets e_b$
        \State $r_{x^*}\gets e_b$
    \EndIf
\EndIf
\end{algorithmic}
\end{algorithm}

\begin{algorithm}[t]
\caption{Construction of internal clusters in block $b$.}
\label{alg:cluster-internal}
\begin{algorithmic}[1]
\For{each pair of \emph{consecutive} key vertices $v_1, v_2$ such that $v_1$ is an ancester of $v_2$}
    \State $v_1'\gets$ child of $v_1$ on the $v_1$-to-$v_2$ path
    \For{each vertex $v$ on the $v_1'$-to-$v_2$ path}
        $H(v)\gets T(v)\setminus T(v_2)$
    \EndFor
    \State $x\gets v_2$
    \While{$H(v_1')$ has demand at least $\Gamma'$}
        \State $v\gets$ the deepest vertex on the $v_1'$-to-$x$ path such that $H(v)$ has demand at least $\Gamma'$
        \State Create an internal cluster that equals $H(v)$ with root vertex $v$ and exit vertex $x$
        \State $x\gets v$
        \For{each vertex $v'$ on the $v_1'$-to-$x$ path}
             $H(v')\gets T(v')\setminus T(x)$
        \EndFor
    \EndWhile
    \State Create an internal cluster that equals $\{(v_1,v_1')\}\cup H(v_1)$ with root vertex $v_1$ and exit vertex~$x$
\EndFor
\end{algorithmic}
\end{algorithm}

The first three properties in the claim follow from the construction.
It remains to show the last property in the claim. The following lemma is crucial in the analysis.
\begin{lemma}
\label{lem:three-pre-images}
We say that a cluster in $b$ is \emph{good} if it is a leaf cluster or a big cluster, and is \emph{bad} otherwise.
There is a map from all clusters in $b$ to good clusters in $b$ such that each good cluster in $b$ has at most three pre-images.
\end{lemma}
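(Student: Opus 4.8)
The plan is to build the map explicitly by walking through the construction of clusters in \cref{alg:cluster-leaf} and \cref{alg:cluster-internal}, and to identify, for each bad cluster, a nearby good cluster to which it can be charged. Recall that a bad cluster is, by definition, an internal cluster of demand strictly less than $\Gamma'$; by the construction in \cref{alg:cluster-internal}, such a cluster can only arise as the \emph{upmost} cluster produced for a pair of consecutive key vertices $(v_1,v_2)$, namely the cluster equal to $\{(v_1,v_1')\}\cup H(v_1)$ whose exit vertex is $x$. So the first step is to observe that there is at most one bad cluster per pair of consecutive key vertices, and each good cluster maps to itself; it remains only to route each bad cluster to a good cluster while keeping every good cluster's total load at most three (counting its own self-image).

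The key step is to charge the bad cluster $\bar x$ associated with the consecutive key pair $(v_1,v_2)$ to a good cluster determined by the key vertex $v_2$ at its lower endpoint. I would split into cases according to why $v_2$ is a key vertex (\cref{def:key-vertices}): if $v_2$ is the root of a leaf cluster, charge $\bar x$ to that leaf cluster; if $v_2$ has two children in the backbone, then $v_2$ is the common exit vertex of two cluster-chains going downward, and I charge $\bar x$ to one big cluster or leaf cluster found by descending from $v_2$ along the backbone until I first hit the root of a leaf cluster or the exit vertex of a big internal cluster (such a good cluster exists because every maximal backbone path from $v_2$ downward terminates at the root of a leaf cluster, whose subtree has demand at least $\Gamma'$, so along the way either that leaf cluster itself or a big internal cluster is encountered). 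The case $v_2=$ root of block $b$ cannot produce a bad cluster above it in the relevant sense, so it need not be handled, or is handled by the same descent. Then I would bound the number of pre-images of any good cluster $g$: $g$ receives its own self-image, plus at most the bad clusters charged to it. A leaf cluster $g$ with root $v_2$ is charged by bad clusters whose lower key vertex is $v_2$ — but the $v_1$-to-$v_2$ path determines $v_1$ as the \emph{next} key vertex above $v_2$, so $v_2$ is the lower endpoint of at most one consecutive key pair as an ``ancestor'' relation, giving one bad cluster; however $v_2$ could also be reached as the descent target from a branching key vertex just above, contributing at most one more. That gives at most $1+1+1=3$. A big internal cluster $g$ is charged only through the descent mechanism from the branching key vertex immediately above its chain, again at most twice, plus its self-image, for a total of at most three.

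The main obstacle I anticipate is the bookkeeping in the branching case: when $v_2$ has two children in the backbone, two separate downward cluster-chains emanate from $v_2$, and I must make sure that a single good cluster deep in one of those chains is not simultaneously the descent-target of several different branching key vertices higher up. The clean way to handle this is to define the descent target canonically (always take the ``first good cluster strictly below $v_2$ on the unique path toward a fixed one of $v_2$'s backbone children, say the left one''), and then to argue that each good cluster is the canonical target of at most one key vertex, because the path from a good cluster up to the nearest branching or leaf-root key vertex above it is unique. Once the descent rule is pinned down to be deterministic and the ``nearest key vertex above'' relation is recognized as defining at most one pre-image per direction, the three-pre-image bound falls out, and then the last property of \cref{lem:cluster-decomposition} — that the number of clusters in $b$ is at most $3(\demand(b)/\Gamma'+1)$ — follows since the number of good clusters is at most $\demand(b)/\Gamma'+1$ (each leaf cluster and each big cluster carries demand at least $\Gamma'$, except for at most one slack term), and the map is at most $3$-to-$1$.
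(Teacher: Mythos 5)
Your identification of the bad clusters is fine (each bad cluster is the upmost internal cluster of some pair of consecutive key vertices, so its root is a key vertex), but the charging scheme you build on top of it does not give the three-pre-image bound, and the repair you sketch does not close the hole. The problem is in Case (b): your descent from a branching key vertex $v_2$ stops only at the root of a leaf cluster or at a big internal cluster, so it passes \emph{through} any intermediate branching key vertices. Consequently the descents started at many different branching vertices can coincide. Concretely, take a block whose backbone is a left caterpillar: key vertices $u_1,\dots,u_k$ where each $u_i$ has a right child leading to a leaf cluster $L_i$ of demand about $\Gamma'$ and a left child leading to $u_{i+1}$, with only negligible demand strictly between consecutive key vertices, and a final leaf cluster $L_0$ below $u_k$ on the left. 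Then every pair $(u_i,u_{i+1})$ and $(u_k,w_0)$ produces exactly one internal cluster, which is bad, and there are no big internal clusters at all. Under your rule, the bad cluster of each pair $(u_i,u_{i+1})$ is charged via the descent from $u_{i+1}$, and all of these descents run down the left chain past $u_{i+2},\dots,u_k$ and terminate at $L_0$; together with $L_0$'s self-image and the bad cluster of the pair ending at $L_0$'s root, the leaf cluster $L_0$ receives $k+1$ pre-images. Your proposed injectivity argument --- ``the path from a good cluster up to the nearest branching or leaf-root key vertex above it is unique'' --- only rules out two descents that \emph{start} at the nearest key vertex above the target; it says nothing about descents launched from higher branching vertices, which in the example all land on the same good cluster. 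So the map you define is not $3$-to-$1$, and \cref{lem:three-pre-images} does not follow from this construction.

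The paper avoids exactly this pile-up by charging each bad cluster not according to the \emph{lower} endpoint of its key pair but according to its own root $r_x$ and the child of $r_x$ it contains: a bad cluster containing the left (resp.\ right) child of $r_x$ is mapped to the \emph{rightmost} (resp.\ symmetric extreme) leaf cluster among its descendants. The left/right split plus the extreme choice makes each of the two maps injective (two nested left-bad clusters cannot share their rightmost descendant leaf cluster), and together with the identity map on good clusters this yields at most three pre-images per good cluster. If you want to salvage your lower-endpoint scheme, you would have to make the descent stop at every key vertex and recharge recursively, at which point the counting argument essentially has to be redone; the paper's direct injective-map construction is the cleaner route.
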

\begin{proof}
First, we define a map $\phi_{\rm good}$ that maps each good cluster to itself.
Next, we consider the bad clusters.
Observe that the root vertex $r_x$ of any bad cluster $x$ is a key vertex.
We say that a bad cluster $x$ is a \emph{left bad} cluster (resp.\ \emph{right bad} cluster) if $x$ contains the left child (resp.\ right child) of $r_x$.
We define a map $\phi_{\rm left}$ from left bad clusters to leaf clusters, such that the image of a left bad cluster is the leaf cluster that is \emph{rightmost} among its descendants.
We show that $\phi_{\rm left}$ is injective.
Let $c_1$ and $c_2$ be any left bad clusters.
Observe that $r_{c_1}$ and $r_{c_2}$ are distinct key vertices.
If $r_{c_1}$ is ancestor of $r_{c_2}$ (the case when $r_{c_2}$ is ancestor of $r_{c_1}$ is similar), then $\phi_{\rm left}(c_2)$ is in the left subtree of $r_{c_2}$ whereas $\phi_{\rm left}(c_1)$ is outside the left subtree of $r_{c_2}$, so $\phi_{\rm left}(c_1)\neq \phi_{\rm left}(c_2)$.
In the remaining case, the subtrees rooted at $r_{c_1}$ and at $r_{c_2}$ are disjoint, so $\phi_{\rm left}(c_1)\neq \phi_{\rm left}(c_2)$.
Thus $\phi_{\rm left}$ is injective.
Note that every leaf cluster is good, so $\phi_{\rm left}$ is an injective map from left bad clusters to good clusters.
Similarly, we obtain an injective map $\phi_{\rm right}$ from right bad clusters to good clusters.
The claim follows by combining the three injective maps $\phi_{\rm good}$, $\phi_{\rm left}$, and $\phi_{\rm right}$.
\end{proof}

Observe that all good clusters are big except possibly one cluster, i.e., the cluster $x^*$ containing the exit vertex $e_b$ of the block.
Thus the number of good clusters is at most $\demand(b)/\Gamma'+1$.
By \cref{lem:three-pre-images}, the number of clusters in $b$ is at most three times the number of good clusters in $b$, hence the last property of the claim.
This completes the proof of \cref{lem:cluster-decomposition}.

\section{Computing Solutions in Subtrees}
\label{sec:DP-subtree-complete}

This section is a slight adaptation from \cite{MZ22}.
Everything is identical to \cite{MZ22} except that we use the properties on the set $Y$ for the subtour demands (\cref{fact:Y}).

\begin{definition}
\label{def:subtree-config}
A \emph{subtree configuration} $(v,A)$ is defined by a vertex $v$ and a list $A$ consisting of $\ell(A)$ pairs $(\tilde y_1,n_1)$, $(\tilde y_2,n_2),\ldots,(\tilde y_\ell,n_{\ell(A)})$ such that
\begin{itemize}
\item $v$ belongs to the set $I$ (\cref{def:I});
\item
$\ell(A)=O_\eps(1)$; in particular, when $v$ is a critical vertex, $\ell(A)\leq \left(\frac{1}{\beta}\right)^{\frac{1}{\alpha}}$;
\item
for each $i\in[1, \ell(A)]$,  $\tilde y_i$ belongs to the set $Y$ (\cref{def:Y}) and $n_i$ is an integer in $[0,n]$.
\end{itemize}
\end{definition}
The \emph{value} of the subtree configuration $(v,A)$, denoted by $g(v,A)$, is the minimum cost of a collection of $\ell(A)$ subtours in the subtree of $\hat T$ rooted at $v$, each subtour starting and ending at $v$, that together visit all of the real terminals of the subtree rooted at $v$, such that, for each $i\in[1,\ell(A)]$, there are $n_i$ subtours each with total demand of the real and the dummy terminals that equals $\tilde y_i$.

When $v$ is the root $r$, any subtree configuration $(r,A)$ corresponds to a solution to the UCVRP on the tree $\hat T$.
The output of the algorithm is the minimum value of $g(r,A)$ among all lists $A$.

To compute the values of subtree configurations, we consider the vertices $v\in I$ in the bottom up order.
For each vertex $v\in I$ that is the root of a component, we compute the values $g(v,\cdot)$ using \cref{alg:subtree-configuration-root} in \cref{sec:subtree-configuration-root}; and for each vertex $v\in I$ that is a critical vertex, we compute the values $g(v,\cdot)$ using \cref{alg:subtree-configuration-critical} in \cref{sec:subtree-configuration-critical}.
See \cref{fig:DP}.

\begin{figure}[t]
    \centering
    \includegraphics[scale=0.3]{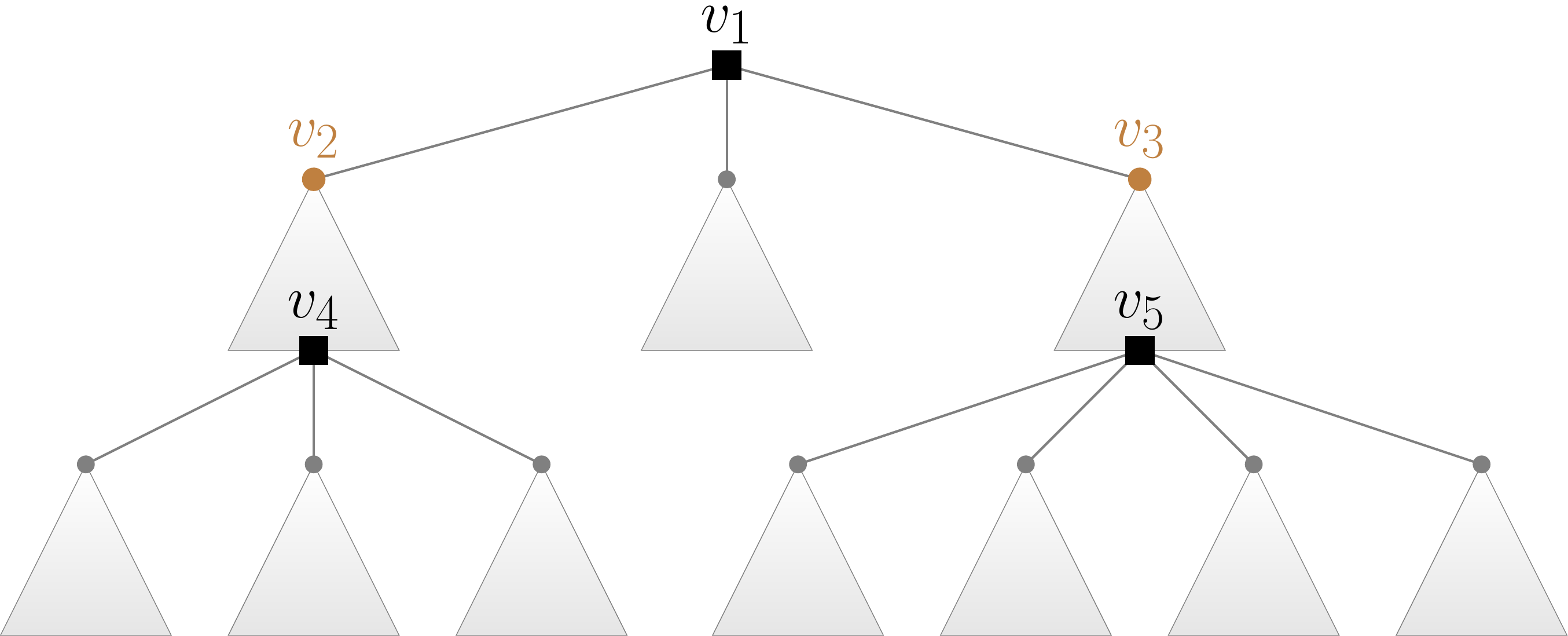}
    \caption{\small At each vertex of the tree, the dynamic program memorizes the demands of the subtours in the subtree and their total cost.
    Here is an example of the flow of execution in the dynamic program.
    First, independent computation in each component (\cref{alg:local-config}). Next, computation in the subtrees rooted at vertices $v_4$ and $v_5$ (\cref{alg:subtree-configuration-critical}).
    Then computation in the subtrees rooted at vertices $v_2$ and $v_3$ (\cref{alg:subtree-configuration-root}).
    Finally, computation in the subtree rooted at vertex $v_1$ (\cref{alg:subtree-configuration-critical}).
    The output is the best solution in the subtree
    rooted at $v_1$.
    }
    \label{fig:DP}
\end{figure}

\subsection{Subtree Configurations at the Root of a Component}
\label{sec:subtree-configuration-root}
In this subsection, we compute the values of the subtree configurations at the root $r_c$ of a component~$c$.

From \cref{sec:local}, we have already computed the values of the local configurations in the component $c$.
If $c$ is a leaf component, the local configurations in $c$ induce the subtree configurations at $r_c$, where $\tilde y_i$ is the demand of the $i$-th subtour in the local configuration and $n_i=1$ for each $i$.
In the following, we consider the case when $c$ is an internal component.
From \cref{def:critical}, we observe that the exit vertex $e_c$ of the component $c$ is a critical vertex.
Thus the values of subtree configurations at $e_c$ have already been computed using \cref{alg:subtree-configuration-critical} in \cref{sec:subtree-configuration-critical} according to the bottom up order of the computation.
To compute the value of a subtree configuration at $r_c$, we combine a subtree configuration at $e_c$ and a local configuration in $c$, in the following way.

Consider a subtree configuration $(e_c,A_e)$ and a local configuration $(c,A_c)$, where
\[A_e=((\tilde y_1,n_1),(\tilde y_2,n_2),\ldots,(\tilde y_{\ell_e},n_{\ell_e})),\]
\[A_c=((y_1,b_1),(y_2,b_2),\dots,(y_{\ell_c},b_{\ell_c})).\]
To each $i\in[1,\ell_c]$ such that $b_i$ is ``passing'', we associate $y_i$ with $\tilde y_j$ for some $j\in[1,\ell_e]$ with the constraints that $y_i+\tilde y_j\leq 1$ and for each $j\in[1,\ell_e]$ at most $n_j$ elements are associated to $\tilde y_j$ (because in the subtree rooted at $e_c$ we only have $n_j$ subtours of demand $\tilde y_j$ at our disposal).
Observe that, for each association $(y_i,\tilde y_j)$, we have $y_i\in Y_c\subseteq Y$ (\cref{fact:Y}), $\tilde y_j\in Y$, and $y_i+\tilde y_j\leq 1$.
Thus $y_i+\tilde y_j\in Y$ by \cref{fact:Y}.
Consequently, we obtain the list $A$ of a subtree configuration $(r_c,A)$ as follows:
\begin{itemize}
    \item For each association $(y_i,\tilde y_j)$, we put in $A$ the pair $(y_i+\tilde y_j,1)$.
    \item For each pair $(\tilde y_j,n_j)\in A_e$, we put in $A$ the pair $(\tilde y_j,n_j-(\text{number of $y_i$'s associated to $\tilde y_j$}))$.
    \item For each pair $(y_i,``\text{ending}")\in A_c$, we put in $A$ the pair $(y_i,1)$.
\end{itemize}
From the construction, $\ell(A)\leq \ell(A_e)+\ell(A_c)$.
Since $e_c$ is a critical vertex, $\ell(A_e)\leq \left(\frac{1}{\beta}\right)^{\frac{1}{\alpha}}$ by \cref{def:subtree-config}.
From \cref{def:local-config}, $\ell(A_c)=O_\eps(1)$.
Thus $\ell(A)\leq \left(\frac{1}{\beta}\right)^{\frac{1}{\alpha}}+O_\eps(1)=O_\eps(1)$.

Next, we compute the cost of the combination of the subtree configuration $(e_c,A_e)$ and the local configuration $(c,A_c)$; let $x$ denote this cost.
For any subtour $t$ at $e_c$ that is not associated to any non-spine passing subtour in the component $c$, we pay an extra cost to include the spine subtour of the component $c$, which is combined with the subtour $t$.
The number of times that we include the spine subtour of $c$ is the number of subtours at $e_c$ minus the number of passing subtours in $A_c$, which is $\sum_{j\leq \ell_e} n_j- \sum_{i\leq \ell_c} \mathbbm{1} \left[b_i \text{ is ``passing''}\right]$.
Thus we have
\begin{equation}  \label{eq:costcomputation}
x=f(c,A_c)+g(e_c,A_e)+\cost(\spine_c)\cdot\left(\bigg(\sum_{j\leq \ell_e} n_j\bigg) - \bigg(\sum_{i\leq \ell_c} \mathbbm{1} \left[b_i \text{ is ``passing''}\right]\bigg)\right).\end{equation}

The algorithm is described in \cref{alg:subtree-configuration-root}.

\begin{algorithm}[t]
\caption{Computation for subtree configurations at the root of a component $c$.}
\label{alg:subtree-configuration-root}
\begin{algorithmic}[1]
\For{each list $A$}
    \State $g(r_c,A)=+\infty$.
\EndFor
\For{each subtree configuration $(e_c,A_e)$ and each local configuration $(c,A_c)$}
        \For{each way to combine $(e_c,A_e)$ and $(c,A_c)$}
            \State $A\gets$ the resulting list
            \State $x\gets$ the cost computed in Equation~(\ref{eq:costcomputation})
            \State $g(r_c,A)\gets \min (g(r_c,A),x)$.
        \EndFor
\EndFor
\State \Return $g(r_c,\cdot)$
\end{algorithmic}
\end{algorithm}

\paragraph*{Running time}
Since $|Y|=n^{O_\eps(1)}$ (\cref{fact:Y}), the number of subtree configurations $(e_c,A_e)$ and the number of local configurations $(c,A_c)$ are both $n^{O_\eps(1)}$.
For fixed $(e_c,A_e)$ and $(c,A_c)$, the number of ways to combine them is $O_\eps(1)$.
Thus the running time of \cref{alg:subtree-configuration-root} is $n^{O_\eps(1)}$.

\subsection{Subtree Configurations at a Critical Vertex}

\label{sec:subtree-configuration-critical}
In this subsection, we compute the values of the subtree configurations at a critical vertex.
Let $z$ denote any critical vertex.

Consider the solution $\hat S$ in the Structure Theorem (\cref{thm:structure}).
By Property~4 in the Structure Theorem, there exists a set $X\subseteq Y$ of $\frac{1}{\beta}$ values such that the demand of each subtour at a child of $z$ is among the values in $X$.
Since $Y\subseteq [\alpha,1]$, we have $X\subseteq [\alpha,1]$.
Thus the demand of a subtour at $z$ is the sum of at most $\frac{1}{\alpha}$ values in $X$.
Therefore, the number of distinct demands of the subtours at $z$ is at most $(\frac{1}{\beta})^{\frac{1}{\alpha}}=O_\eps(1)$.
In addition, those demands belong to $Y$, since each demand is the sum of a multiset of values in $X$ and using \cref{fact:Y}.

To compute a solution satisfying Property~4 in the Structure Theorem, we enumerate all sets $X\subseteq Y$ of $\frac{1}{\beta}$ values, compute a solution with respect to each set $X$, and return the best solution found.
Unless explicitly mentioned, we assume in the following that the set $X$ is fixed.

\begin{definition}[sum list]
A \emph{sum list} $A$ consists of $\ell(A)$ pairs
$(y_1,n_1)$, $(y_2,n_2)$, \dots, $(y_{\ell(A)},n_{\ell(A)})$ such that
\begin{enumerate}
    \item
$\ell(A)\leq (\frac{1}{\beta})^{\frac{1}{\alpha}}$;
\item
For each $i\in[1,\ell(A)]$, $y_i\in Y$ is the sum of a multiset of values in $X$ and $n_i$ is an integers in $[0,n]$.
\end{enumerate}
\end{definition}
From the Structure Theorem, we only need to consider subtree configurations $(z,A)$ such that the list $A$ is a sum list.

Let $r_1,r_2,\ldots ,r_m$ be the children of $z$.
For each $i\in[1,m]$, let \[A_i=( (y_1^{(i)},n_1^{(i)}),(y_2^{(i)},n_2^{(i)}),\ldots , (y_{\ell_i}^{(i)},n_{\ell_i}^{(i)}))\] denote the list in a subtree configuration $(r_i,A_i)$.
We \emph{round} the list $A_i$ to a list \[\overline{A_i}=( (\overline{y_1^{(i)}},n_1^{(i)}),\overline{(y_2^{(i)}},n_2^{(i)}),\ldots , (\overline{y_{\ell_i}^{(i)}},n_{\ell_i}^{(i)})),\] where $\overline{x}$  denotes the smallest value in $X$ that is greater than or equal to $x$, for any value $x$.
The rounding is represented by adding a dummy terminal of demand $\overline{x}-x$ at vertex $r_i$ to each subtour of initial demand $x$.

Let $\mathcal{S}\subseteq Y$ denote a multiset such that for each $i\in[1,m]$ and for each $j\in[1,\ell_i]$, the multiset $\mathcal{S}$ contains $n_j^{(i)}$ copies of $\overline{y_j^{(i)}}$.

\begin{definition}[compatibility]
A multiset $\mathcal{S}\subseteq Y$ and a sum list \[A=( (y_1,n_1),(y_2,n_2),\ldots , (y_{\ell(A)},n_{\ell(A)}))\] are \emph{compatible} if there is a partition of $\mathcal{S}$ into $\sum_{i=1}^{\ell(A)} n_i$ parts and a correspondence between the parts of the partition and the values $y_i$'s, such that for each $y_i$, there are $n_i$ associated parts, and for each of those parts, the elements in that part sum up to $y_i$.
\end{definition}

For a sum list $A=( (y_1,n_1),(y_2,n_2),\ldots , (y_{\ell(A)},n_{\ell(A)}))$, the value $g(z,A)$ of the subtree configuration $(z,A)$ equals the minimum, over all sets $X$ and all subtree configurations $\{(r_i,A_i)\}_{1\leq i\leq m}$ such that $\mathcal{S}$ and $A$ are compatible, of
\begin{equation}
\label{eqn:g-z-A}
\sum_{i=1}^m g(r_i,A_i)+ 2\cdot n(A_i)\cdot w(r_i,z),
\end{equation}
where $n(A_i)$ denotes $\sum_{j=1}^{\ell_i} n_j^{(i)}$.
We note that $n_1 y_1+n_2 y_2+\cdots +n_{\ell(A)} y_{\ell(A)}$ is equal to the total demand of the (real and dummy) terminals in the subtree rooted at $z$.

\begin{algorithm}[t]
\caption{Computation for subtree configurations at a critical vertex $z$.}
\label{alg:subtree-configuration-critical}
\begin{algorithmic}[1]
    \For{each list $A$}
        \State $g(z,A)\gets+\infty$
    \EndFor
    \For{each set $X\subseteq Y$ of $\frac{1}{\beta}$ values}
        \For{each $i\in [0,m]$ and each list $A$}
            \State $\DP_i(A)\gets +\infty$
        \EndFor
        \State $\DP_0(\emptyset)\gets 0$
        \For{each $i\in[1,m]$}
            \For{each subtree configuration $(r_i,A_i)$}
                \State $\overline{A_i}\gets round(A_i)$
                \For{each sum list $A_{\leq i-1}$}
                    \For{each way to combine $A_{\leq i-1}$ and $\overline{A_i}$}
                        \State $A_{\leq i}\gets$ the resulting sum list
                        \State $x\gets \DP_{i-1}(A_{\leq i-1})+g(r_i,A_i)+2\cdot n(A_i)\cdot w(r_i,z)$
                        \State $\DP_i(A_{\leq i})\gets \min(\DP_i(A_{\leq i}),x)$
                    \EndFor
                \EndFor
            \EndFor
    \EndFor
    \For{each list $A$}
        \State $g(z,A)\gets \min(g(z,A),\DP_m(A))$
    \EndFor
\EndFor
\State \Return $g(z,\cdot)$
\end{algorithmic}
\end{algorithm}

Fix any set $X\subseteq Y$ of $\frac{1}{\beta}$ values.
We show how to compute the minimum cost of~\cref{eqn:g-z-A} over all subtree configurations $\{(r_i,A_i)\}_{1\leq i\leq m}$ such that $\mathcal{S}$ and $A$ are compatible.
For each $i\in[1,m]$ and for each subtree configuration $(r_i,A_i)$, the value $g(r_i,A_i)$ has already been computed using \cref{alg:subtree-configuration-root} in \cref{sec:subtree-configuration-root}, according to the bottom up order of the computation.
We use a dynamic program that scans $r_1,\dots, r_m$ one by one: those are all siblings, so here the reasoning is not bottom-up but left-right.
Fix any $i\in[1,m]$.
Let $\mathcal{S}_i\subseteq Y$ denote a multiset such that for each $i'\in[1,i]$ and for each $j\in[1,\ell_{i'}]$, the multiset $\mathcal{S}_i$ contains $n_j^{(i')}$ copies of $\overline{y_j^{(i')}}$.
We define a dynamic program table $\DP_i$.
The value $\DP_i(A_{\leq i})$ at a sum list $A_{\leq i}$ equals the minimum, over all subtree configurations $\{(r_{i'},A_{i'})\}_{1\leq i'\leq i}$ such that $\mathcal{S}_i$ and $A_{\leq i}$ are compatible, of   \[\sum_{i'=1}^i g(r_{i'},A_{i'})+ 2\cdot n(A_{i'})\cdot w(r_{i'},z).\]
When $i=m$, the values $\DP_m(\cdot)$ are those that we are looking for.
It suffices to fill in the tables $\DP_1,\dots,\DP_m$.

To compute the value $\DP_i$ at a sum list $A_{\leq i}$, we use the value $\DP_{i-1}$ at a sum list $A_{\leq i-1}$ and the value $g(r_i,A_i)$ of a subtree configuration $(r_i,A_i)$.
Let $A_{\leq i-1}=( (\hat y_1,\hat n_1),(\hat y_2,\hat n_2),\ldots , (\hat y_{\ell},\hat n_{\ell}))$.
We combine $A_{\leq i-1}$ and $\overline{A_i}$ as follows.
For each $p\in [1,\ell]$ and each $j\in[1,\ell_i]$ such that $\hat y_p+\overline{y_{j}^{(i)}}\leq 1$, we observe that $\hat y_p+\overline{y_{j}^{(i)}}$ is the sum of a multiset of values in $X$, thus belongs to $Y$ by \cref{fact:Y}.
We create $n_{p,j}$ copies of the association of $(\hat y_p,\overline{y_{j}^{(i)}})$, where $n_{p,j}\in [0,n]$ is an integer variable that we enumerate in the algorithm.
We require that for each $p\in[1,\ell]$, $\sum_{j=1}^{\ell_i} n_{p,j}\leq \hat n_p$; and for each $j\in[1,\ell_i]$, $\sum_{p=1}^{\ell} n_{p,j} \leq n_j^{(i)}$.
The resulting sum list $A_{\leq i}$ is obtained as follows.
\begin{itemize}
\item For each association $(\hat y_p,\overline{y_{j}^{(i)}})$, we put in $A_{\leq i}$ the pair $(\hat y_p+\overline{y_{j}^{(i)}},n_{p,j})$.
\item For each pair $(\hat y_p,\hat n_p)\in A_{\leq i-1}$, we put in $A_{\leq i}$ the pair $(\hat y_p,\hat n_p-\sum_{j=1}^{\ell_i} n_{p,j})$.
\item For each pair $(\overline{y_j^{(i)}},n_j^{(i)})\in \overline{A_i}$, we put in $A_{\leq i}$ the pair $(\overline{y_j^{(i)}},n_j^{(i)}-\sum_{p=1}^{\ell} n_{p,j})$.
\end{itemize}

The algorithm is described in \cref{alg:subtree-configuration-critical}.

\paragraph{Running time}
Since $|Y|=n^{O_\eps(1)}$, the numbers of subtree configurations and of sum lists are $n^{O_\eps(1)}$.
Observe that the number of ways to combine them and the number of sets $X$ are $n^{O_\eps(1)}$.
Thus the overall running time of \cref{alg:subtree-configuration-critical} is $n^{O_\eps(1)}$.

\bibliographystyle{alpha}
\bibliography{references}

\end{document}